\newtheorem{theorem}{Theorem}[section]
\newtheorem{lemma}[theorem]{Lemma}
\theoremstyle{definition}
\theoremstyle{remark}
\newtheoremstyle{examplestyle}
   {}{}{}{}{\bfseries}{.}{.5em}{{\thmname{#1 }}{\thmnumber{#2}}{\thmnote{ (#3)}}}
\theoremstyle{examplestyle}
\newcommand{\noi}{\mathcal{W}} 
\newcommand{\kr}{K} 
\newcommand{\gau}{\mathcal{G}} 
\begin{document}

\title[  ]{A Finite-Volume Method for Fluctuating Dynamical Density Functional Theory}

\author{Antonio Russo\textsuperscript{\dag}}
\address[Antonio Russo]{Department of Chemical Engineering, Imperial College London, SW7 2AZ, UK}
\curraddr{}
\thanks{\noindent\textsuperscript{\dag} A. Russo and S. P. Perez are co-first and contributed equally to this work. S. P. Perez is the corresponding author.}

\author{Sergio P. Perez\textsuperscript{\dag}}
\address[Sergio P. Perez]{Departments of Chemical Engineering and Mathematics, Imperial College London, SW7 2AZ, UK}
\curraddr{}
\email{sergio.perez15@imperial.ac.uk}
\thanks{}

\author{Miguel A. Dur\'an-Olivencia}
\address[Miguel A. Dur\'an-Olivencia]{Department of Chemical Engineering, Imperial College London, SW7 2AZ, UK}
\curraddr{}
\thanks{}

\author{Peter Yatsyshin}
\address[Peter Yatsyshin]{Department of Chemical Engineering, Imperial College London, SW7 2AZ, UK}
\curraddr{}
\thanks{}

\author{Jos\'e A. Carrillo}
\address[Jos\'e A. Carrillo]{Mathematical Institute, University of Oxford, Oxford OX2 6GG, United Kingdom}
\curraddr{}
\thanks{}

\author{Serafim Kalliadasis}
\address[Serafim Kalliadasis]{Department of Chemical Engineering, Imperial College London, SW7 2AZ, UK}
\curraddr{}
\thanks{}


%
%
%
%

\begin{abstract}
We introduce a finite-volume numerical scheme for solving stochastic
gradient-flow equations.
Such equations are of crucial importance within the framework of fluctuating hydrodynamics and dynamic density functional theory.
Our proposed scheme deals with general free-energy functionals, including,
for instance, external fields or interaction potentials.
This allows us to simulate a range of physical phenomena where thermal fluctuations play a crucial role, such as nucleation and other energy-barrier crossing transitions.
A positivity-preserving algorithm for the density is derived based on a
hybrid space discretization of the deterministic and the stochastic terms
and different implicit and explicit time integrators.
We show through numerous applications that not only our scheme is able to
accurately reproduce the statistical properties (structure factor and
correlations) of the physical system, but, because of the multiplicative
noise, it allows us to simulate energy barrier crossing dynamics, which
cannot be captured by mean-field approaches.

\end{abstract}


\maketitle

%
%
%
%

\section{Introduction}

The study of fluid dynamics encounters major challenges due to the inherently multiscale nature of fluids. Not surprisingly, fluid dynamics has been one of the main arenas of activity for numerical analysis and fluids are commonly studied via numerical simulations, either at molecular scale, by using molecular dynamics (MD) or Monte Carlo (MC) simulations; or at macro scale, by utilising deterministic models based on the conservation of fundamental quantities, namely mass, momentum and energy.
While atomistic simulations take into account thermal fluctuations, they come with an important drawback, the enormous computational cost of having to resolve at least three degrees of freedom per particle.
Despite drastic improvements in computational power over the last few decades, atomistic simulations are only applicable for small fluid volumes. There are also other challenges with such techniques, e.g. the use of a proper thermostat when running non-equilibrium simulations at constant temperature~\cite{russo2019macroscopic}.
On the contrary, the convenience of partial differential equations (PDEs),
such as continuity and Navier-Stokes, is enormous as they are amenable to
both analytical and numerical scrutiny, with numerical simulations being less
computationally expensive than MD-MC.
However, continuous models based upon PDEs cannot account for the stochastic nature observed in real systems.
Fortunately, there is still an approach which lives at the crossroad of mesoscale, namely fluctuating hydrodynamics (FH). Firstly proposed by Landau and Lifshitz~\cite{landau1980statistical}, FH is formulated in terms of stochastic PDEs which aim at extending Navier-Stokes equations to include thermal fluctuations.
FH can then be used to simulate systems undergoing energy-barrier crossing
transitions, such as nucleation, which are impossible to describe within the
mean-field approximation.

However, the FH formulation by Landau and Lifshitz~\cite{landau1980statistical} is phenomenological: they simply
included additive stochastic flux
terms in the Navier-Stokes equations -- we shall refer to these equations as
the Landau-Lifshitz-Navier-Stokes (LLNS) equations.
A remarkable effort has been made ever since trying to connect FH with MD from first principles~\cite{Bixon1969,Fox1970,Mashiyama1978,Kawasaki1994,Dean1996,chavanis2008hamiltonian,Espanol2015}.
Some of the most widely known attempts to formalise such a connection are the works of Kawasaki~\cite{Kawasaki1994} and Dean~\cite{Dean1996}. Theirs provide a formal derivation of the stochastic time-evolution equation for the "density" field of a system of Brownian particles. Nevertheless, their derivation ends up with a time-evolution equation for the microscopic density field, which is nothing but a re-writing of the Brownian equations by using Itô's lemma. For this reason, the Dean-Kawasaki equation has been actively criticised and seen more as a toy model whose derivation does not represent a proper proof of the FH equation.
Indeed this model cannot be employed to describe macroscopic quantities,
such as density and momentum fields which are obtained by ensemble averaging
the corresponding microscopic quantities \cite{Archer2004}, and thus remains
disconnected from the original Landau-Lifshitz theory. And it is this
disconnection that has led to the misconception that the Dean-Kawasaki model describes the evolution of macroscopic observables.

In a recent work~\cite{duran-olivencia2017}, a bottom-up derivation of the FH for a system of Brownian particles has been posed.
It provided a new first-principle formulation of
the governing equations for macroscopic observables in the framework of
classical dynamic density functional theory (DDFT).
It is worth mentioning that the field of DDFT has gained a lot of traction since the first phenomenological derivations proposed in \cite{evans1979nature,dieterich1990nonlinear}.
	Several rigorous derivations have been put forward including effects such as inertia, hydrodynamic interactions and orientation of particles.
	These derivations have been proposed both for the overdamped and inertial regimes, and we refer the reader to \cite{goddard2012general,goddard2012overdamped, lutsko2010recent,duran-olivencia2016,archer2009dynamical} for more details.
In particular, the formulation proposed in~\cite{duran-olivencia2017} allows for a rigorous and systematic derivation of FH but
also fluctuating DDFT (FDDFT) which includes the effects of thermal
fluctuations on the mean-field DDFT.
In that work, it is also shown how the
classical DDFT is the most-likely realisation of FDDFT, thus providing
closure to a long standing debate in the classical DFT community about the
inclusion of fluctuations in DFT. Also, the derivation in \cite{duran-olivencia2017} stays in tune with the original
intuitive treatment of Landau and Lifshitz and at the same time alleviates
the misconceptions with the Dean-Kawasaki model.
As a remark, it should be noticed that LLNS equations describe a full
	system of particles, while FDDFT governs the time-evolution of density and
	momentum fields of subcomponents of a system, e.g. of colloidal particles in
	a bath. Because of the momentum exchange between colloidal and bath
	particles, the total momentum in FDDFT for colloidal particles is not
	conserved, being affected by thermal fluctuations and friction exerted by the
	bath.
Let us also note here that both classical DFT and DDFT, embedded with either exact
or approximated models for the density-dependent Helmholtz free-energy
functional \cite{Marconi1999}, have been shown to be rather powerful in
the study of complex systems at the nano- and microscale
\cite{Goddard2012,goddard2012general,Donev2014}. Recent advances in DFT and DDFT have
extended its applicability to a wide spectrum of applications from nucleation
of colloids and macromolecules
\cite{lutsko2012dynamical,lutsko2013classical,DuranOlivencia2018} to fluids in
confined geometries \cite{Yatsyshin2015,Goddard2016,Nold2017} and wetting
phenomena \cite{Nold2014,Yatsyshin2015_1,Yatsyshin2018}. But also highly
non-uniform systems such as dense liquid droplets and solid
clusters~\cite{Lutsko2019}.

The FDDFT framework in Ref.~\cite{duran-olivencia2017}, derived for the
general case of arbitrarily shaped and thermalized particle, consists of two
stochastic PDEs for the number density $\rho$ (known also as particles state probability function) and velocity $\mathbf{v}$ fields:
	\begin{equation}
	\begin{cases}
	\partial_t \rho ( \mathbf{r}  ,t) + \boldsymbol{\nabla}_\mathbf{r} \cdot \left( \rho( \mathbf{r}  ,t) v( \mathbf{r}  ,t) \right)  = 0,& \\[7pt]
	\partial_t \left( m \rho ( \mathbf{r}  ,t) \mathbf{v}( \mathbf{r}  ,t) \right) + \boldsymbol{\nabla}_\mathbf{r} \cdot \left( m \rho ( \mathbf{r}  ,t) \mathbf{v}( \mathbf{r}  ,t) \otimes \mathbf{v}( \mathbf{r}  ,t) \right)+ \rho ( \mathbf{r}  ,t) \boldsymbol{\nabla}_\mathbf{r} \dfrac{\delta \mathcal{E} [\rho] }{\delta \rho( \mathbf{r}  ,t)}  &\\[7pt]
	\hspace{4cm}+ m \gamma \rho( \mathbf{r}  ,t) \mathbf{v}( \mathbf{r}  ,t) + \sqrt{ k_B T m \gamma \rho ( \mathbf{r}  ,t) } \boldsymbol{\noi} ( \mathbf{r}  ,t) = 0,
	\end{cases}
	\end{equation}
where $m$ is the mass of the particles, $\mathcal{E} [\rho] $ is the
density-dependent free-energy functional, $\gamma$ is a friction parameter
describing the interactions between the particles and the bath, $k_B$ is the
Boltzmann constant, $T$ is the temperature and $\boldsymbol{\noi}$ is a
vector of Gaussian stochastic processes delta-correlated in space and time,
i.e.
\begin{align}
\langle \boldsymbol{\noi} ( \mathbf{r} ,t) \rangle =&0, \\
\langle \boldsymbol{\noi}( \mathbf{r}  ,t),\boldsymbol{\noi}( \mathbf{r}' ,t') \rangle =& 2 \delta(t-t') \delta(\mathbf{r}-\mathbf{r}').
\end{align}
In the strong damping limit ($m^{-1} \gamma \rightarrow \infty$), the high
friction between the particles and the bath causes the characteristic time
scale of the momentum dynamics to be much shorter than the density one
\cite{goddard2012overdamped,duran-olivencia2017}.
Thus, as a first approximation, the contributions of the terms $
\boldsymbol{\nabla}_\mathbf{r} \cdot \left( m \rho \mathbf{v} \otimes
\mathbf{v} \right)$ and $\partial_t \left( m \rho \mathbf{v} \right) $ can
be neglected. As a result, one obtains the stochastic time-evolution equation
for the density field, referred to as overdamped FDDFT
\cite{duran-olivencia2017,Kruger2017}:
\begin{align}\label{eq:overdamped}
\partial_t \rho ( \mathbf{r}  ,t) =  \boldsymbol{\nabla}_\mathbf{r} \cdot \left( \left( m \gamma \right)^{-1} \rho( \mathbf{r}  ,t) \boldsymbol{\nabla}_\mathbf{r} \dfrac{\delta \mathcal{E}[\rho]}{\delta \rho( \mathbf{r}  ,t)} \right) + \boldsymbol{\nabla}_\mathbf{r} \cdot \left( \sqrt{ k_B T  \left( m \gamma \right)^{-1}  \rho( \mathbf{r}  ,t)   } \boldsymbol{\noi} ( \mathbf{r}  ,t) \right) .
\end{align}
Equation \eqref{eq:overdamped} may be seen as a stochastic version of the
gradient-flow equation previously studied, for instance, in Refs
\cite{villani2003topics,carrillo2003kinetic}.
As we later discuss in further detail, Eq.~\eqref{eq:overdamped} reduces to the stochastic diffusion equation \cite{kim2017} when considering a system of non-interacting particles (ideal gas), whose free energy would be $ \mathcal{E}[\rho] = \int \rho \left( \log \rho -1 \right)
d \mathbf{r}$.
However, the presence of a more general functional $ \mathcal{E}[\rho] $ allows
in principle to introduce non-linear diffusion, external force fields and
interparticle interactions.
It is also worth mentioning that Eq.~\eqref{eq:overdamped} is not well-posed due to the high irregularity originated in the stochastic fluxes and the multiplicity of the noise.
	This difficulty is typically overcome by introducing some sort of regularization, such as a finite-volume interpretation as employed here and in previous works \cite{donev2010}.
	One also needs to be careful about the cell size choice and possible nonphysical effects such as negative densities, which may arise from the Gaussian processes.
	In Sect.~\ref{sec:nummet} we propose a finite-volume interpretation of Eq.~\eqref{eq:overdamped} which correctly overcomes these issues.

Previous numerical methodologies for FH have been focused on the LLNS
equations for the density and momentum, and the energy equation for the
temperature if the systems are non-isothermal.
In comparison, the overdamped FDDFT allows us to obtain the density field solving a single equation with stochastic fluxes for isothermal systems.
One of the first works on this regard is by Garcia \textit{et al.}
\cite{garcia1987numerical}, where a simple finite-difference scheme to treat
the numerical fluxes of the SPDE is constructed.
Further works by Bell \textit{et al.}
\cite{bell2007numerical,bell2010computational} provide an explicit Eulerian
discretization of the LLNS equations combined with a third-order Runge-Kutta
method with the objective of adequately reproducing the fluctuations in
density, energy and momentum.
Donev and co-workers~\cite{donev2010} exploited the structure factor
(equilibrium fluctuation spectrum) to construct finite-volume schemes to
solve the LLNS which then allows one to study the accuracy for a given
discretization at long wavelengths.

	They also proposed a Petrov-Galerkin finite-element discretization of non-linear stochastic diffusion equations embedded with prototypical free-energy functionals, such as the Ginzburg-Landau free energy \cite{delaTorre2015}.
	However, in order to obtain analytical forms of the structure factors used to
	assess the performance of the scheme, the study focused on systems at
	equilibrium (i.e. without density discontinuities), at supercritical
	temperatures (to avoid phase transition phenomena), and without any external
	potential.
Similarly, methods to solve FH via staggered grids have been constructed~\cite{balboa2012staggered}.
Other works have proposed numerical schemes based on temporal
integrators that are implicit-explicit predictor-corrector \cite{delong2013temporal} or two-level leapfrog \cite{glotov2014new}.
Additionally, hybrid schemes have been developed to couple LLNS with MD
\cite{delgado2007embedding,de2007fluctuating,de2006multiscale} or with MC
\cite{williams2008algorithm,donev2010hybrid} simulations of complex fluid
systems.
Moreover, the LLNS have also been solved to tackle reactive multi-species fluid mixtures \cite{bhattacharjee2015fluctuating}.
Further works have developed numerical schemes for particular applications of
the overdamped FDDFT in Eq.~\eqref{eq:overdamped}.
Specifically, Refs \cite{kim2017,atzberger2010spatially} developed numerical
methods for reaction-diffusion equations obtained by adding appropriate
reaction terms to Eq.~\eqref{eq:overdamped} equipped with the ideal-gas
free-energy functional.

The works just mentioned have contributed to a better understanding of the effects of thermal fluctuations in complex fluid systems.
Nevertheless, an efficient and systematic numerical methodology to solve
Eq.~\eqref{eq:overdamped} equipped with a general free-energy
functional has not yet been developed.
Such a methodology would allow for the simulation and scrutiny of a wide
range of non-equilibrium phenomena which can be studied within the framework
of FDDFT.
Relevant examples of these physical phenomena include dynamic evolution of confined systems and energy-barrier crossing transitions, such as nucleation.

In this work we introduce a finite-volume method to solve general
stochastic gradient-flow equations with the structure of
Eq.~\eqref{eq:overdamped} for FDDFT.
The main advantages of finite-volume schemes are the conservation of the total mass of the system and the flexibility to simulate complex geometries.

	The main contributions of this work can be summarized as follows:
	\begin{itemize}
		\item To provide a space discretization scheme able to deal with fluctuations at discontinuous density profiles. We discretize the deterministic fluxes based on a hybrid approach which takes advantage of both central and upwind schemes.
		\item To overcome the commonplace challenge of preserving non-negative densities in the presence of noise, by adopting a Brownian bridge technique. Despite previous approaches employing artificial limiters \cite{kim2017}, our
		technique ensures density positivity without altering the Gaussian
		distribution of the stochastic field.
		\item To develop a methodology to simulate a family of free-energy functionals, modelling different physical systems.
		First, we study temporal and spatial correlations, and the structure factor of
		ideal gas at equilibrium, comparing the results of our finite-volume solver
		with both MD and theoretical results.
		Then, we examine the out-of-equilibrium evolution of an ideal gas in a
		double-well external potential.
		Subsequently, we simulate homogeneous nucleation kinetics of a fluid
		consisting of particles interacting through a Lennard-Jones (LJ)-like
		potential.
		Providing initial uniform densities corresponding to metastable vapour
		conditions, we study the phase-transition of the system and compare the
		results with the mean-field phase diagram.
		\item To implement and test families of implicit-explicit Euler and Milsten time integrators, together with a weak second-order Runge-Kutta scheme.
		\item To gain insights into the free-energy decay for stochastic gradient-flow equations (see for instance Figs. 8(d) and 10(b)). The decay of free energy is an important feature of deterministic gradient-flow equations. However, in stochastic gradient-flow equations, the free-energy decay is guaranteed only in the weak noise limit.
	\end{itemize}

In Sect.~\ref{general}, we present the model equation to simulate and outline
its main properties.
In Sect.~\ref{sec:nummet}, we discuss the numerical methodology of our
finite-volume scheme, including flux discretization, time integrators,
adaptive time step to preserve density positivity and boundary conditions.
Several applications to illustrate the validity of our methodology are presented in Sect.~\ref{numerical_applications_eq}.
Finally, a summary and conclusions are offered in Sect.~\ref{conclusions}.
%
%
%
%

\section{Governing equations and related properties}
\label{general}

Our starting point is the following general SPDE based on the overdamped
FDDFT in Eq.~\eqref{eq:overdamped} with $\gamma=1$ and $m=1$,
\begin{align}\label{eq:FDDFToverd}
\begin{cases}
\partial_t \rho (\bm r ;t ) = \boldsymbol{\nabla}_\mathbf{r} \cdot \left[\rho(\bm r ;t ) \boldsymbol{\nabla}_\mathbf{r} \dfrac{\delta \mathcal{E} [\rho] }{\delta \rho(\bm r ;t )}  \right]  + \boldsymbol{\nabla}_\mathbf{r} \cdot \left[ \sqrt{ \rho (\bm r ;t ) / \beta} \boldsymbol{\noi} (\bm r ;t ) \right]  \quad \mathbf{r} \in \mathbb{R}^d ,\ t>0,\\
\rho(\mathbf{r} ;0)=\rho_0(\mathbf{r}),
\end{cases}
\end{align}
where $\mathcal{E}[\rho]$ denotes the free energy of the system given by
\begin{align}\label{eq:freesimple}
\mathcal{E}[\rho]  = \int_{\mathbb{R}^d} f(\rho)d\mathbf{r} +\int_{\mathbb{R}^d}  V( \mathbf{r} ) \rho\, d \mathbf{r} + \frac{1}{2} \int_{\mathbb{R}^d} g \left(  \kr( \mathbf{r} ) * \rho ( \mathbf{r} ) \right) \rho ( \mathbf{r}) \,d \mathbf{r},
\end{align}
with $f(\rho)$ describing the dependency of the free energy
$\mathcal{E}[\rho]$ on the local density field $\rho(\mathbf{r})$, $V( \mathbf{r})$
accounting for the effects of external potentials, $g$ denoting a function
depending on the convolution of $\rho(\mathbf{r})$ with the symmetric kernel $\kr(
\mathbf{r} )$ accounting for the interparticle potential. For simplicity, we
introduce the constant $\beta$, defined as $\beta=(k_B T)^{-1}$.

The mean-field limit of Eq.~\eqref{eq:FDDFToverd} in which no stochastic flux
is present has received a great deal of attention in the context of gradient
flows.
As discussed in Ref.~\cite{duran-olivencia2017}, the most likely path, in the weak noise limit, minimizes the Lagrangian defined as $\mathcal{L}= \| \partial_t \rho -  \boldsymbol{\nabla}_{\bm{r}} \cdot \left(  \rho(\bm r; t) \boldsymbol{\nabla}_{\bm r}  \frac{\delta \mathcal{E}[\rho]}{\delta \rho}  \right) \|_{(\sigma \sigma^*)^{-1}}$, where $\sigma$ is the operator acting on the noise $\boldsymbol{\noi} (\bm r ;t )$.
Thus, the most-likely solution $\langle {\rho} \rangle (\bm r ;t ) $ satisfies
\begin{align}
\partial_t \langle {\rho} \rangle (\bm r ;t )  =  \boldsymbol{\nabla}_{\bm{r}} \cdot \left(  \langle {\rho} \rangle \boldsymbol{\nabla}_{\bm r}  \frac{\delta \mathcal{E}[\langle {\rho} \rangle]}{\delta \langle {\rho} \rangle}   \right).
\label{eq:deterministic}
\end{align}
Equation~\eqref{eq:deterministic} is a generalized diffusion equation,
which results in the heat equation if an ideal gas free energy is selected.
It has been widely employed not only in the framework of DDFT \cite{goddard2012general,goddard2012overdamped,Yatsyshin2015}, but also to model thin-liquid films stochastic dynamics\cite{Duran-Olivencia2019}.
It has the structure of a gradient flow in the Wasserstein metric
\cite{otto1996double,villani2003topics} with applications in a variety of
contexts such as granular media \cite{carrillo2003kinetic}, materials science
and biological swarming
\cite{carrillo2003kinetic,ReinaZimmer2015,barbaro2016phase}.
The fundamental property of Eq.~\eqref{eq:deterministic} is that the free energy \eqref{eq:freesimple} is minimized following the decay rate \cite{carrillo2003kinetic,Carrillo2006,Carrillo2019}
\begin{equation}
\frac{d}{dt}\mathcal{E}[\langle \rho\rangle ] =-\int_{\mathbb{R}^d} \langle \rho\rangle \left| \frac{\delta \mathcal{E}[ \langle \rho\rangle ]}{\delta \langle \rho\rangle }\right|^2 d \mathbf{r},
\label{eq:decay_rate}
\end{equation}
where the variation of the free energy $\mathcal{E}[\rho]$ with respect to the density $\rho$ in the case of \eqref{eq:freesimple} satisfies
\begin{equation}\label{eq:varfreesimple}
\frac{\delta \mathcal{E}[\rho]}{\delta \rho}= f'(\rho) + V( \mathbf{r} ) + \kr *(g'(\kr * \rho)\rho)+g(\kr * \rho).
\end{equation}
The decay rate in Eq.~\ref{eq:decay_rate} is not satisfied by the stochastic gradient flow in Eq.~\eqref{eq:FDDFToverd}, where occasional free-energy increase can take place during the dynamical evolution.
It is precisely these jumps that allow the system to overcome energy barriers leading
to phenomena such as phase transitions.

%
%
%
%

\subsection{Structure factor}
\label{Structure factor}

The structure factor is a quantity of interest in many fields, including FH
\cite{donev2010} as noted earlier and capillary wave theory
\cite{Paulus2008,Parry2016}.
As shown in previous works~\cite{donev2010,kim2017}, the structure factor
represents an important measure of the stochastic properties of the system
and it can be experimentally obtained.
Thus, the structure factor is a valuable quantity not only to study the stability of the numerical integrator, but also to compare different schemes, as it will be shown in Sect.\ref{sec:nummet}.
Here we derive an expression of the structure factor from the linearized
FDDFT.
If we consider a periodic domain of volume $V$, the spatial Fourier transform of the density is given by
\begin{align}\label{eq:st1}
\hat{\rho}_{\boldsymbol{\lambda} } = \frac{1}{V} \int_V \rho ( \mathbf{r} , t ) e^{-i \boldsymbol{\lambda} \cdot \mathbf{r} } d \mathbf{r}.
\end{align}
The structure factor is defined as the variance of the Fourier transform of the density fluctuations,
\begin{align}\label{eq:st2}
S ( \boldsymbol{\lambda} )= V \langle  \delta \hat{\rho}_{\boldsymbol{\lambda} }  \delta \hat{\rho}_{\boldsymbol{\lambda} }^*   \rangle ,
\end{align}
where $ \delta \hat{\rho}_{\boldsymbol{\lambda} }  =  \hat{\rho}_{ \boldsymbol{\lambda} }  - \langle  \hat{\rho}_{\boldsymbol{\lambda} } \rangle $, and $\hat{\rho}_{ \boldsymbol{\lambda} }^*$ denotes the complex conjugate of $\hat{\rho}_{ \boldsymbol{\lambda} }$.

For uniform systems, Eq.~\eqref{eq:FDDFToverd} can be formally linearized around its most-likely solution $ \langle {\rho} \rangle $ by means of the Central Limit Theorem, giving
\begin{align}
\partial_t \rho (\bm r; t) =  \boldsymbol{\nabla}_{\bm{r}} \cdot \left(  \rho(\bm r; t) \boldsymbol{\nabla}_{\bm r}  \frac{\delta \mathcal{E}[\rho]}{\delta \rho}  \right) +\sqrt{ \langle {\rho} \rangle  / \beta } \boldsymbol{\nabla}_{\bm{r}} \cdot \boldsymbol{ \mathcal{W} } (\bm r; t) .
\label{eq:linearized}
\end{align}
Taking the Fourier transform of the difference between Eq.~\eqref{eq:linearized} and Eq.~\eqref{eq:deterministic}, one obtains
\begin{align}
\begin{split}
\partial_t \  \delta \hat{ \rho} ( \boldsymbol{\lambda}) =
i \boldsymbol{\lambda} \cdot
\left\{ \mathcal{T} \left(  \rho(\bm r; t) \boldsymbol{\nabla}_{\bm r}  \frac{\delta \mathcal{E}[\rho]}{\delta \rho} \right) - \mathcal{T} \left(  \langle {\rho} \rangle \boldsymbol{\nabla}_{\bm r}  \frac{\delta \mathcal{E}[\langle {\rho} \rangle]}{\delta \langle {\rho} \rangle}   \right) \right\} +  i \boldsymbol{\lambda} \cdot \sqrt{ \langle {\rho} \rangle  / \beta  }  \hat{ \boldsymbol{ \mathcal{W} } } (\boldsymbol{\lambda}) .
\end{split}
\label{eq:fourier_rho}
\end{align}
where $\mathcal{T}$ denotes the Fourier transform.
If the free-energy functional terms in the Fourier space can be expanded at first order around their mean value as
\begin{align}
\mathcal{T}  \left(  \rho(\bm r; t) \boldsymbol{\nabla}_{\bm r}  \frac{\delta \mathcal{E}[\rho]}{\delta \rho}  \right)   \sim
\mathcal{T} \left(  \langle {\rho} \rangle \boldsymbol{\nabla}_{\bm r}  \frac{\delta \mathcal{E}[\langle {\rho} \rangle]}{\delta \langle {\rho} \rangle}   \right) +
\frac{ \partial \mathcal{T}  \left[  \rho(\bm r; t) \boldsymbol{\nabla}_{\bm r}  \frac{\delta \mathcal{E}[\rho]}{\delta \rho} \right]  }{ \partial \hat{ \rho}_{\boldsymbol{\lambda}} } \delta \hat{ \rho}_{\boldsymbol{\lambda}} +  \mathcal{O} (\delta \hat{ \rho}_{\boldsymbol{\lambda}} ),
\end{align}
then Eq.~\ref{eq:fourier_rho} yields
\begin{align}
\begin{split}
\partial_t \  \delta \hat{ \rho}_{\boldsymbol{\lambda}} =
i \boldsymbol{\lambda} \cdot
\frac{ \partial \mathcal{T}  \left[  \rho(\bm r; t) \boldsymbol{\nabla}_{\bm r}  \frac{\delta \mathcal{E}[\rho]}{\delta \rho}  \right]  }{ \partial \hat{ \rho}_{\boldsymbol{\lambda}} } \delta \hat{ \rho}_{\boldsymbol{\lambda}} +
i \boldsymbol{\lambda} \cdot \sqrt{ \langle {\rho} \rangle  / \beta   }  \hat{ \boldsymbol{ \mathcal{W} }  }(\boldsymbol{\lambda}) .
\end{split}
\end{align}
Since the above equation has the form of an Ornstein-Uhlenbeck process, the
structure factor can be computed as its variance:
\begin{align}\label{eq:structurefgeneral}
\begin{split}
S ( \boldsymbol{\lambda} )=
\frac{ 2 \left(  i \boldsymbol{\lambda} \sqrt{ \langle {\rho}  \rangle  / \beta } \right)^2}{2 i \boldsymbol{\lambda}
	\frac{ \partial \mathcal{T}  \left[  \rho(\bm r; t) \boldsymbol{\nabla}_{\bm r} \frac{\delta \mathcal{E}[\rho]}{\delta \rho} \right]  }{ \partial \hat{ \rho}_{\boldsymbol{\lambda}} } }=
\frac{ i \boldsymbol{\lambda}  \langle {\rho} \rangle / \beta }{
	\frac{ \partial \mathcal{T}  \left[  \rho(\bm r; t) \boldsymbol{\nabla}_{\bm r}  \frac{\delta \mathcal{E}[\rho]}{\delta \rho} \right]  }{ \partial \hat{ \rho}_{\boldsymbol{\lambda}} } }
\end{split}
\end{align}
For example, in the case of an ideal gas without external potential,
$\frac{\delta \mathcal{E}[\rho]}{\delta \rho} = \log \rho$, the structure
factor is given by the well-known expression~\cite{kim2017}:
\begin{align}\label{eq:structuref}
\begin{split}
S ( \boldsymbol{\lambda} )=
\frac{  i \boldsymbol{\lambda}  \langle {\rho}  \rangle  / \beta  }{
	\frac{ \partial \mathcal{T}  \left[  \rho(\bm r; t) \boldsymbol{\nabla}_{\bm r} \log \rho(\bm r, t) \right] }{ \partial \hat{ \rho}_{\boldsymbol{\lambda}} } }=
\frac{ i \boldsymbol{\lambda}  \langle {\rho}  \rangle  / \beta   }{
	\frac{ \partial \mathcal{T}  \left[ \boldsymbol{\nabla}_{\bm r}  \rho(\bm r, t) \right] }{ \partial \hat{ \rho}_{\boldsymbol{\lambda}} } }=
\frac{ i \boldsymbol{\lambda}  \langle {\rho} \rangle  / \beta  }{
	\frac{ \partial  \left[ i \boldsymbol{\lambda}  \hat{\rho} \right] }{ \partial \hat{ \rho} ( \boldsymbol{\lambda} ) } }=\langle {\rho}  \rangle  / \beta .
\end{split}
\end{align}

%
%
%
%

\section{Numerical methods}
\label{sec:nummet}

The one-dimensional (1D) version of Eq. \eqref{eq:FDDFToverd} can be written as
\begin{equation}\label{eq:FDDFTflux}
\partial_t\rho=\partial_x F_d (\rho) +\partial_x F_s (\rho, \noi),
\end{equation}
where $F_d$ and $F_s$ denote the deterministic and stochastic fluxes, respectively,
\begin{equation}\label{eq:fluxes}
F_d = \rho \partial_x \dfrac{\delta \mathcal{E}[\rho]}{\delta \rho},\quad F_s=\sqrt{ \rho/ \beta} \noi.
\end{equation}
The finite-volume formulation of Eq. \eqref{eq:FDDFTflux} is obtained by
dividing the domain into grid cells
$C_j=[x_{j-\frac{1}{2}},x_{j+\frac{1}{2}}]$, each one assumed to have the
same length $\Delta x=x_{j+1/2}-x_{j-1/2}$, and then approximating in each of
them the cell average of $\rho$ defined as
\begin{align}\label{eq:rhoaver}
\overline{\rho}_j(t)=\frac{1}{\Delta x }\int_{x_{j-1/2}}^{x_{j+1/2}} \rho(x,t) dx.
\end{align}
Subsequently, one has to integrate \eqref{eq:FDDFTflux} spatially over each cell and apply the Gauss divergence theorem, leading eventually to the semi-discrete equation for the temporal evolution of the cell average density,
\begin{equation}\label{eq:fvfluxes}
\frac{d \overline{\rho}_j}{d t}=\frac{F_{d,j+1/2}-F_{d,j-1/2}}{\Delta x}+\frac{F_{s,j+1/2}-F_{s,j-1/2}}{\Delta x},
\end{equation}
where $F_{d,j+1/2}$ and $F_{s,j+1/2}$ denote the deterministic and stochastic
fluxes \eqref{eq:fluxes} evaluated at the boundary $x_{j+1/2}$. The
separation of the physical flux into deterministic and stochastic parts has
been effectively applied in previous studies
\cite{bell2007numerical,donev2010}, noting though that some of them consider a single flux combining the deterministic and stochastic
terms~\cite{mohamed2013finite}.
Here we treat them separately.
We now proceed to develop in detail the methodology of
	our finite-volume scheme.

\subsection{Deterministic flux}

The deterministic flux is evaluated by employing a hybrid method which
adopts a central or upwind approximation depending on the relative local
total variation of the density. This is a classical technique in
deterministic fluid dynamics to construct high-resolution and
oscillation-free schemes \cite{sweby1984high}. On the one hand, central
high-order and non-diffusive schemes are applied wherever smooth gradients
of the density are found. On the other hand, a diffusive upwind scheme is
employed in those regions of the domain with density gradients, in order to
prevent the spurious oscillations from central high-order schemes.

Previous works in the field of FH~\cite{bell2007numerical,donev2010,kim2017}
approximate the deterministic flux with a simple second-order central
difference approach, even though high-order differences are also proposed but
not implemented \cite{donev2010}. Our motivation to adopt  a hybrid approach
is precisely aimed to avoid possible spurious oscillations. The previous
literature is mainly focused on FH with $f'(\rho)=\log \rho$ in
Eq.~\eqref{eq:varfreesimple}, resulting in a deterministic flux of the form
$\partial_x F_d(\rho)=\partial_{xx} \rho$. The treatment of this Laplacian
with a central approximation works well for the cases presented in the
literature, but as it is shown later in Fig.~\ref{fig:deterministic_flux}, it
can cause spurious oscillations for some solutions.

In the case of Eq.~\eqref{eq:FDDFTflux}, the stochastic flux leads to
non-smooth density. The proposed gradient scheme then compares the
local gradient in the density with the neighbouring gradients. When the local
gradient is large when compared to the neighbours', an upwind approximation is chosen. If not,
the central approximation prevails. As a result, our proposed hybrid scheme
for the deterministic flux satisfies
\begin{equation}\label{eq:centralupwind}
F_{d,j+1/2}=\left(1-\phi(r_{j+1/2}) \right) F_{d,j+1/2}^c+\phi(r_{j+1/2}) F_{d,j+1/2}^u,
\end{equation}
where $\phi(r_{j+1/2})$ is a flux limiter with a threshold parameter $k$, defined as
\begin{equation*}
\phi(r_{j+1/2})= \begin{cases}
0,\quad \text{if}\, r_{j+1/2}\leq k,\\
1,\quad \text{if}\, r_{j+1/2}>k,
\end{cases}
\end{equation*}
and $r_{j+1/2}$ is a quotient measuring the relative local variation of the density,
\begin{equation}
r_{j+1/2}=\frac{\left| \rho_{j+1}-\rho_j\right|}{\sum_{l=-w}^{w}\left| \rho_{l+1}-\rho_l\right|},
\end{equation}
with $w$ indicating the number of neighbouring cell used to compute the total variation.
A value $w=5$ is employed in the numerical experiments of this work, since it gives a good compromise between conservation of local information and effects of the fluctuations.

The threshold parameter $k$ plays a key role and has to be carefully
selected. When $k$ is small, the diffusive upwind scheme is chosen more
frequently, leading to diffusive behaviour which affects the structure factor
and the correlations. On the contrary, when $k$ is large, the central scheme
will be predominant, and spurious oscillations may be created.
Fig.~\ref{fig:deterministic_flux} provides a numerical example to choose an
adequate value for $k$.

Firstly, Figs.~\ref{fig:deterministic_flux}(a-b) are obtained by simulating
\eqref{eq:FDDFTflux} with a free energy satisfying $\delta \mathcal{E}/\delta
\rho=\log \rho+ 0.1 x$. The initial density profile has two discontinuities
as shown in Fig.~\ref{fig:deterministic_flux} (a). Under these conditions,
the numerical solution evolves as a diffusive travelling wave, but the two
discontinuities in the initial density trigger spurious oscillations. The
oscillations diminish by reducing $k$ (for $k=0$, which corresponds to only
upwind flux, the diffusion eliminates the oscillations). However, a low value
of $k$ critically dampens the variance, due to the diffusive nature of the
upwind flux, as it is noticed from Fig.~\ref{fig:deterministic_flux} (b).

Secondly, Fig.~\ref{fig:deterministic_flux} (c) is obtained from simulating
\eqref{eq:FDDFTflux} with a free energy satisfying $\delta \mathcal{E}/\delta
\rho=\log \rho$ and starting from an equilibrium density profile. For this
case, the theoretical value of the structure factor is known and is given by
\eqref{eq:structuref}, meaning that the dampening behaviour of the upwind
scheme could be directly evaluated from Fig.~\ref{fig:deterministic_flux}
(c). We notice again how the upwind scheme dampens the
statistical properties of the system due to the numerical diffusion. As a
result, an intermediate value of $k$ needs to be taken in order to find a
balance between both numerical flaws. The compromising value is chosen to be
$k=3$.

\begin{figure}[ht!]
	\begin{center}
		\subfloat[]{\protect\protect\includegraphics[scale=0.6]{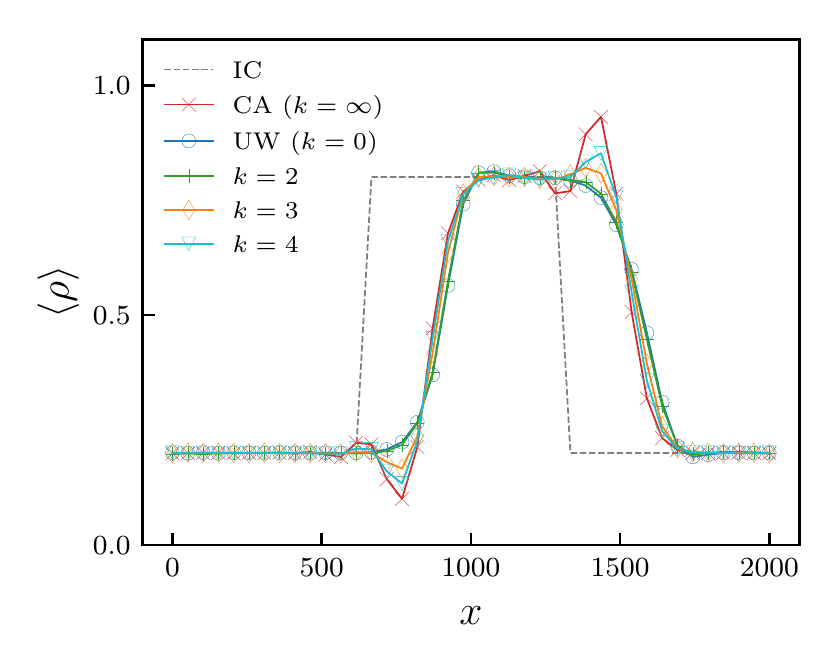}
		}
		\subfloat[]{\protect\protect\includegraphics[scale=0.6]{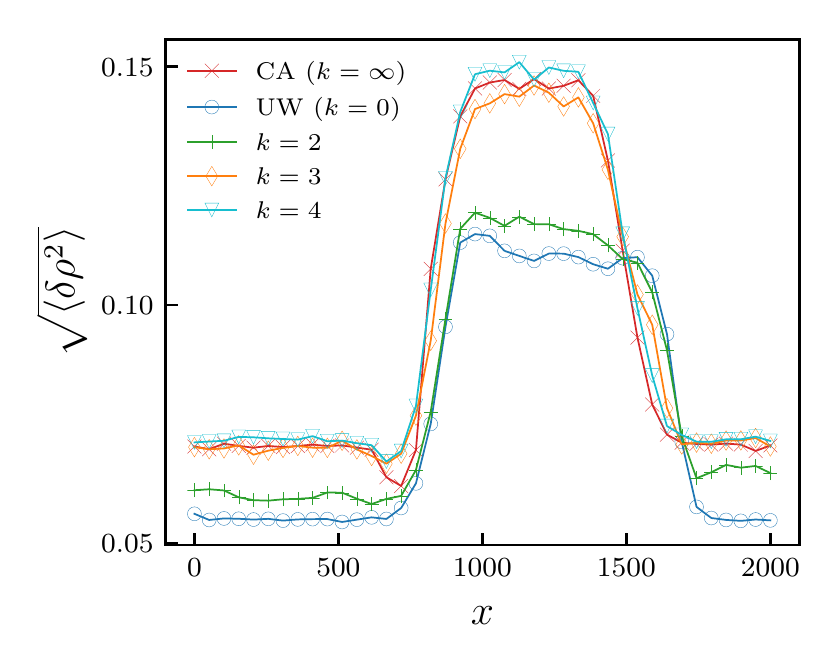}
		}
		\subfloat[]{\protect\protect\includegraphics[scale=0.6]{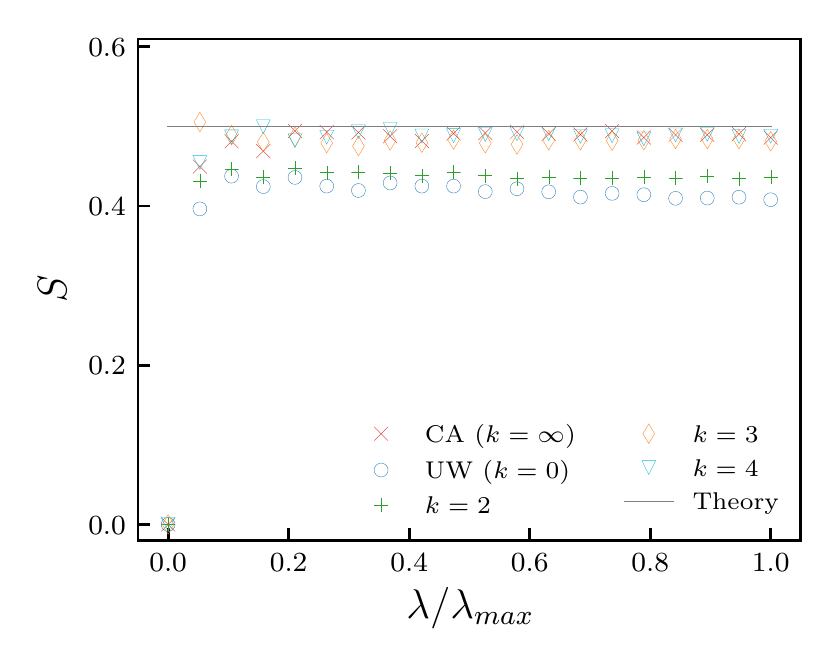}
		}
	\end{center}
	\protect\protect\caption{\label{fig:deterministic_flux} (a) Mean density and (b)
		standard deviation for a moving and diffusing  initial step function evolving according to \eqref{eq:FDDFTflux}, with $\delta\mathcal{E}/\delta \rho=\log \rho+u_0 x$ and $u_0=0.1$.
		Additinally, we report the structure factor of a uniform system in (c).
		IC: initial condition, CA: central approximation ($k=\infty$), UW: upwind approximation ($k=0$).
		Further values of $k$ are depicted to evaluate the spurious oscillations in the density and the artificial fluctuation dampening in case of both inhomogeneous (a-b) and homogeneous (c) systems.
		In what follows, we adopt a scheme with $k=3$, since it gives the compromise between accuracy in the sharp density profile and the fluctuations amplitude. The stochastic term is discretized according to Eq.~\eqref{eq:linear}. 
	}
\end{figure}

After selecting the adequate value of $k$, we proceed to the detailed
construction of the central and upwind deterministic fluxes in
\eqref{eq:centralupwind}:

\begin{enumerate}[label={\alph*)}]
	
	\item Upwind approximation of the deterministic flux: it is constructed as
	proposed in \cite{carrillo2015}, where a first- and second-order
	finite-volume method for nonlinear equations with gradient-flow
	structure is constructed. The equations treated in \cite{carrillo2015}
	have the form \eqref{eq:FDDFTflux} without the white noise $\noi$. The
	authors propose to firstly reconstruct the density profile in each cell
	$C_j$ as a constant profile for the first-order scheme, or as a linear
	profile for the second-order scheme,
	\begin{align}
	\widetilde{\rho}_j (x)=\begin{cases}
	\overline{\rho}_j,\quad x \in C_j, \quad \text{for the first-order scheme,}\\
	\overline{\rho}_j+\left(\rho_x\right)_j (x-x_j),\quad x \in C_j, \quad \text{for the second-order scheme,}
	\end{cases}
	\end{align}
	so that the east and the west density values $\rho_j^E$ and $\rho_j^W$ at the cell interfaces $x_{j+\frac{1}{2}}$ and $x_{j-\frac{1}{3}}$, respectively, are approximated as
	\begin{align}\label{eq:eastwest}
	\begin{gathered}
	\rho_j^E=\overline{\rho}_j+\frac{\Delta x}{2}\left(\rho_x\right)_j, \\
	\rho_j^W=\overline{\rho}_j-\frac{\Delta x}{2}\left(\rho_x\right)_j.
	\end{gathered}
	\end{align}
	The numerical derivatives $\left(\rho_x\right)_j$ at every cell $C_j$ are
	computed by means of an adaptive procedure which ensures that the point
	values \eqref{eq:eastwest} are second-order and non-negative. This
	procedure initially takes centred approximations of the form
	$\left(\rho_x\right)_j=\left(\overline{\rho}_{j+1}-\overline{\rho}_{j-1}\right)/(2\Delta
	x)$. If it then happens that $\rho_j^E<0$ or $\rho_j^W<0$, the scheme
	employs a \textit{minmod} limiter which ensures that the reconstructed
	values are non-negative as long as the cell averages $\overline{\rho}_j$ are
	non-negative,
	\begin{equation}
	\left(\rho_x\right)_j=\text{minmod}\left(\theta\,\frac{\overline{\rho}_{j+1}-\overline{\rho}_j}{\Delta x}, \frac{\overline{\rho}_{j+1}-\overline{\rho}_{j-1}}{2 \Delta x}, \theta\,
	\frac{\overline{\rho}_j-\overline{\rho}_{j-1}}{\Delta x}\right),
	\end{equation}
	where
	\begin{equation*}
	\text{minmod} \left(z_1,z_2,\ldots \right)= \begin{cases}
	\min\left(z_1,z_2,\ldots\right),\quad \text{if}\, z_i>0\quad \forall i,\\
	\max\left(z_1,z_2,\ldots\right),\quad \text{if}\, z_i<0\quad \forall i,\\
	0, \hspace{3.04cm}\text{otherwise.}
	\end{cases}
	\end{equation*}
	The parameter $\theta$ controls the numerical viscosity and it is taken to be $\theta=2$, as in Ref.~\cite{carrillo2015}.
	
	After completing the density reconstruction, the deterministic flux $F_{d,j+1/2}^u$ is evaluated with an upwind scheme as
	\begin{align}\label{eq:deterflux}
	F_{d,j+1/2}^u= u_{j+1/2}^+ \,\rho_{j}^E + u_{j+1/2}^{-} \,\rho_{j+1}^W,
	\end{align}
	where $u_{j+1/2}$ are discrete values computed from the central difference
	\begin{align}\label{eq:vel}
	u_{j+1/2} = -\frac{\left(\dfrac{\delta \mathcal{E}}{\delta \rho}\right)_{j+1}-\left(\dfrac{\delta \mathcal{E}}{\delta \rho}\right)_j}{\Delta x}.
	\end{align}
	The upwind formulation of the deterministic flux \eqref{eq:deterflux} is then accomplished by taking
	\begin{align}\label{eq:velpm}
	&u_{j+1/2}^+ = \max\left( u_{j+1/2},0 \right)\quad \text{and}\quad u_{j+1/2}^- = \min \left( u_{j+1/2},0 \right).
	\end{align}			
	Finally, the discrete variation of the free energy with respect to the density $\left(\dfrac{\delta \mathcal{E}}{\delta \rho}\right)_j$ is computed from \eqref{eq:varfreesimple}, in the case $g(s)=s$, as
	\begin{align}\label{eq:discretevel}
	\left(\dfrac{\delta \mathcal{E}}{\delta \rho}\right)_j= \Delta x \sum_i \kr(x_j-x_i) \rho_i + F(\rho_j) + V(x_j).
	\end{align}
	For general nonlinearities $g(s)$ a similar treatment is performed.
	
	\item Central approximation for the deterministic flux: this is the main
	strategy to treat the FH deterministic flux in the literature
	\cite{bell2007numerical,donev2010,kim2017}. In our case, given the
	generality of the free energy in \eqref{eq:freesimple}, we propose to
	evaluate the central deterministic flux as
	\begin{align}\label{eq:deterfluxc}
	F_{d,j+1/2}^c= u_{j+1/2} \,\rho_{j+1/2},
	\end{align}
	where $u_{j+1/2}$ is computed as in \eqref{eq:vel}, with the discrete variation of the free energy satisfying \eqref{eq:discretevel}, and $\rho_{j+1/2}$ is taken as the averaged from the adjacent cells,
	\begin{align}\label{eq:detdensity}
	\rho_{j+1/2}= \frac{\overline{\rho}_{j}+\overline{\rho}_{j+1}}{2}.
	\end{align}
	Classical hybrid schemes employ a high-order approximation for the central approximation of the deterministic flux. For this work, however, we just consider the low-order differences \eqref{eq:vel} and \eqref{eq:detdensity}, given that the presence of the stochastic flux limits the spatial order of accuracy.
	Previous works in the literature also propose this low-order central differences \cite{bell2007numerical,donev2010,kim2017}.
\end{enumerate}

\subsection{Stochastic flux}

The evaluation of the stochastic flux \eqref{eq:fluxes} must be done
carefully since the divergence of the white noise $\noi$ cannot be evaluated
pointwise in time and space. This problem is typically overcome by
	evaluating the noise in the cell by means of a spatiotemporal average,
	following  \cite{donev2010} and subsequently employed by Donev and
	collaborators in
	\cite{balakrishnan2014fluctuating,bhattacharjee2015fluctuating,kim2017} ,

\begin{equation}\label{eq:stoflux0}
\overline{\noi}_j=\frac{1}{\Delta x \Delta t}\int_t^{t+\Delta t} \int_{x_{j-\frac{1}{2}}}^{x_{j+\frac{1}{2}}} \noi(x,t) dx\,dt,
\end{equation}
which, by the definition of the white noise, is equal to a normal distribution with zero mean and variance $\left(\Delta x \Delta t\right)^{-1}$, so that
\begin{equation}\label{eq:gaussian}
\overline{\noi}_j=\mathcal{N}(0,1)/ \sqrt{\Delta x \Delta t}.
\end{equation}
Several approximations for the stochastic flux have been put forward in the literature \cite{bell2007numerical,kim2017}.
	They rely on computing the stochastic flux directly at the interfaces using a random number generator, and we refer the reader to \cite{donev2010} for more details about this approach.
	In this work, however, we aim to employ the spatiotemporal cell average in
	Eq.~\eqref{eq:stoflux0} to compute the stochastic fluxes at the interfaces.
	We are inspired by the literature on numerical methods for hyperbolic
	problems where it is common to evaluate fluxes in a central or upwind
	fashion. Of course, here we are not aiming to achieve a higher accuracy at
	the interface, given that the cell averages are sampled from a distribution.
We test the following four different approximations for the stochastic flux, which are
compared in Sect.~\ref{numerical_applications_eq}:
\begin{enumerate}[label=(\alph*)]
	\item Forward approximation of the form
	\begin{equation}\label{eq:forward}
	F_{s,j+1/2}=\left(\sqrt{\frac{\rho}{\beta}}\noi \right)_{j+1/2}=\sqrt{\frac{\rho_{j}}{\beta}} \overline{\noi}_{j}.
	\end{equation}
	\item Linear approximation of the form
	\begin{equation}\label{eq:linear}
	F_{s,j+1/2}=\left(\sqrt{\frac{\rho}{\beta}}\noi \right)_{j+1/2}=\sqrt{\frac{\rho_{j+1/2}}{\beta}} \noi_{j+1/2},
	\end{equation}
	where
	\begin{equation}\label{eq:linear2}
	\rho_{j+1/2}=\frac{\overline{\rho}_{j}+\overline{\rho}_{j+1}}{2},\quad \noi_{j+1/2}=\frac{\overline{\noi}_{j}+\overline{\noi}_{j+1}}{2}.
	\end{equation}
	\item Parabolic approximation of the form
	\begin{equation}\label{eq:parabolic}
	F_{s,j+1/2}=\left(\sqrt{\frac{\rho}{\beta}}\noi \right)_{j+1/2}=\sqrt{\frac{\rho_{j+1/2}}{\beta}} \noi_{j+1/2},
	\end{equation}
	where
	\begin{equation}\label{eq:parabolic2}
	\begin{gathered}
	\rho_{j+1/2}=\alpha_1\left(\overline{\rho}_{j-1}+\overline{\rho}_{j+2}\right)+\alpha_2\left(\overline{\rho}_{j}+\overline{\rho}_{j+1}\right),\\
	\noi_{j+1/2}=\alpha_1\left(\overline{\noi}_{j-1}+\overline{\noi}_{j+2}\right)+\alpha_2\left(\overline{\noi}_{j}+\overline{\noi}_{j+1}\right),\\
	\alpha_1=(1-\sqrt{3})/4, \quad \alpha_2=(1+\sqrt{3})/4.\\
	\end{gathered}
	\end{equation}
	The coefficients $\alpha_1$ and $\alpha_2$ are selected as in \cite{bell2007numerical}, with the objective of preserving both the average and the variance in each time step.
	\item Upwind approximation, where $\noi_j$ is taken as the stochastic velocity, so that a similar expression to the deterministic flux in \eqref{eq:deterflux} is taken,
	\begin{equation}\label{eq:stoflux}
	F_{s,j+1/2}=\left(\sqrt{\frac{\rho}{\beta}}\noi \right)_{j+1/2}=\sqrt{\frac{\rho_{j}^E
		}{\beta}} \noi_{j+1/2}^++\sqrt{\frac{\rho_{j+1}^W
		}{\beta}} \noi_{j+1/2}^-,
	\end{equation}
	where
	\begin{align}\label{eq:noipm}
	&\noi_{j+1/2}^+ = \max\left( \noi_{j+1/2},0 \right),\quad  \noi_{j+1/2}^- = \min \left( \noi_{j+1/2},0 \right),
	\end{align}	
	and $\noi_{j+1/2}=(\overline{\noi}_{j}+\overline{\noi}_{j+1})/2$. The east and west density values $\rho_{j}^E$ and $\rho_{j}^W$ are computed as in the deterministic flux, either with a first- or second-order reconstruction \eqref{eq:eastwest}.
\end{enumerate}

\subsection{Stochastic time integrators}
\label{time_integrators}

The derivation of the temporal integrators to advance in time the semidiscrete equation \eqref{eq:fvfluxes} is accomplished by the equation
\begin{align}
d\boldsymbol {\overline{\rho}}(t)= {\boldsymbol {\mu }} (\boldsymbol {\overline{\rho}}(t) )\,dt+ \boldsymbol {\sigma} (\boldsymbol {\overline{\rho}}(t)) \,\boldsymbol{\overline{\noi}} \ dt,
\label{eq:discretized_system}
\end{align}
where the vectors $\boldsymbol {\overline{\rho}}(t)$ and $\boldsymbol {\overline{\noi}}$ contain the cell averages defined in \eqref{eq:rhoaver} and \eqref{eq:stoflux0}, respectively, so that $\boldsymbol {\overline{\rho}}(t)=(\overline{\rho}_{1}(t),\overline{\rho}_{2}(t),\ldots ,\overline{\rho}_{n}(t))$ and $\boldsymbol {\overline{\noi}}(t)=(\overline{\noi}_{1}(t),\overline{\noi}_{2}(t),\ldots ,\overline{\noi}_{n}(t))$. The vector ${\boldsymbol {\mu }} (\boldsymbol {\overline{\rho}}(t) )$ and the matrix $\boldsymbol {\sigma} (\boldsymbol {\overline{\rho}}(t))$ depend on the density cell averages $\boldsymbol {\overline{\rho}}(t)$ and their structures vary depending on the choice of the deterministic and stochastic fluxes, respectively.

From Eq.~\eqref{eq:discretized_system} we employ It\^o's lemma to
approximate the two functions ${\boldsymbol {\mu }} (\boldsymbol
{\overline{\rho}}(t) )$ and $\boldsymbol {\sigma} (\boldsymbol
{\overline{\rho}}(t))$. After integrating in time then we obtain the Taylor
expansion of the stochastic process. Truncating this expansion with an error
$\mathcal{O}(\Delta t^{1/2})$ and integrating between $t$ and $t+\Delta t$,
one can derive the following family of implicit-explicit Euler-Maruyama
integrators~\cite{Kloeden1992}, whose component-wise form satisfies
\begin{align}
\overline{\rho}_j(t+\Delta t)= \overline{\rho}_j(t) +  \left[ \left( 1-\theta \right) \mu_j (\boldsymbol {\overline{\rho}}(t)) + \theta\,\mu_j (\boldsymbol {\overline{\rho}}(t+\Delta t))  \right] \Delta t + \sum_{k=1}^n \sigma_{jk}(\boldsymbol {\overline{\rho}}(t))   \overline{W}_k (t) \Delta t.
\label{eq:euler-maruyama}
\end{align}
The parameter $\theta$ allows us to have an explicit ($\theta=0$), implicit ($\theta=1$) or semi-implicit ($\theta=0.5$) temporal integrator. Euler-Maruyama is the highest order integrator for which no multiple stochastic integrals have to be computed, but it has only $0.5$ strong order of convergence.

Keeping in the expansion all the terms up to $\mathcal{O}( \Delta t)$, one
obtains a derivative-free family of implicit-explicit Milstein integrators
with strong order $1.0$ and weak order $0.5$~\cite{Kloeden1992}. The
component-wise version of this scheme is
\begin{equation}
\begin{split}
\overline{\rho}_j(t+\Delta t)= &\overline{\rho}_j(t) +  \left[ \left( 1-\theta \right) \mu_j (\boldsymbol {\overline{\rho}}(t)) + \theta\,\mu_j (\boldsymbol {\overline{\rho}}(t+\Delta t))  \right] \Delta t + \sum_{k=1}^n \sigma_{jk}(\boldsymbol {\overline{\rho}}(t))  \overline{W}_k (t) \Delta t  \\
& + \frac{1}{\sqrt{ \Delta t} } \sum_{l,m=1}^{n}   \left[  \sigma_{jm} (  \boldsymbol {\Upsilon}_{l}(t) ) - \sigma_{jl}(\boldsymbol {\overline{\rho}}(t) ) \right] I_{l,m}(t),
\label{eq:milstein}
\end{split}
\end{equation}
where the $l$-th row of the matrix $\boldsymbol {\Upsilon}$ is defined as
\begin{align}
\boldsymbol {\Upsilon}_{l}(t) = \boldsymbol {\overline{\rho}}(t) + {\boldsymbol {\mu} (\boldsymbol {\overline{\rho}}(t) ) }  \Delta t + {\boldsymbol {\sigma }}_{l} (\boldsymbol {\overline{\rho}}(t) )  \sqrt{ \Delta t},
\label{eq:milstein_supporting}
\end{align}
and multiple stochastic integrals $I_{l,m}(t)= \int_t^{t+ \Delta t} \noi^l \ \noi^m dt$, where $\noi_l$ and $\noi_m$ are two white noises.
These integrals do not have a simple analytical solutions, thus are approximated as function of the white noise cells average in Eq.~\eqref{eq:stoflux0} as \cite{lutsko2015two}:
\begin{align}
I_{l,m}(t)=
\begin{cases}
\frac{1}{2} \left[ \left( \overline{W}^l \right) ^2 - 1 \right] \Delta t  \quad &\text{if} \quad  l=m ,\\[7pt]
\frac{\Delta t}{2} \overline{W}^l \overline{W}^m   + \sqrt{k_p  \Delta t } ( \varphi_{l} \overline{W}^m - \varphi_{m} \overline{W}^l ) &\\[5pt]
\hspace{2cm}+  \sum_{r=1}^p \frac{1}{2 \pi r} \left[ \zeta_{lr} ( \sqrt{2} \ \overline{W}^m \sqrt{\Delta t }  + \eta_{m} )  -\zeta_{mr} ( \sqrt{2} \ \overline{W}^l \sqrt{\Delta t } + \eta_{l} ) \right] \quad &\text{otherwise},
\end{cases}
\label{eq:multiple_integrals}
\end{align}
where $\varphi_{l}$, $\zeta_{lr}$ and $\eta_{m}$ are pairwise independent variables with distribution $\mathcal{N}(0,\Delta t)$ and $k_p$ is given by
\begin{align}
k_p= \frac{1}{12} - \frac{1}{2 \pi^2} \sum_{1}^{p} \frac{1}{r^2}.
\end{align}
The value $p$ determines the accuracy of the multiple stochastic integral approximation and subsequently the accuracy of the scheme.
A value of $p= k/ \Delta t $ for some constant $k$ is enough to preserve the accuracy of the scheme \cite{Kloeden1992}.

Stochastic time integration schemes of higher strong order have also been proposed in the literature \cite{Kloeden1992}.
However, these schemes are very computationally expensive due to the presence
of high-order multiple stochastic integrals to be solved.
Moreover, in many physical applications, the convergence in probability, also called weak convergence, is more relevant than the strong convergence.
For this reason, a last time integration scheme we will study the following explicit weak order $2.0$ Runge-Kutta scheme:
\begin{align}
\begin{split}
& \widetilde{\rho}_j(t+\Delta t)= \overline{\rho}_j(t) + \frac{1}{2} \left[  \mu_j (\boldsymbol {\Upsilon}(t) ) + \mu_j (\boldsymbol {\overline{\rho}}(t))  \right] \Delta t + \Phi(t) , \\
& \overline{\rho}_j(t+\Delta t)=  \overline{\rho}_j(t) + \frac{1}{2} \left[  \mu_j ( \widetilde{\boldsymbol {\rho}} (t+\Delta t) ) + \mu_j (\boldsymbol {\overline{\rho}}(t))  \right] \Delta t + \Phi(t) ,
\label{eq:RK2}
\end{split}
\end{align}
where the vector $\Phi(t)$ has components:
\begin{align}
\begin{split}
\Phi_j(t) =& \frac{1}{4} \sum_{l=1}^n  \left[  \boldsymbol {\sigma}_{lj}( \boldsymbol {\Lambda}_{l +}(t) ) + \boldsymbol {\sigma}_{lj}( \boldsymbol {\Lambda}_{l -}(t) )  + 2 \boldsymbol {\sigma}_{lj}(\boldsymbol {\overline{\rho}}(t))  \right]  \overline{W}^l (t) \sqrt{\Delta t } \\&+\frac{1}{4} \sum_{l=1}^n \sum_{r=1, r \neq l }^n \left[  \boldsymbol {\sigma}_{lj}( \boldsymbol {\Xi}_{r +}(t) ) + \boldsymbol {\sigma}_{lj}( \boldsymbol {\Lambda}_{r -}(t) )  - 2 \boldsymbol {\sigma}_{lj}(\boldsymbol {\overline{\rho}}(t))  \right] \overline{W}^l (t)  \\
&+\frac{1}{4} \sum_{l=1}^n  \left[  \boldsymbol {\sigma}_{lj}( \boldsymbol {\Lambda}_{l +}(t) ) - \boldsymbol {\sigma}_{lj}( \boldsymbol {\Lambda}_{l -}(t) ) \right] \left[ \left( \overline{W}^l (t) \right)^2  - 1 \right]  \sqrt{\Delta t} \\
&+\frac{1}{4 } \sum_{l=1}^n \sum_{r=1, r \neq l }^n \left[  \boldsymbol {\sigma}_{lj}( \boldsymbol {\Xi}_{r +}(t) ) - \boldsymbol {\sigma}_{lj}( \boldsymbol {\Xi}_{r -}(t) ) \right] \left[ \overline{W}^l (t) \overline{W}^r (t) + V_{r,j} \right] \sqrt{\Delta t} , \\
\label{eq:RK2_Phi}
\end{split}
\end{align}
and the supporting values:
\begin{align}
& \boldsymbol {\Upsilon} = \boldsymbol {\overline{\rho}}(t) + {\boldsymbol {\mu }}(\boldsymbol {\overline{\rho}}(t)) \Delta t + \sum_{j=1}^n\boldsymbol {\sigma}_{j}(\boldsymbol {\rho}(t))  \Delta W_j (t) ,\\
& \boldsymbol {\Lambda}_{l \pm} = \boldsymbol {\overline{\rho}}(t)  + {\boldsymbol {\mu }}(\boldsymbol {\overline{\rho}}(t))  \Delta t \pm \boldsymbol {\sigma}_{l}(\boldsymbol {\overline{\rho}}(t))  \sqrt{\Delta t } , \\
& \boldsymbol {\Xi}_{l \pm} = \boldsymbol {\overline{\rho}}(t)  \pm \boldsymbol {\sigma}_{l}(\boldsymbol {\overline{\rho}}(t))  \sqrt{\Delta t } .
\label{eq:rk2_supporting}
\end{align}
The random matrix $\mathbf{V}$ is defined as:
\begin{align}
V_{r,j}(t)=
\begin{cases}
& \pm 1 \quad \text{with} \quad p=\frac{1}{2}  \quad \text{if} \quad r<j ,\\
&  - 1 \quad \text{if} \quad r=j ,\\
& - V_{j,r}(t) \quad \text{if} \quad r>j ,
\end{cases}
\label{eq:V_matrix}
\end{align}
where $p$ indicates the probability. It has to be emphasised that such a
scheme does not involve the computation of multiple stochastic integrals,
thus its strong order of convergence is expected to be at most $1.0$.

\subsubsection{Weak and strong order of convergence for temporal integrators}\label{subsec:strongweak}

The order of convergence can be measured in the strong and weak sense, for
which the strong and weak errors are respectively defined for a particular
time $\tau$ and a group of trajectories $\Gamma=\left\{\gamma_1,
\gamma_2,\ldots,\gamma_m\right\}$ as
\begin{equation}
\epsilon_s=\left< \left| \boldsymbol{\overline{\rho}}^{\gamma}(\tau) -  \boldsymbol{\overline{\rho}}_{exact}^{\gamma}(\tau) \right| \right>_{\gamma\,\in\,\Gamma} \quad\text{and}\quad \epsilon_w= \left|\left<  \boldsymbol{\overline{\rho}}^{\gamma}(\tau) \right>_{\gamma\,\in\,\Gamma} - \left< \boldsymbol{\overline{\rho}}^{\gamma}_{exact}(\tau)\right>_{\gamma\,\in\,\Gamma}\right|,
\end{equation}
where $\boldsymbol{\overline{\rho}}^{\gamma}(\tau)$ refers to the numerical density cell averages at time $\tau$ following trajectory $\gamma$, $\boldsymbol{\overline{\rho}}_{exact}^{\gamma}(\tau)$ denotes the exact or reference solution which is considered to be the true solution of the stochastic equation, the ensemble average $\left< \cdot \right>$ is taken over the trajectories ${\gamma\,\in\,\Gamma}$, and the norm $\left| \cdot \right|$ is taken to be the standard $L^1$-norm.

In Fig.~\ref{fig:cpu_time} we evaluate the strong and weak errors for the
described stochastic integrators. They are obtained by simulating equation
\eqref{eq:discretized_system} in the simplified case of geometric Brownian
motion, for which ${\boldsymbol {\mu }} (\boldsymbol {\overline{\rho}}(t)
)=-\boldsymbol {\overline{\rho}}(t)$ and $\boldsymbol {\sigma} (\boldsymbol
{\overline{\rho}}(t))=0.5 \boldsymbol {\overline{\rho}}(t)$, thus eliminating
the spatial derivatives. As a result, the temporal evolution of the density
for a cell $j$, which is independent from the rest of cells, follows
\begin{equation}\label{eq:geobrownian}
d \overline{\rho}_j (t)= -\overline{\rho}_j dt +0.5 \overline{\rho}_j \overline{\noi}_j dt,
\end{equation}
with the cell averaged white noise $\overline{\noi}_j$ defined as in \eqref{eq:stoflux0}. For the simulation we selected $\overline{\rho}_j (0) =1$. Geometric Brownian motion is useful to compute the strong and weak errors since the exact solution in analytically known \cite{oksendal2003stochastic}.

The results in Fig.~\ref{fig:cpu_time} (a) and (b) depict the strong and weak
order of convergence for the temporal integrators. Concerning the former, as
expected the Euler-Maruyama presents an order of $0.5$, while Milstein an
order of $1.0$. Runge-Kutta is expected to have a strong order of at least
$0.5$, and in the plot it approaches a value of $1.0$.

With respect to the weak order, the whole families of Euler-Maruyama and
Milstein solvers are expected to have an order of $1.0$, while the
Runge-Kutta an order of $2.0$. Such theoretical predictions are respected for
all schemes, with the exception of the semi-implicit methods which
outperforms, giving an order between $1.0$ and $2.0$.

On Fig.~\ref{fig:cpu_time} (c) we plot the cpu time against the total number of cells $n$ for each of the temporal integrators.
The Euler-Maruyama accounts for $\mathcal{O}(n)$ computations, the Milstein for $\mathcal{O}(n^2)$, and the Runge-Kutta for $\mathcal{O}(n^3)$.
However, for $n< 100$ we get a lower cpu time for Runge-Kutta, if compared with all the other integrators except for the explicit Euler-Maruyama.

\begin{figure}[ht!]
	\begin{center}
		\subfloat[]{ \protect\protect\includegraphics[scale=0.8]{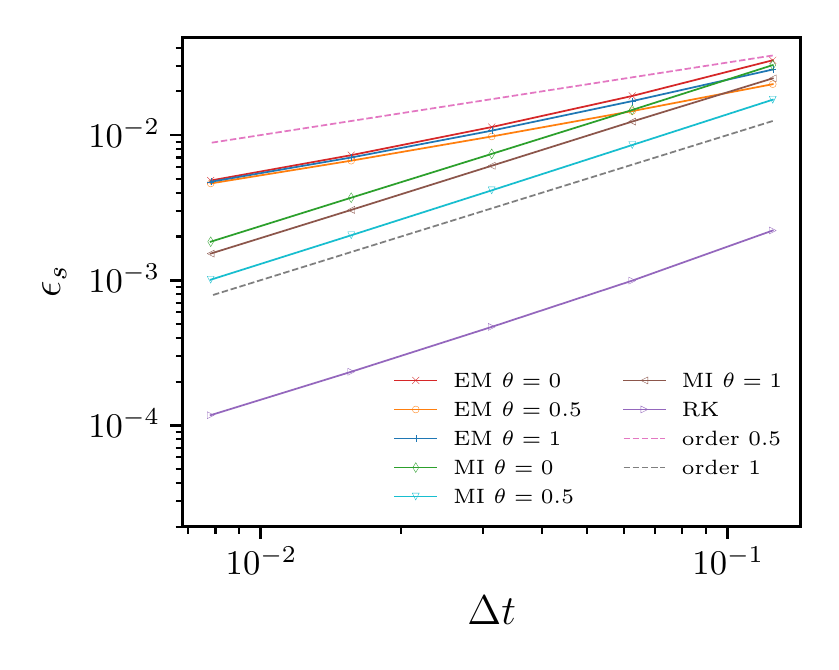}
		}
		\subfloat[]{ \protect\protect\includegraphics[scale=0.8]{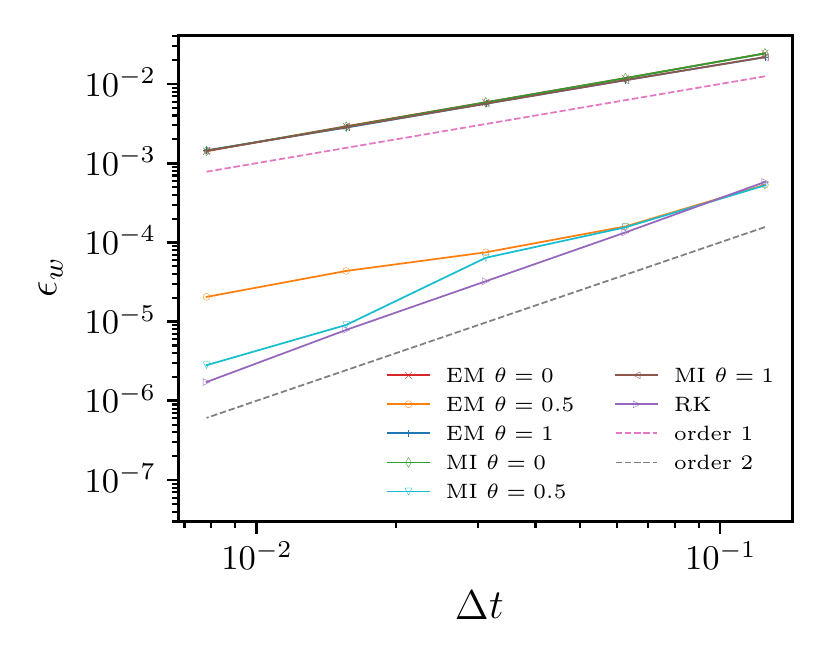}
		}
		
		\subfloat[]{
			\protect\protect\includegraphics[scale=0.8]{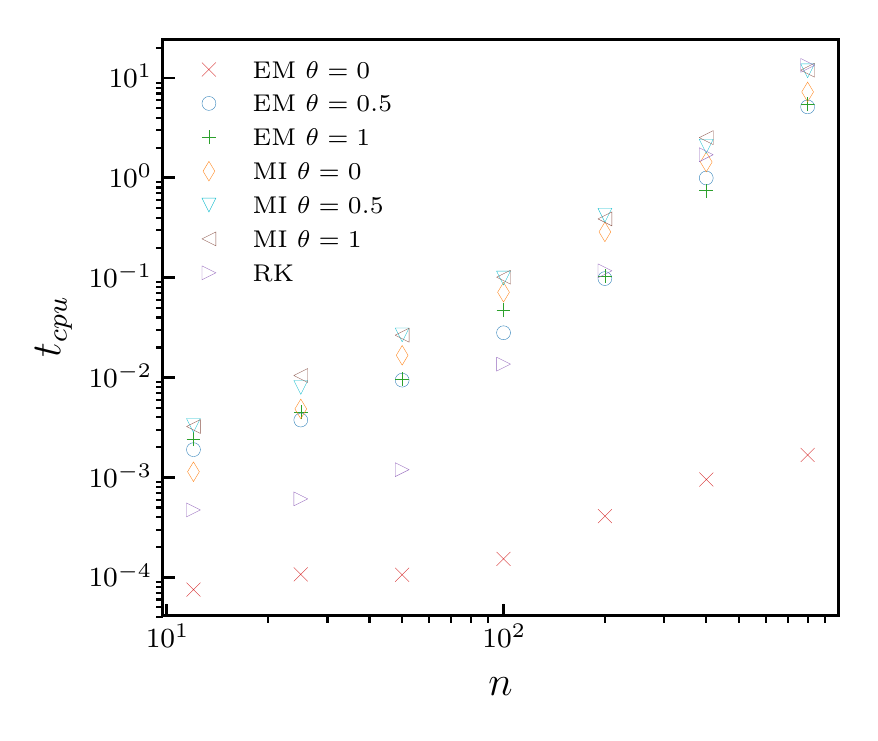}
		}
	\end{center}
	\caption{ \label{fig:cpu_time} Strong (a) and weak (b) errors convergence for geometric Brownian motion.
		In (c) we report the cpu time for each time integration schemes as a function
		of the number of cells $n$. EM: Euler-Maruyama, MI: Milstein, RK:
		Runge-Kutta. Explicit ($\theta=0$), semi-implicit ($\theta=0.5$) and implicit
		($\theta=1$). }
\end{figure}

\subsection{Positivity of the density through an adaptive time step}
\label{adaprive_time_step}

A natural constraint for physical systems is the positivity of the density
field, and the numerical solution is expected to satisfy such a requirement.
Numerical schemes with the property of preserving the positivity of the
density have been developed in the literature, specially in the context of
deterministic conservation law
PDEs~\cite{carrillo2015,bessemoulin2012finite}. The strategy is usually to
derive a Courant-Friedrichs-Lewy (CFL) condition which imposes a constraint
for the maximum $\Delta t$ so that the density always remains non-negative.

For particular discretizations of the numerical fluxes \eqref{eq:fluxes} it
is also possible to derive a CFL condition for the SPDE
in~\eqref{eq:FDDFToverd}. This CFL condition depends on the Gaussian
distributions from the white noise \eqref{eq:gaussian}, as well as on the
density profile. Following the derivation provided in \cite{carrillo2015} for
a deterministic gradient-flow equation, we proceed to provide an example of
the CFL derivation when the upwind discretizations \eqref{eq:deterflux} and
\eqref{eq:stoflux} for the deterministic and stochastic fluxes, respectively,
are employed.

\begin{lemma}
	Consider the SPDE \eqref{eq:FDDFToverd} with initial data $\rho_0(x)>0$,
	together with the semi-discrete finite-volume scheme \eqref{eq:fvfluxes} with
	the upwind discretizations for the deterministic
	\eqref{eq:deterflux}-\eqref{eq:discretevel} and stochastic
	\eqref{eq:gaussian}, \eqref{eq:stoflux} fluxes. Assume that the SPDE  is
	temporally discretized with a deterministic Euler forward method. Then, the
	computed cell averages satisfy $\overline{\rho}_j\geq 0$, $\forall j$,
	provided that the following two CFL conditions for $\Delta t$ hold:
	\begin{equation}\label{eq:positcond}
	\frac{1}{2}-\lambda_1 u^+_{j+\frac{1}{2}}- \lambda_2 \gau^+_{j+\frac{1}{2}}/\sqrt{\rho_j^E \beta}\geq 0,\quad \frac{1}{2}-\lambda_1 u^-_{j-\frac{1}{2}}-\lambda_2 \gau^-_{j-\frac{1}{2}}/\sqrt{\rho_j^W \beta}\geq 0,
	\end{equation}
	where \begin{equation}\label{eq:positcond2}
	\lambda_1\coloneqq\frac{\Delta t}{\Delta x}, \quad \lambda_2\coloneqq\sqrt{\frac{\Delta t}{\Delta x}}, \quad  \overline{\gau}_j=\overline{\mathcal{W}}_j \sqrt{\Delta x \Delta t}=\mathcal{N}(0,1),
	\end{equation}
	and $\gau_{j+1/2}^+$, $\gau_{j+1/2}^-$ are constructed as in \eqref{eq:noipm}, so that
	\begin{align}\label{eq:gaupm}
	&\gau_{j+1/2}^+ = \max\left( \gau_{j+1/2},0 \right),\quad  \gau_{j+1/2}^- = \min \left( \gau_{j+1/2},0 \right).
	\end{align}	
\end{lemma}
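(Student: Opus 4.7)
The plan is to propagate the forward-Euler update one step and exploit the upwind structure of both fluxes to express $\overline{\rho}_j^{n+1}$ as a sum whose individual summands are manifestly non-negative under the stated CFL restrictions. Throughout, I would argue by induction on the time step, assuming that every cell average $\overline{\rho}_j$ is non-negative at time $t$; the minmod reconstruction from \cite{carrillo2015} then guarantees that $\rho_j^E,\rho_j^W\geq 0$ and $\overline{\rho}_j=(\rho_j^E+\rho_j^W)/2$, which is the identity that kicks off the whole calculation.

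First I would substitute $\overline{\rho}_j=(\rho_j^E+\rho_j^W)/2$ into the Euler update, insert the upwind flux definitions \eqref{eq:deterflux} and \eqref{eq:stoflux} at both interfaces $x_{j\pm 1/2}$, and replace each $\overline{\noi}$ by $\overline{\gau}/\sqrt{\Delta x\,\Delta t}$ so that the stochastic contributions acquire the natural prefactor $\lambda_2=\sqrt{\Delta t/\Delta x}$. After collecting like terms, $\overline{\rho}_j^{n+1}$ decomposes into four blocks: two \emph{local} blocks involving $\rho_j^E$ and $\rho_j^W$ with coefficients built from $u^+_{j+1/2},\gau^+_{j+1/2}$ and $u^-_{j-1/2},\gau^-_{j-1/2}$ respectively, and two \emph{transport} blocks involving the neighbor reconstructions $\rho_{j+1}^W$ and $\rho_{j-1}^E$ with coefficients built from $u^-_{j+1/2},\gau^-_{j+1/2}$ and $u^+_{j-1/2},\gau^+_{j-1/2}$.

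The two transport blocks are non-negative under the induction hypothesis because $u^+\geq 0$, $u^-\leq 0$, $\gau^+\geq 0$, $\gau^-\leq 0$ by construction and the neighbor reconstructions are non-negative, so they contribute only a positive surplus. The remaining task is to show non-negativity of each local block. For the east block I would factor $\sqrt{\rho_j^E}$ out of the expression $\rho_j^E\bigl[\tfrac12-\lambda_1 u^+_{j+1/2}\bigr]-\lambda_2\sqrt{\rho_j^E/\beta}\,\gau^+_{j+1/2}$, rewriting it as $\sqrt{\rho_j^E}$ multiplied by $\sqrt{\rho_j^E}\bigl[\tfrac12-\lambda_1 u^+_{j+1/2}\bigr]-\lambda_2\gau^+_{j+1/2}/\sqrt{\beta}$, which upon dividing through by $\sqrt{\rho_j^E}>0$ is exactly the first CFL inequality in \eqref{eq:positcond}. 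The west block is handled symmetrically and yields the second inequality.

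The main technical obstacle comes from the multiplicative noise. In the purely deterministic case \cite{carrillo2015} the coefficient in front of $\rho_j^E$ is a bounded function of the data, so a uniform CFL restriction on $\Delta t$ suffices. Here, however, $\gau^+_{j+1/2}$ is an unbounded Gaussian, so no deterministic choice of $\Delta t$ can satisfy \eqref{eq:positcond} with probability one at every interface. The factor-$\sqrt{\rho_j^E}$ rearrangement above exposes this clearly: the admissible $\Delta t$ shrinks like $\rho_j^E\beta/(\gau^+_{j+1/2})^2$ near cells where $\rho_j^E$ is small, which is precisely what motivates the adaptive time step and Brownian-bridge refinement strategy developed in the remainder of Sect.~\ref{adaprive_time_step}.
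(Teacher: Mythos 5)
Your proposal is correct and follows essentially the same route as the paper's proof: the same splitting of $\overline{\rho}_j(t+\Delta t)$ into two non-negative transport blocks (carrying $\rho_{j+1}^W$, $\rho_{j-1}^E$ with the signed velocities and noises) and two local blocks whose coefficients are exactly the CFL brackets in \eqref{eq:positcond}, starting from the identity $\overline{\rho}_j=\tfrac{1}{2}(\rho_j^E+\rho_j^W)$ and the non-negativity of the minmod reconstruction. Your closing remark on the unboundedness of the Gaussian forcing an adaptive $\Delta t$ matches the discussion the paper gives immediately after the lemma.
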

\begin{proof} Assume that for a given time $t$ the computed solution for the density is known and positive: $\rho_j(t)\geq0$, $\forall j$.
	The new cell averages following a forward Euler temporal scheme in the
	finite-volume formulation \eqref{eq:fvfluxes} satisfy
	\begin{equation}
	\overline{\rho}_j(t+\Delta t)=\overline{\rho}_j(t)-\Delta t \left[\frac{F_{d,j+1/2}-F_{d,j-1/2}}{\Delta x}+\frac{F_{s,j+1/2}-F_{s,j-1/2}}{\Delta x}\right].
	\end{equation}
	Substituting the deterministic and stochastic fluxes for their
	upwinded discretizations \eqref{eq:deterflux} and \eqref{eq:stoflux},
	respectively, and by employing the notation specified in
	\eqref{eq:positcond2}, it follows
	\begin{equation}\label{eq:positcond3}
	\begin{split}
	\overline{\rho}_j (t+\Delta t) = &\frac{1}{2} (\rho_j^E+\rho_j^W)-\lambda_1 \left[u^+_{j+\frac{1}{2}}\rho_j^E+u^-_{j+\frac{1}{2}}\rho_{j+1}^W-u^+_{j-\frac{1}{2}}\rho_{j-1}^E-u^-_{j-\frac{1}{2}}\rho_{j}^W\right]\\
	&-\lambda_2 \left[\gau^+_{j+\frac{1}{2}}\sqrt{\rho_j^E/\beta}+\gau^-_{j+\frac{1}{2}}\sqrt{\rho_{j+1}^W/\beta}-\gau^+_{j-\frac{1}{2}}\sqrt{\rho_{j-1}^E/\beta}-\gau^-_{j-\frac{1}{2}}\sqrt{\rho_{j}^W/\beta}\right]\\
	= &\lambda_1 \left[-u^-_{j+\frac{1}{2}}\rho_{j+1}^W + u^+_{j-\frac{1}{2}}\rho_{j-1}^E\right ]+\lambda_2 \left[-\gau^-_{j+\frac{1}{2}}\sqrt{\rho_{j+1}^W/\beta}+\gau^+_{j-\frac{1}{2}}\sqrt{\rho_{j-1}^E/\beta} \right ]\\
	& +\left[\frac{1}{2}-\lambda_1 u^+_{j+\frac{1}{2}}- \lambda_2 \gau^+_{j+\frac{1}{2}}/\sqrt{\rho_j^E \beta}\right]\rho_j^E+\left[\frac{1}{2}-\lambda_1 u^-_{j-\frac{1}{2}}-\lambda_2 \gau^-_{j-\frac{1}{2}}/\sqrt{\rho_j^W \beta}\right]\rho_j^W.
	\end{split}
	\end{equation}
	Due to the fact that the reconstructed point values for the density
	$\rho_{j-1}^E$, $\rho_{j+1}^E$, $\rho_{j}^W$ and $\rho_{j+1}^W$ are
	non-negative,  and bearing in mind that $u^+_{j-\frac{1}{2}}$,
	$\gau^+_{j-\frac{1}{2}}\leq 0$ and $u^-_{j+\frac{1}{2}}$,
	$\gau^-_{j+\frac{1}{2}}\geq 0$ due to \eqref{eq:velpm} and \eqref{eq:gaupm},
	it follows $\overline{\rho}_j (t+\Delta t)\geq 0$, $\forall j$, provided
	that the CFL conditions \eqref{eq:positcond} hold.
\end{proof}

The CFL conditions in \eqref{eq:positcond} ensure that the density remains
non-negative at all times, independently of the values produced by the normal
distributions of the white noise spatio-temporal average \eqref{eq:gaussian}.
In the case of a rare event in which the Gaussian distribution produces
low-probability values located at the tails of the distribution, $\Delta t$
would be adapted accordingly to ensure the positivity.
However, this adaptive time step strategy entails two main disadvantages.
First, it requires at each time step the solution of a second-order equation (in
1D) or a two-parameter equation in multi-dimensional problems.
Second, since the time-step size is dependent on the random number at each
step, higher (or lower) $\Delta t$ may be favored by some random numbers,
thus not guaranteeing that the correct Brownian path is
followed \cite{gaines1997variable}.

Previous works in the literature have already addressed the issue of positivity by means of varied approaches.
In the context of FH, the authors of Ref.~\cite{kim2017} have effectively
opted for introducing cutting functions based on smoothed Heavisides 
which prevent the density from becoming negative.
The main drawbacks of this strategy are 1) despite reducing the chances
of having negative density values, positive densities are not guaranteed, and
2) it affects the density distribution.

A further alternative to preserve positivity lays in the concept of Brownian
trees, which were firstly introduced in \cite{gaines1997variable} in order to
address the numerical resolution of stochastic differential equations with
variable time steps. The key idea here is that it is vital to respect the Brownian path that is
formed after evaluating the normal distributions \eqref{eq:gaussian}.
This means that upon advancing our simulation from time $t$ a certain $\Delta
t_1$ and realising that the density in one of the nodes $j$ has become
negative we cannot just simply repeat the time step with a shorter $\Delta
t_2<\Delta t_1$ in order to maintain positivity. The values of the normal
distributions after the first trial of advancing $\Delta t_1$ have to be
respected if the Brownian path is to be preserved. In addition, those values
of the normal distributions at $t+\Delta t_1$ have to be employed when
computing the values at $t+\Delta t_2$, even if the jump from $t$ to
$t+\Delta t_1$ has produced negative densities.

The solution to effectively take the statistical information at $t +\Delta t$
into account when repeating the time step is the so-called Brownian bridge
\cite{sotiropoulos2008adaptive,lutsko2015two}. It allows the computation of
$\overline{\noi}_j$ in Eq.~\eqref{eq:stoflux0} at an intermediate time step
$t +\Delta t/2$ by means of the formula
\begin{align}\label{eq:brownian}
\overline{\noi}_j\left(t+ \frac{\Delta t}{2}\right)-\overline{\noi}_j(t) = \frac{\overline{\noi}_j\left(t+ \Delta t\right)-\overline{\noi}_j\left(t\right)}{2} + \mathcal{N}\left(0, \frac{\Delta t}{4} \right).
\end{align}
As a result, our tactic consists in initially selecting an adequately small
$\Delta t$. Then, if after some time the density becomes negative, $\Delta t$
is halved to compute the intermediate time step from the Brownian
bridge \eqref{eq:brownian}. If that intermediate state leads to further
negative densities, the Brownian bridge is applied as many times as needed.
The information at $t +\Delta t$ is saved to be employed once all the
intermediate time steps with non-negative densities are computed. A pseudocode
to implement the Brownian bridge is written in Algorithm \ref{alg.mainLoop}.
As a remark, the adequate choice of a small initial $\Delta t$ for the
simulation is essential to reduce the number of Brownian bridges to a
minimum. A compromise is of course needed, since an extremely small $\Delta
t$ does not lead to negative densities but requires a high computational cost
for the simulation.

\begin{algorithm}[H]
	\tiny
	\KwIn{$\rho(t)$}
	\KwOut{$\rho(t+\Delta t)$}
	NegativeDensity=True\;
	$\Delta t= \Delta t_0$\;
	$partitions=0$\;
	
	\While{$(\text{NegativeDensity==True})$}{
		NegativeDensity=False\;
		$\rho_{tmp}=\rho$\;
		
		\For{$i \gets 0$ \textbf{to} $2^{partitions}$} {
			compute {\bf Brownian bridge}\;
			update $\rho_{tmp}$\;
			
			\If{$(\text{any}(\rho_{tmp})<0)$}{
				NegativeDensity=True\;
			}
		}	
		$\Delta t \gets \Delta t/2$ \;
		$partitions \gets partitions+1$\;
	}
	$\rho(t+\Delta t)  \gets \rho_{tmp}$\;
	\Return{$\rho(t+\Delta t)$}\;
	\caption{Algorithm adopted to overcome the issue of negative density. It is based on an adaptive timestep combined with the Brownian bridge technique,
		that allows to preserve the properties of the probability distribution underlying the stochastic process.}
	\label{alg.mainLoop}
\end{algorithm}

\subsection{Boundary conditions}\label{subsec:boundary}
In this section we analyse the implementation of boundary conditions for the
cases of periodic, confined and open systems. For systems with a periodic
boundary, it is sufficient to impose
\begin{align}
\rho_0= \rho_N.
\end{align}

For no-flux conditions, the boundary conditions to impose on the fluxes are
\begin{align}
F_{j\pm 1/2} = 0 \quad \text{for}\quad j=0,\,N.
\end{align}

Open systems in thermal and chemical equilibrium with a reservoir can be
represented by a $\mu VT$ ensemble with constant grand potential $\Omega
[\rho]= \mathcal{E}[\rho] - \mu \int \rho dx$, where $\mathcal{E}[\rho]=\mathcal{F}[\rho] + \int V(x)\, \rho dx$ with $\mathcal{F}[\rho]$ being the Helmholtz free-energy functional, $V(x)$ the external potential acting on the system and $\mu$
the chemical potential. Using the fact that the functional derivative of
$\Omega$ with respect to $\rho$ is null in equilibrium, we obtain $\delta
\mathcal{E}[\rho]/\delta \rho=
\mu$. Since the system is assumed to be in contact with a reservoir at
temperature $T_{\text{res}}$ and chemical potential $\mu_{\text{res}}$, the corresponding boundary
condition applied to compute the velocities $u_{j+1/2}$ at the boundaries in  \eqref{eq:vel} is
\begin{align}\label{eq:varfreebound}
\left(\frac{\delta \mathcal{E}}{\delta \rho}\right)_0=\left(\frac{\delta \mathcal{E}}{\delta \rho}\right)_N= \mu_{\text{res}},
\end{align}
with $\mu_{\text{res}}$ being the chemical potential of the reservoir. From the value of $\mu_{\text{res}}$ one can compute the density by solving \eqref{eq:varfreesimple} for a fixed value of $\delta \mathcal{E}/\delta \rho$. This implies that the values of $\rho_0$, $\rho_N$ and any additional ghost node are imposed from \eqref{eq:varfreebound} for all times. Depending on the particular choice of free energy in \eqref{eq:freesimple}, it may be possible to converge to different density profiles depending on the initial condition for the iterative algorithm to solve \eqref{eq:varfreesimple}. This open boundary condition imposes a positive or negative flux of mass through the boundary, and as a result the total mass is not conserved in time.

%
%
%
%

\begin{figure}[t!]
	\begin{center}
		\protect\protect\includegraphics[scale=0.48]{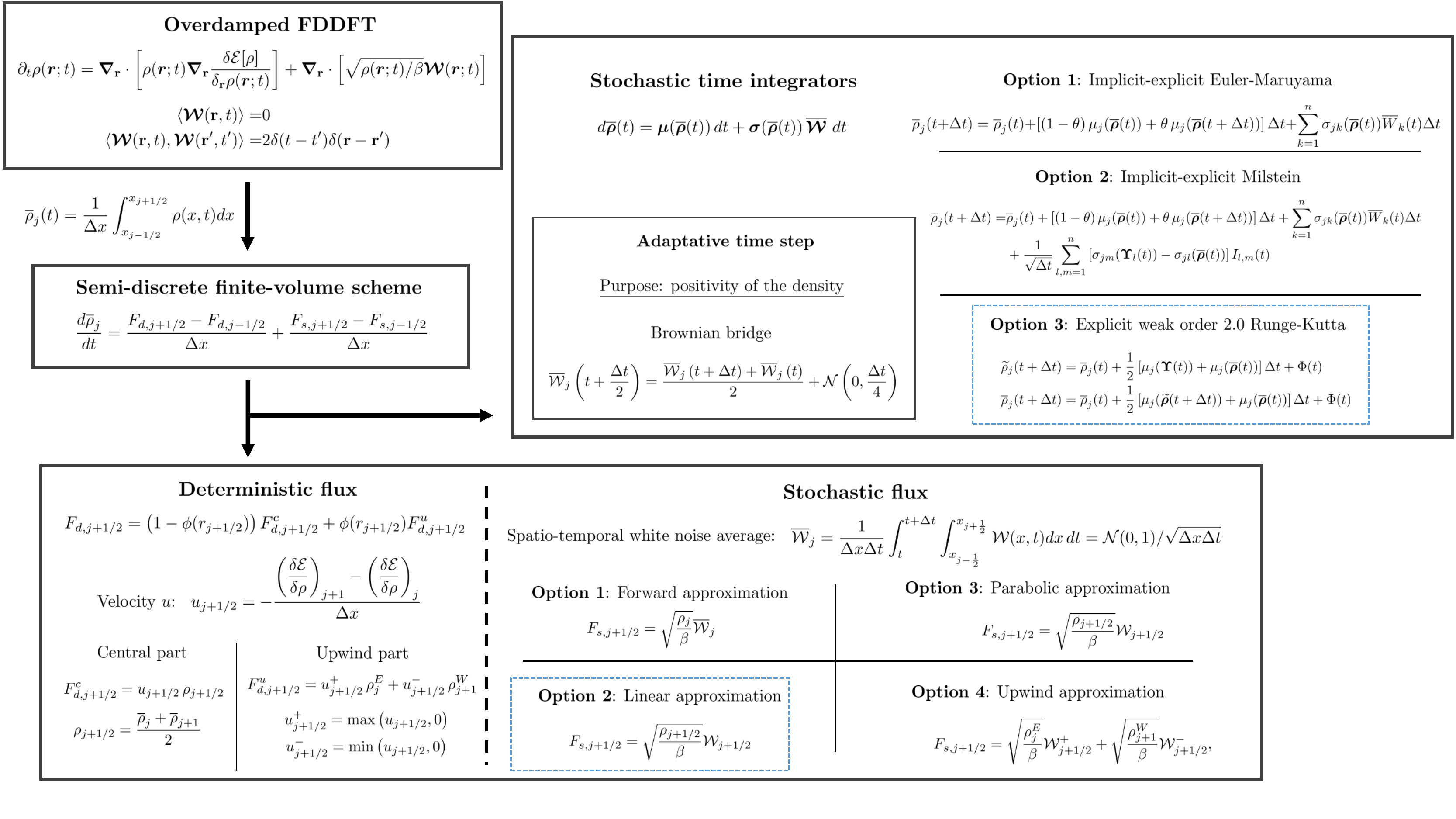}
	\end{center}
	\protect\protect\caption{\label{fig:chart}
		Flowchart reporting the main equations adopted to solve the overdamped FDDFT with a finite-volume approach.
		Arrows denote the connections among the different steps.
		Dashed boxes denote the options shown to provide the best accuracy and efficiency to simulate our SPDE \eqref{eq:FDDFTflux}, as explained in Sect.~\ref{numerical_applications_eq}.
		%
		%
	}
\end{figure}

\section{Numerical applications}
\label{numerical_applications_eq}

In this section we provide tests of the numerical schemes developed in
Sect.~\ref{sec:nummet}. Initially, in subsection \ref{subs:equil} we conduct
a simulation with a purely-diffusive ideal-gas free energy with noise and
without external fields or interparticle potentials.  There are several
theoretical results for such systems~\cite{bell2007numerical,kim2017}
allowing us to benchmark the statistical correlation and the structure factor
from our numerical schemes. Further validation of the schemes will be offered
via comparison with our own MD simulations. The results of these tests show
that the Runge-Kutta temporal integrator \eqref{eq:RK2} and the linear
approximation of the stochastic flux \eqref{eq:linear}-\eqref{eq:linear2}
accomplish the best accuracy and efficiency to simulate our SPDE
\eqref{eq:FDDFTflux}. This choice is maintained in the examples that follow.

Secondly, we provide a simulation for an ideal gas with a local confining external potential $V(x)$, in order to test the mean and variance of the density, the spatial correlation and the decay of the discrete free energy in time.

The simulations of ideal gases are also compared with results from MD
simulations using the software LAMMPS \cite{plimpton1995fast}.

\subsection{Ideal-gas system in equilibrium}\label{subs:equil}

Consider the SPDE in \eqref{eq:FDDFToverd} without any external or interaction potential ($V(x)=W(x)=0$) and applied to the classical ideal-gas free energy
\begin{equation}
\mathcal{E}[\rho]= \beta^{-1} \int \rho \left(\ln (\rho)-1\right) dx,
\end{equation}
leading to a diffusive equation with multiplicative noise of the form
\begin{equation}\label{eq:diffeq}
\partial_t \rho = \Delta \rho/\beta  + \boldsymbol{\nabla} \cdot \left[ \sqrt{ \rho/ \beta} \noi(x,\rho) \right].
\end{equation}
The initial density profile is taken as the equilibrium one, with a constant
value in all cells of $\rho_j=0.5$ and a total number of particles of
$N=1000$ for the MD simulation. The mean density profile $\overline{\rho}$
at any time, taken as the mean of the density ensemble averages at every
cell, is expected to remain as $\overline{\rho}\approx 0.5$ throughout the
simulation due to the equilibrium state. The number of cells in the domain is
$n=40$, the cell size is $\Delta x=50$ and the time step is computed as
$\Delta t=0.1 \Delta x^2$ (selected as in \cite{kim2017}), the number of
trajectories is $100$, and the number of time steps is $2000$, unless
otherwise stated. The boundary conditions are periodic
and the parameter $\beta$ is fixed at $\beta=1$.

The objective is to evaluate how the different numerical
methods perform by focusing on four different statistical properties at
equilibrium: variance, spatial correlation, time correlation and structural
factor. These tests are usually employed in the literature for the validation
of stochastic numerical schemes for
FH~\cite{bell2007numerical,kim2017,voulgarakis2009bridging}. The advantage of
testing these properties at equilibrium is that their exact theoretical
values are known and can be used for comparison purposes.
Density fluctuations of a system with fixed volume $V$ can be computed as
$\langle \delta \rho^2 \rangle = \rho^2 \langle \delta N^2 \rangle / N^2$,
where $N$ and $\langle \delta N^2 \rangle$ denote average and variance of the
number of particles in $V$, respectively.
As shown in Ref.~\cite{landau1980statistical}, the variance is given by:
\begin{align}
\langle \delta N^2 \rangle = -T \frac{\bar{N}^2}{V^2} \left(\frac{\partial V}{\partial p} \right)_T
\end{align}
where $T$ and $p$ are the temperature and pressure of the system,
respectively.
Employing the equation of state (in reduced units) for an ideal gas, $pV=NT$,
we obtain $\langle \delta N^2 \rangle = N$.
In the case of infinite systems, the fluctuations of an ideal gas are
spatially uncorrelated, namely $\langle \delta \rho_i(t) \delta \rho_j(t)
\rangle = \langle \delta \rho^2 \rangle \delta_{ij}^K$.
However, for finite systems the constraint on conservation of mass introduces correlations~\cite{bell2007numerical}.
Expressing the space correlations of density fluctuations as $\langle \delta
\rho_i(t) \delta \rho_j(t) \rangle = A \delta_{ij}^K + B$, conservation
of mass dictates $\sum_i \langle \delta \rho_i(t) \delta \rho_j(t) \rangle =
0$, which corresponds to the constraint $B= -A/n$, with $n$ being the total
number of cells.
Moreover, in the limit $n \mapsto \infty $ the fluctuations for an infinite system have to be recovered, thus $A=\langle \delta \rho^2 \rangle$.
It follows that the spacial correlation for the closed system can be expressed as:
\begin{align}
\langle \delta \rho_i(t) \delta \rho_j(t) \rangle = \langle \delta \rho^2 \rangle \left( \delta_{ij}^K - \frac{1}{n} \right) .
\label{eq:correlation}
\end{align}

The expression for the variance allows us also to obtain a quick estimation
of the minimum cell size for which, due to thermal fluctuations, negative
density values are likely to occur.
The expected value of the density fluctuations for an ideal gas can be
expressed through its standard deviation $\sqrt{ \langle \delta \rho^2 \rangle} =
\sqrt{ \rho / \Delta V}$.
Thus, with a confidence of $99.7\%$, the maximum values of the density
fluctuations will be $\sqrt{ \langle \delta \rho^2 \rangle} \mid_{\max} \sim
3 \sqrt{\frac{\rho}{ \Delta V}}$.
It follows that the noise fluctuations give negative density values with a probability higher than $0.3\%$ when the following condition
is verified:
\begin{align}
3 \sqrt{\frac{\rho}{ \Delta V}} \gtrsim \rho \quad \text{or, equivalently} \quad \Delta V \lesssim  \frac{3}{\sqrt{\rho}}
\end{align}

In subsubsection \ref{sss:choice} we offer a discussion on the accuracy and efficiency of the temporal integrator and spatial discretization, using the results from the four tests and the computational cost. The justified choices, which are the Runge-Kutta temporal integrator \eqref{eq:RK2} and linear approximation of the stochastic flux \eqref{eq:linear}-\eqref{eq:linear2}, are employed during the four tests, in the sense that the Runge-Kutta temporal integrator is employed when evaluating the different spatial discretizations, and the linear approximation of the stochastic flux is employed when evaluating the temporal integrators.

\subsubsection{Standard deviation}\label{subsub:std}

For this test we aim to evaluate how the standard deviation of the density
varies depending on the number of particles per cell $N_c$. We keep the total length
	and the total number of particles in the domain
	as constant, and  we only vary the number of particles per cell
	by enlarging or shortening the cell size $\Delta x$. Consequently this analysis helps to elucidate how changing the finite-volume lattice size affects the numerical statistical properties . The mean
density of the profile is $\overline{\rho}=N/(n\Delta x)$.

As shown above, the theoretical standard deviation of the diffusion SPDE \eqref{eq:diffeq} applied in finite systems in equilibrium satisfies
\begin{equation}\label{eq:theoretstd}
\sqrt{ \langle \delta \rho^2 \rangle}_{\text{theory}}= \frac{\langle \rho \rangle }{\sqrt{N_c}} \sqrt{ 1-\frac{1}{n} }.
\end{equation}
As a remark, in spite of the fact that $\sqrt{ \langle \delta \rho^2
	\rangle}_{\text{theory}}$ holds for all $N_c$, previous studies
\cite{donev2010,voulgarakis2009bridging} have pointed out that there should
be a minimum of 5-10 particles per cell to recover the microscopic
statistical properties by means of FH. This occurs because with such low
number of particles per cell the particle fluctuations are not accurately
modelled with the multiplicative noise in \eqref{eq:diffeq}.

The results of this study are displayed in Fig.~\ref{fig:std}, depicting a
comparison of the temporal schemes (a) and spatial discretizations (b)
against the theoretical standard deviation \eqref{eq:theoretstd} and the one
computed from MD. It is evident from both plots that all schemes
approximate correctly the standard deviation for $N_c>5$. Below this number
of particles per cell, the standard deviations deviate from the expected ones.
This result chimes in with the minimum number of 5-10 particles per cell
necessary to recover the statistical properties in FH.

There are no remarkable differences between the temporal integrators or spatial discretizations for the stochastic flux.

\begin{figure}[ht!]
\begin{center}
\subfloat[]{\protect\protect\includegraphics[scale=0.9,trim=0 15 0 0]{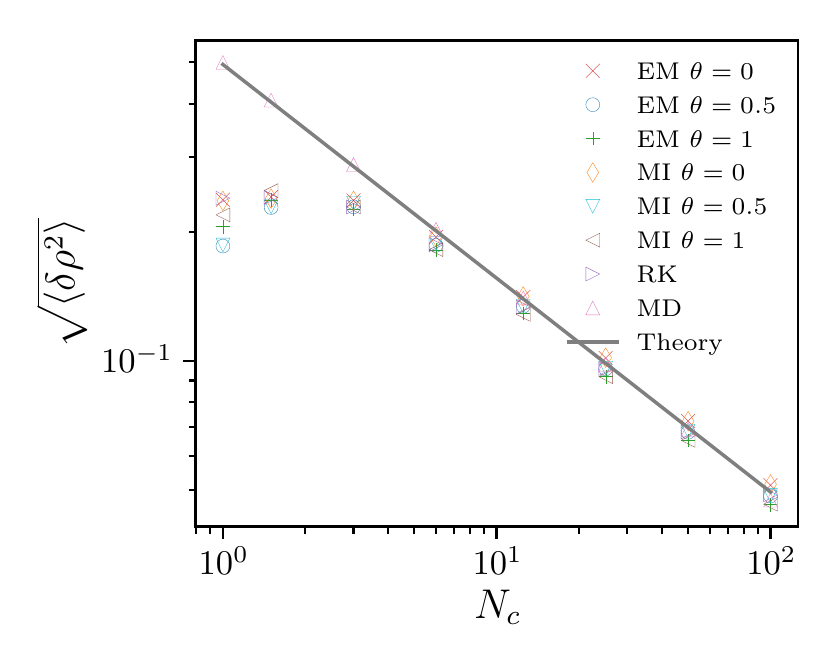}
}
\subfloat[]{\protect\protect\includegraphics[scale=0.9,trim=0 15 0 0]{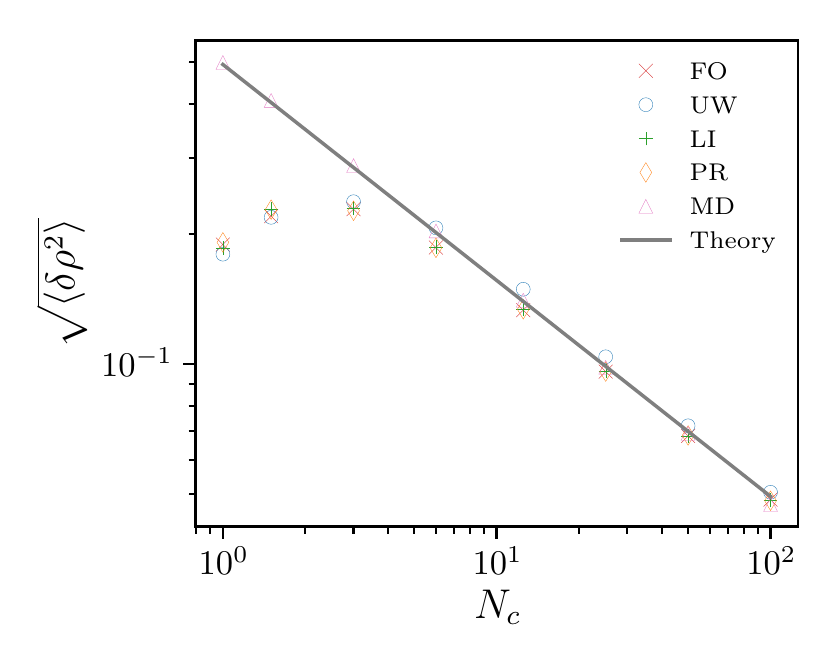}
}
\end{center}
\protect\protect\caption{\label{fig:std} Standard deviation $\sqrt{ \langle \delta \rho^2 \rangle}$ as a function of the number of particles per cell $N_c$ (or, equivalently, as a function of the cell size given that total volume and density of the system are kept constant throughout the simulations),
	for an ideal gas in equilibrium. (a) Temporal integrators. EM:
	Euler-Maruyama, MI: Milstein, RK: Runge-Kutta, MD: Molecular dynamics.
	Explicit ($\theta=0$), semi-implicit ($\theta=0.5$) and implicit
	($\theta=1$). (b) Spatial discretizations of the stochastic flux. FO: Forward
	\eqref{eq:forward}, UW: Upwind \eqref{eq:stoflux}-\eqref{eq:noipm}, LI:
	Linear \eqref{eq:linear}-\eqref{eq:linear2}, PR: Parabolic
	\eqref{eq:parabolic}-\eqref{eq:parabolic2}, Theory:
	Eq.~\ref{eq:theoretstd}.}
\end{figure}

\subsubsection{Time correlations}

The objective of this test is to measure the time correlation of the density
in one specific cell of the domain. The normalized time correlation function
is defined as
\begin{equation}\label{eq:ct}
C_T (t)=\frac{\langle \delta \rho_i(t) \delta \rho_i(0) \rangle}{\langle \delta \rho_i(0) \delta \rho_i(0) \rangle},
\end{equation}
where $\delta \overline{\rho}_i(t)=\overline{\rho}_i(t)-\overline \rho$. The
time correlation function expected to decay in time for any process in
equilibrium, including the diffusion equation \eqref{eq:diffeq}. Previous
studies \cite{bell2007numerical} have compared the numerical results with the
Fourier transform of the time correlation \eqref{eq:ct}, which is denoted as
the spectral density and for which there are explicit expressions available.
In spite of this, these exact expressions for the spectral density do not
take into account the finite-size effects in the numerical simulations,
leading to a lack of agreement in the results \cite{bell2007numerical}.

We have then decided to compare the results obtained from the numerical
schemes in Sect.~\ref{sec:nummet} with MD simulations only, which indeed take
into account the finite-size effects. The results
are displayed in Fig.~\ref{fig:timecor}. For all schemes we evidence a clear
decay in time of the time correlation. Concerning the temporal integrators,
the explicit ones ($\theta=0$) tend to be closer to the MD simulations initially, while the implicit ones ($\theta=1$) provide a
better approximation in the long-time regimes. With respect to the spatial
discretizations for the stochastic flux, the upwind one deviates the most
from MD, while the rest of them behave similarly.

\begin{figure}[ht!]
\begin{center}
\subfloat[]{\protect\protect\includegraphics[scale=0.9,trim=0 15 0 0]{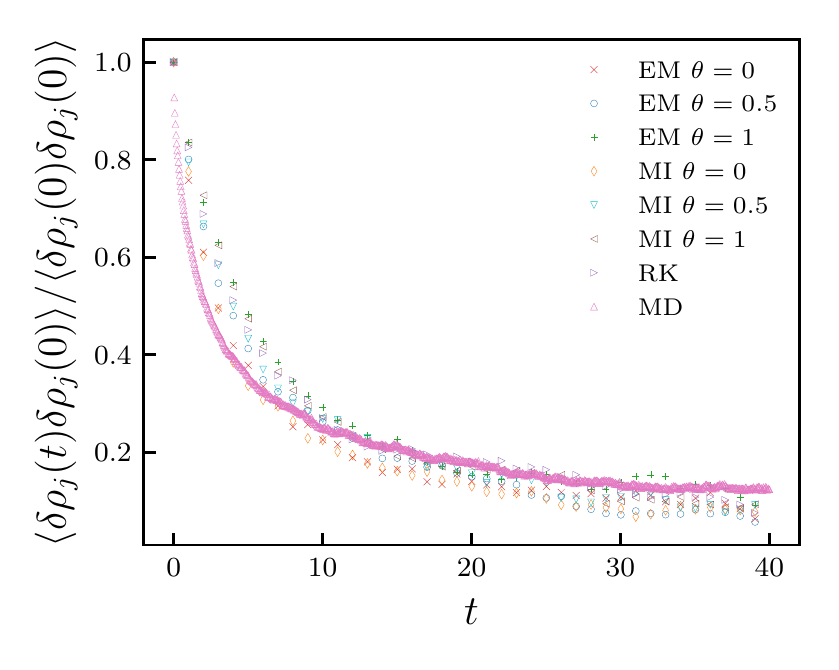}
}
\subfloat[]{\protect\protect\includegraphics[scale=0.9,trim=0 15 0 0]{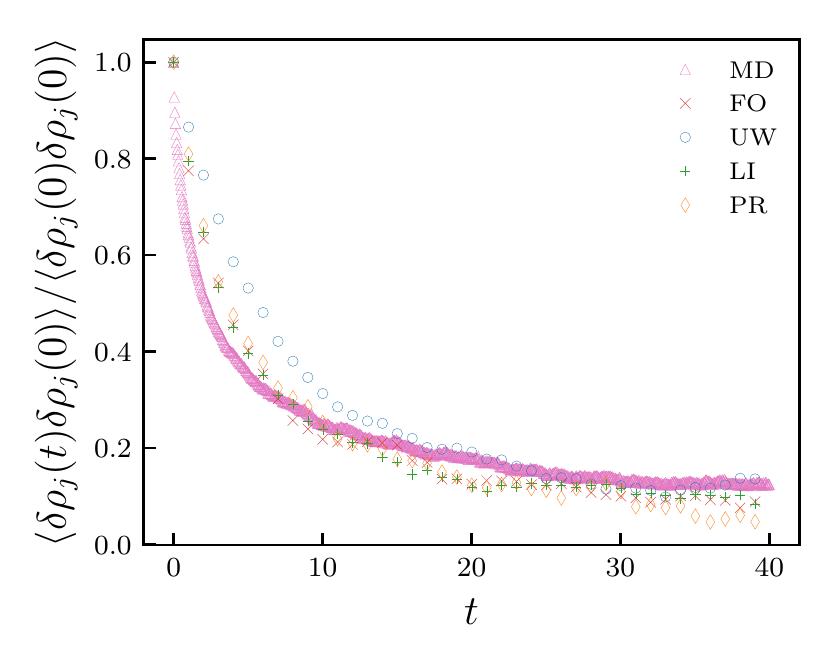}
}
\end{center}
\protect\protect\caption{\label{fig:timecor} Temporal decay of the normalized time correlation $C_T$,
	defined as in \eqref{eq:ct}, for an ideal-gas system in equilibrium (a)
	Temporal integrators. EM: Euler-Maruyama, MI: Milstein, RK: Runge-Kutta, MD:
	Molecular dynamics. Explicit ($\theta=0$), semi-implicit ($\theta=0.5$) and
	implicit ($\theta=1$). (b) Spatial discretizations of the stochastic flux.
	FO: Forward \eqref{eq:forward}, UW: Upwind
	\eqref{eq:stoflux}-\eqref{eq:noipm}, LI: Linear
	\eqref{eq:linear}-\eqref{eq:linear2}, PR: Parabolic
	\eqref{eq:parabolic}-\eqref{eq:parabolic2}.}
\end{figure}

\subsubsection{Spatial correlations}

This test seeks to evaluate whether the proposed numerical schemes in Sect.~\ref{sec:nummet} satisfy the exact spatial correlation for finite-size systems derived above:
\begin{align}\label{eq:csexact}
\langle \delta \rho_i(t) \delta \rho_j(t) \rangle = \frac{ \langle \rho \rangle }{\Delta x} \left( \delta_{ij} - \frac{1}{n} \right).
\end{align}
Contrary to the infinite-domain case where there are no spatial correlations
between adjacent cells, for the finite-size case there is an extra term $1/n$
which decreases as the number of cells $n$ increases.

The results of this test are depicted in Fig.~\ref{fig:spacecor}, with the normalized spatial correlation
\begin{equation}\label{eq:cs}
C_S (t)=\frac{\langle \delta \rho_i(t) \delta \rho_j(t) \rangle}{\langle \delta \rho_i(0) \delta \rho_i(0) \rangle}
\end{equation}
with $\delta \overline{\rho}_i(t)=\overline{\rho}_i(t)-\overline \rho$,
plotted for each of the numerical schemes, the MD simulations and the exact
expression \eqref{eq:csexact}. The main conclusion is that most of the
temporal integrators and spatial discretizations approximate adequately the
theoretical expression \eqref{eq:csexact}, as it is depicted in
Fig.~\ref{fig:spacecor}. The fully explicit and implicit Euler-Maruyama and
Milstein slightly deviate with respect to the theoretical spatial correlation
in the cells adjacent to the central cell, while the semi-implicit schemes
perform correctly.

\begin{figure}[ht!]
\begin{center}
\subfloat[]{\protect\protect\includegraphics[scale=0.9,trim=0 15 0 0]{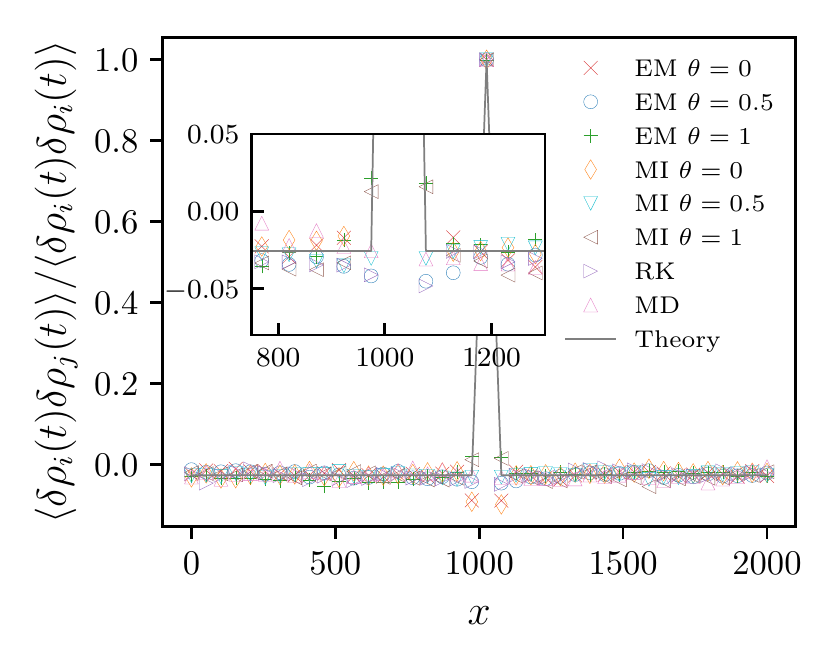}
}
\subfloat[]{\protect\protect\includegraphics[scale=0.9,trim=0 15 0 0]{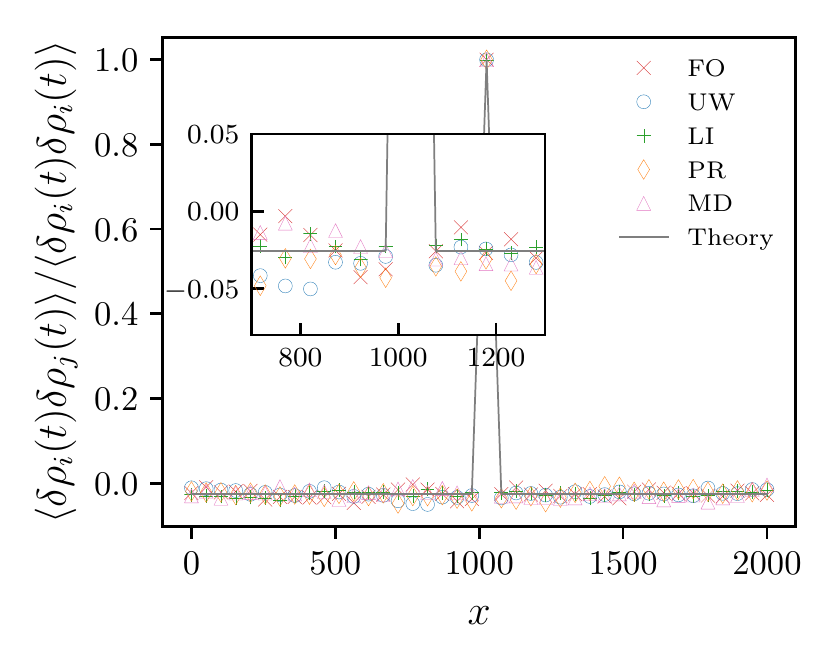}
}
\end{center}
\protect\protect\caption{\label{fig:spacecor} Normalized spatial correlation \eqref{eq:cs} for an ideal-gas system in equilibrium.
(a) Temporal integrators. EM: Euler-Maruyama, MI: Milstein, RK: Runge-Kutta,
MD: Molecular dynamics. Explicit ($\theta=0$), semi-implicit ($\theta=0.5$)
and implicit ($\theta=1$). (b) Spatial discretizations of the stochastic
flux. FO: Forward \eqref{eq:forward}, UW: Upwind
\eqref{eq:stoflux}-\eqref{eq:noipm}, LI: Linear
\eqref{eq:linear}-\eqref{eq:linear2}, PR: Parabolic
\eqref{eq:parabolic}-\eqref{eq:parabolic2}.}
\end{figure}

\subsubsection{Structure factor}

This test evaluates how the structure factor $S$ at equilibrium is
approximated by the temporal and spatial discretizations. Even though its
general form satisfies \eqref{eq:structurefgeneral}, its theoretical
expression for an ideal gas without external potential is given by
\eqref{eq:structuref}, so that for the current numerical simulation with
$\beta=1$ it follows that $S/\left<\rho\right>=1$.

The discrete structure factor is computed from Eqs~\eqref{eq:st1}-\eqref{eq:st2}. First the discrete spatial Fourier transform of the density satisfies:
\begin{equation}
\hat{\rho}(\lambda)=\frac{1}{n}\sum_j \overline{\rho}_j e^{-i \lambda x_j}.
\end{equation}
Subsequently, the structure factor follows from
\begin{equation}\label{eq:structurefdiscrete}
S(\lambda)=\frac{\left<\delta \hat{\rho}(\lambda)\, \delta \hat{\rho}^*(\lambda)\right>}{n \Delta x},
\end{equation}
where $\delta \hat{\rho}(\lambda)= \hat{\rho}(\lambda)-\left< \hat{\rho}(\lambda)\right>$ and $\hat{\rho}^*$ denotes the complex conjugate of $\hat{\rho}$.

The results of this test for the structure factor at equilibrium are depicted
in Fig.~\ref{fig:structure_factor}. The theoretical value of the structure
factor, along with the performed MD simulations, allows us to judge whether
the temporal integrators and spatial discretizations perform accurately. On
the one hand, from Fig.~\ref{fig:structure_factor}(a) it is evident how
the explicit Euler-Maruyama and Milstein temporal integrators overestimate
the structure factor for large $\lambda$, while their implicit versions
underestimate it for large $\lambda$ too. The semi-implicit schemes and the
Runge-Kutta behave correctly, and the small damping in the numerical
structure factor for all $\lambda$ is due to the choice of the hybrid
deterministic flux, as it was explained from
Fig.~\ref{fig:deterministic_flux}. On the other hand, from the spatial
discretizations of the stochastic flux there is a clear deviation when
applying the upwind form. In addition, the forward discretization seems to
slightly oscillate for lower $\lambda$. The rest of discretizations
approximate the theoretical value correctly, with the small damping already
mentioned.

\begin{figure}[ht!]
	\begin{center}
		\subfloat[]{\protect\protect\includegraphics[scale=0.9,trim=0 15 0 0]{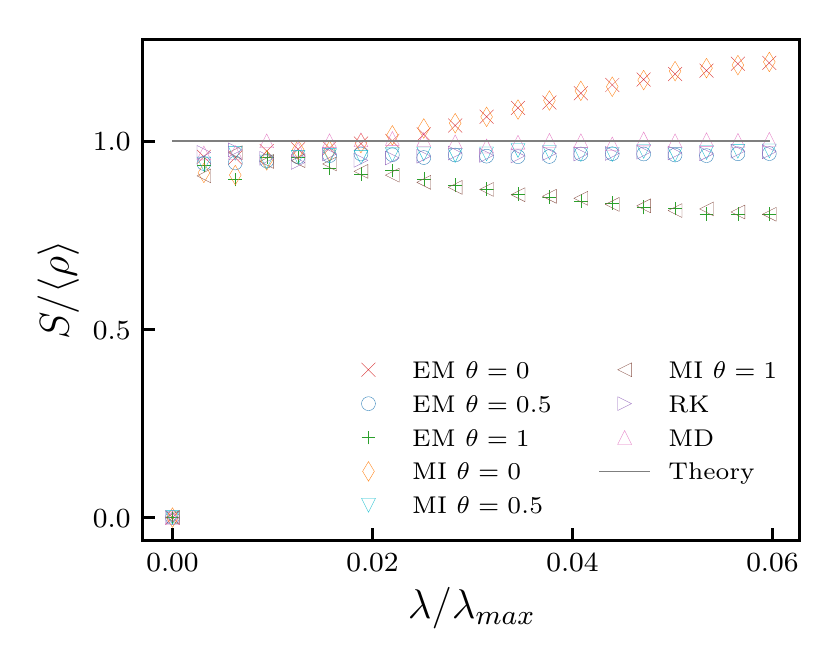}
		}
		\subfloat[]{\protect\protect\includegraphics[scale=0.9,trim=0 15 0 0]{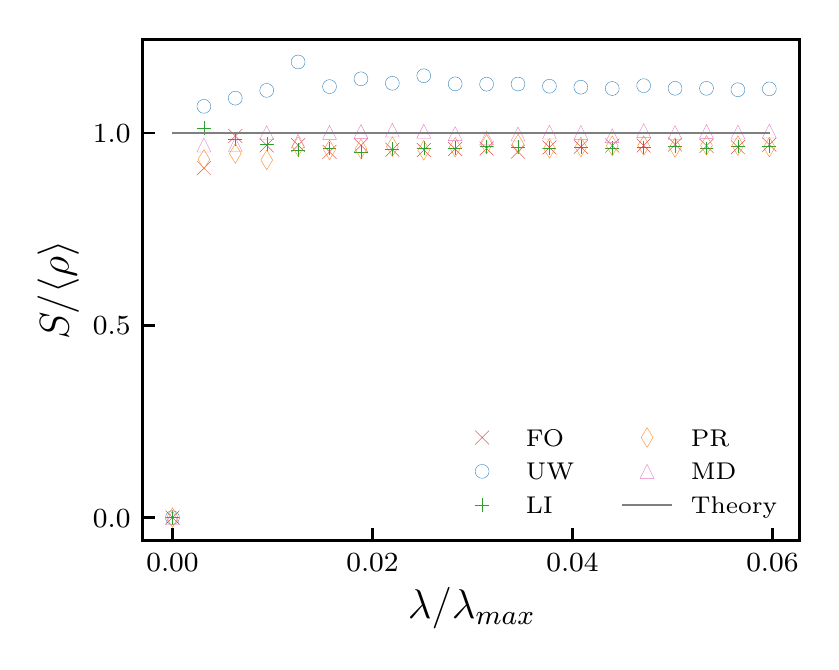}
		}
	\end{center}
		\protect\protect\caption{\label{fig:structure_factor} Structure factor \eqref{eq:structurefdiscrete} for an
		ideal-gas system in equilibrium. (a) Temporal integrators. EM:
		Euler-Maruyama, MI: Milstein, RK: Runge-Kutta, MD: Molecular dynamics.
		Explicit ($\theta=0$), semi-implicit ($\theta=0.5$) and implicit
		($\theta=1$). (b) Spatial discretizations of the stochastic flux. FO: Forward
		\eqref{eq:forward}, UW: Upwind \eqref{eq:stoflux}-\eqref{eq:noipm}, LI:
		Linear \eqref{eq:linear}-\eqref{eq:linear2}, PR: Parabolic
		\eqref{eq:parabolic}-\eqref{eq:parabolic2}.}
\end{figure}

\subsubsection{Temporal integrators and spatial discretization of the stochastic flux}\label{sss:choice}

With respect to the temporal integrators, both the fully explicit and
implicit Euler-Maruyama and Milstein present certain deviations in the time
correlation (Fig.~\ref{fig:timecor}), spatial correlation
(Fig.~\ref{fig:spacecor}) and structure factor
(Fig.~\ref{fig:structure_factor}). Their semi-implicit versions and the
Runge-Kutta behave similarly in all tests, and approximate adequately the
theoretical and MD results. Their relative costs are compared by means of
Fig.~\ref{fig:cpu_time}. While the cost of the Runge-Kutta escalates with order
$\mathcal{O}(n^{3})$, the cost of the semi-implicit Euler-Maruyama and
Milstein has an order of $\mathcal{O}(n^{2})$. However, due to the different
constant coefficient in the cost, the plot clearly shows that for $n<100$ the
Runge-Kutta cost is lower than that of  semi-implicit schemes, while for $n>100$
it is higher.

The Milstein scheme, which guarantees a higher strong order
		convergence, was tested because in previous works it performed well in
		conjunction with adaptive time-step algorithms based on Brownian trees
		\cite{lutsko2015two}.
		However, from the simulation results, we observed that the higher
		computational cost of this numerical method did not lead to a increased
		accuracy compared to the implicit Euler-Maruyama and to the weak Runge-Kutta
		schemes. 
Because of these reasons and together with the fact that in the
simulations of this work $n<100$, we select the Runge-Kutta temporal
integrator.

Concerning the spatial discretization of the stochastic flux, the upwind
choice does not approximate well the time correlation and structure factor,
while the forward approximation presents some deviation in the structure
factor for small $\lambda$. Hence the best choices are the linear and
parabolic approximations, which behave similarly in all test cases. We select
the linear approximation due to its lower cost since it only depends on the
density and white noise cell averages of two cells and not four.

\subsection{Ideal-gas system out of equilibrium}
For this example we consider a free energy which includes the effects of a double-well external potential, so that
\begin{align}\label{eq:freeenergyoutequil}
\mathcal{E}[\rho]  = \int \rho/\beta \left(\ln (\rho)-1\right) dx +\int V( x ) \rho\, d x,
\end{align}
and the shape of the external potential satisfies
\begin{equation}\label{eq:doublewell}
V(x)=5\left[\left(\frac{x}{n \Delta x /2}\right)^4-\left(\frac{x}{n \Delta x /2}\right)^2\right].
\end{equation}
Numerical simulations for deterministic gradient-flow equations with the free energies of the form \eqref{eq:freeenergyoutequil}-\eqref{eq:doublewell} have already been provided in \cite{carrillo2015}.
Here the objective is to evaluate how the numerical scheme in
Sect.~\ref{sec:nummet} for the FDDFT \eqref{eq:FDDFTflux} with the free
energy \eqref{eq:freeenergyoutequil}-\eqref{eq:doublewell} performs by
comparing with MD simulations. We also include a comparison with the
corresponding deterministic DDFT, which is obtained in the mean-field limit
(the most-likely path of FDDFT as noted in the Introduction).

The simulation is performed in a mesh where the number of cells is $n=40$,
each of them with width $\Delta x=5$. The time steps are $\Delta t=1$ and the
number of time steps is $n_t=2000$. The ensemble averages are computed from a
number of trajectories of $n_{traj}=1000$. We select $\beta=1$.
The MD simulation is performed by simulating $N=200$ particles, while the
deterministic DDFT simulation applies the numerical scheme in
\cite{carrillo2015} for gradient-flow equations.

The results are depicted in Fig.~\ref{fig:noneq_spacecor}.
Fig.~\ref{fig:noneq_spacecor} (a) displays the ensemble average of the
density profile at different times. The three simulations provide similar
results and we can conclude that the three approaches are comparable when
evaluating the ensemble average profile. Concerning the standard deviation
results in Fig.~\ref{fig:noneq_spacecor} (b), we find that FDDFT matches with
MD and the theoretical results in \eqref{eq:theoretstd}, while DDFT, being
deterministic, presents zero standard deviation. As already mentioned, the
FDDFT values of the standard deviation are slightly lower than the MD and
theoretical ones due to the choice of the deterministic flux in a similar
fashion to Figs.\ref{fig:deterministic_flux}-\ref{fig:std}.
Fig.~\ref{fig:noneq_spacecor} (c) shows the spatial correlation computed as
in Eqs~\eqref{eq:csexact}-\eqref{eq:cs}, with the MD and FDDFT results
approximating correctly the finite-size theoretical expression in
\eqref{eq:csexact}. DDTF does not have any spatial correlation due to the
lack of fluctuations. Finally, in \ref{fig:noneq_spacecor} (d) the temporal
evolution of the free-energy functional depending on the ensemble average
density is plotted. For the DDFT case one can appreciate that there is decay
at all times, while for MD and FDDFT there are short increases of the free
energy triggered by the fluctuations, in spite of the fact that during the
evolution there is a general decay in the free energy.

\begin{figure}[ht!]
	\begin{center}
		\subfloat[]{\protect\protect\includegraphics[scale=0.9,trim=0 15 0 0]{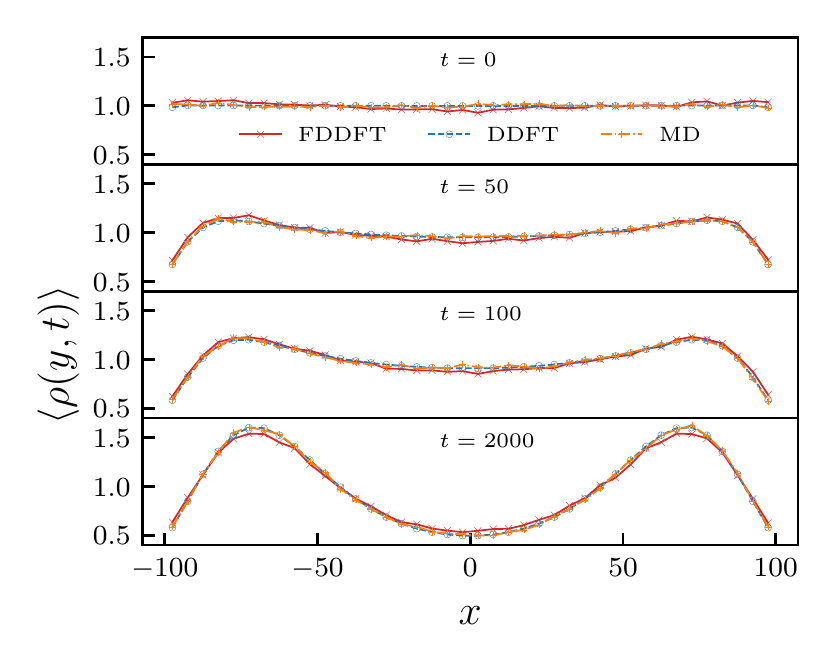}
		}
		\subfloat[]{\protect\protect\includegraphics[scale=0.9,trim=0 15 0 0]{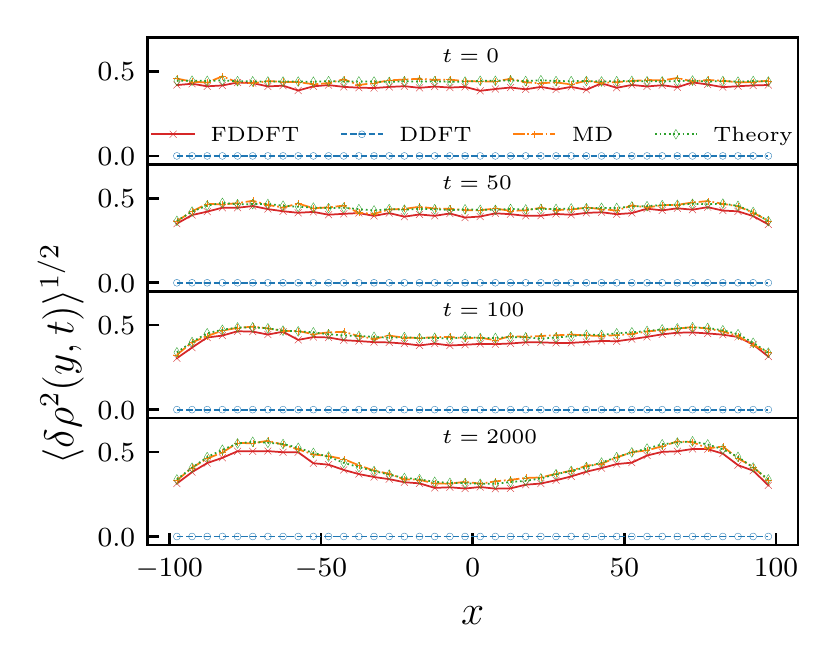}
		}
	
		\subfloat[]{\protect\protect\includegraphics[scale=0.9,trim=0 15 0 0]{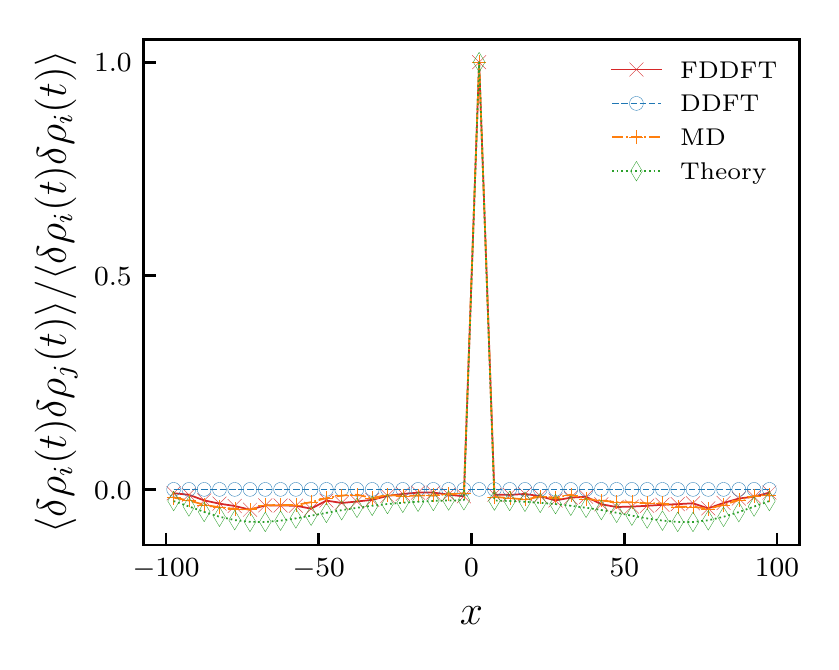}
	}
		\subfloat[]{\protect\protect\includegraphics[scale=0.9,trim=0 15 0 0]{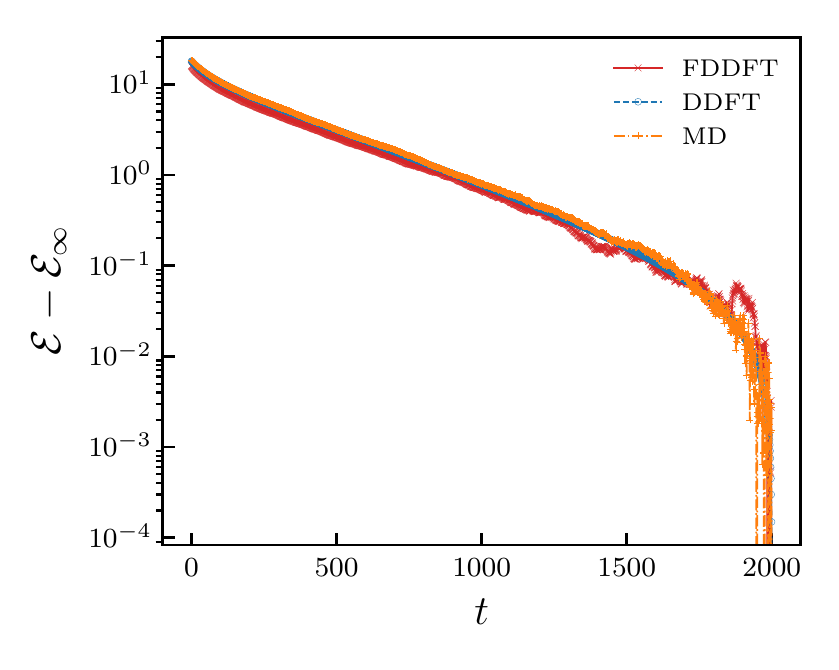}
	}
	\end{center}
	\protect\protect\caption{\label{fig:noneq_spacecor} Time evolution of mean density (a) and density standard deviation (b) fields computed with FDDFT, DDFT and MD simulations.
	A comparison in terms of steady state spatial correlations is reported in (c).
	In (d), we report the decrease in time of the energy functional of the mean density.
}
\end{figure}

\subsection{Homogeneous nucleation in Lennard-Jones systems}

\begin{figure}[t!]
	\begin{center}
		\subfloat[]{ \protect\protect\includegraphics[scale=0.67]{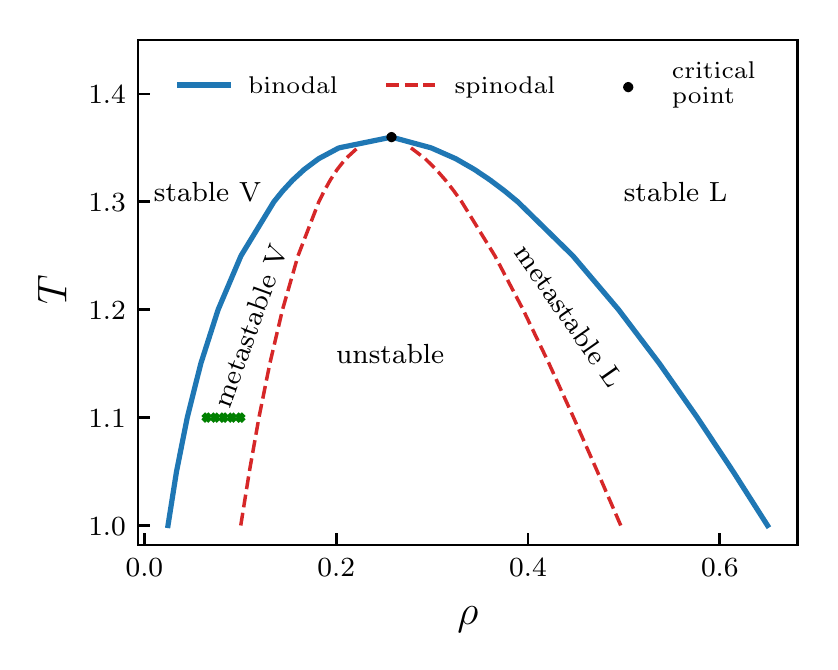}
		}
		\subfloat[]{ \protect\protect
			\includegraphics[scale=0.67]{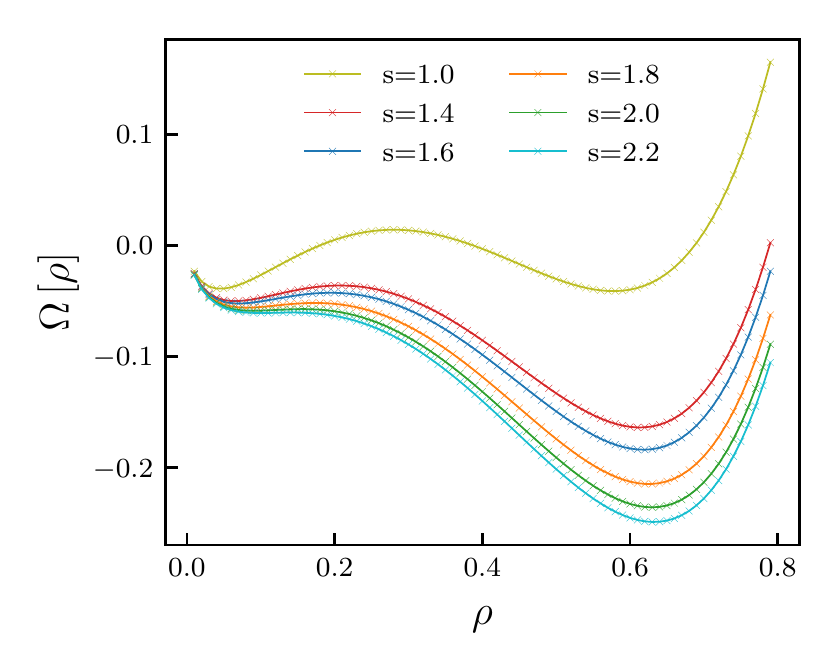}
		}
		\subfloat[]{ \protect\protect
			\includegraphics[scale=0.67]{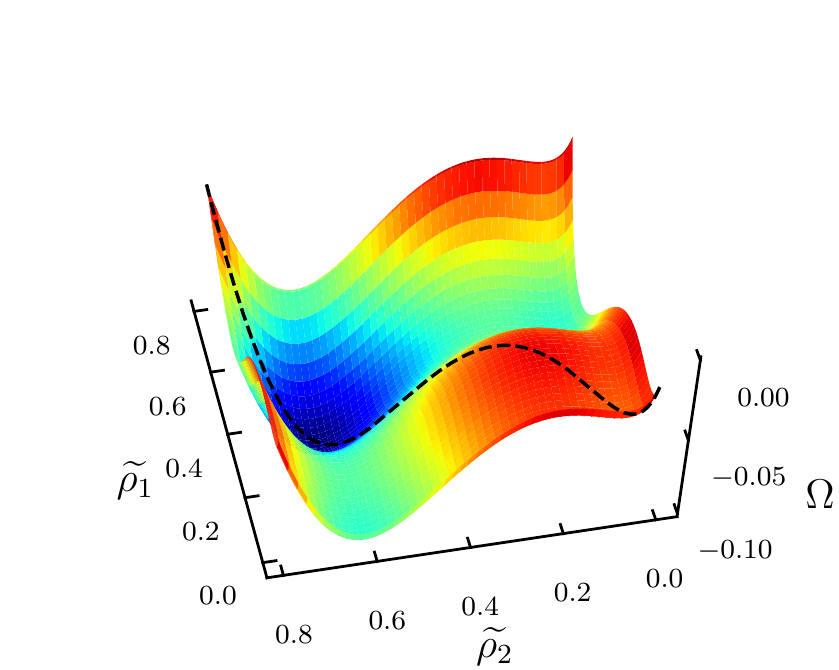}
		}
	\end{center}
	\protect\protect\caption{\label{fig:nucleation1}
		(a) we report the bulk phase diagram for the discretized LJ system.
		(b) shows the grand free-energy landscape as function of the system density for some supersaturation ratios adopted in this study.
		(c) we provide an example of free-energy landscape for systems with a non-uniform density field, with two varying densities $\widetilde{\rho}_1$ and $\widetilde{\rho}_2$.
		The dotted black line denotes the curve corresponding to bulk uniform systems.
	}
\end{figure}

\begin{figure}[t!]
	\begin{center}
		$s=1.4$\\
		\includegraphics[scale=0.73]{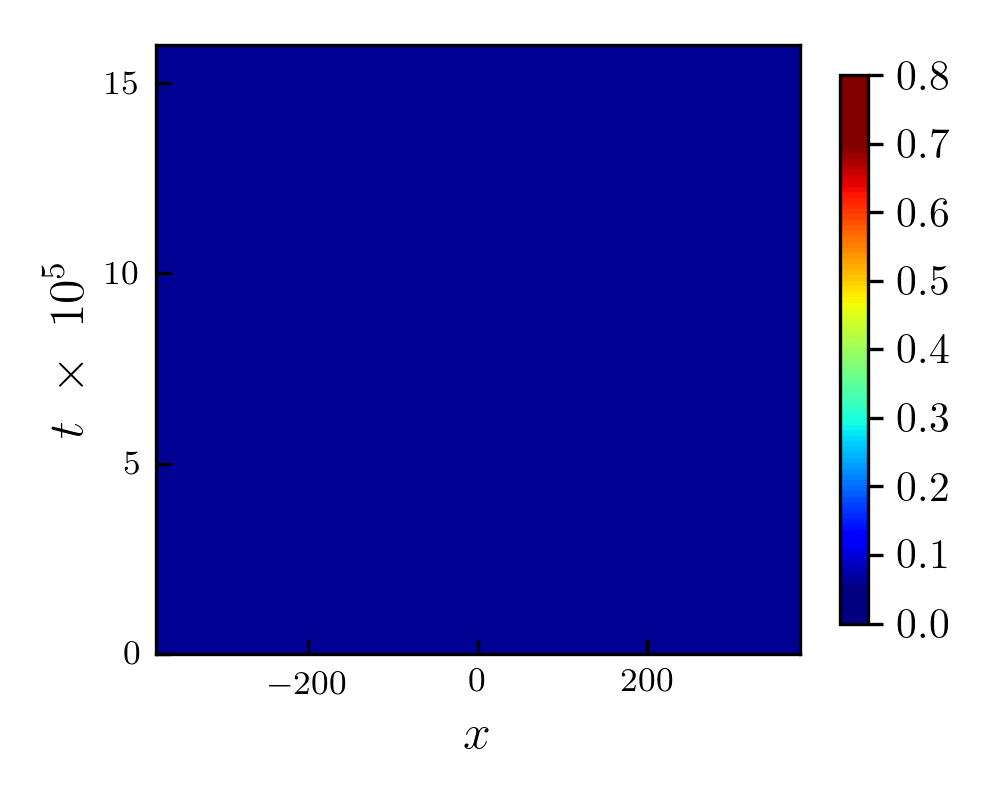}
		\includegraphics[scale=0.73]{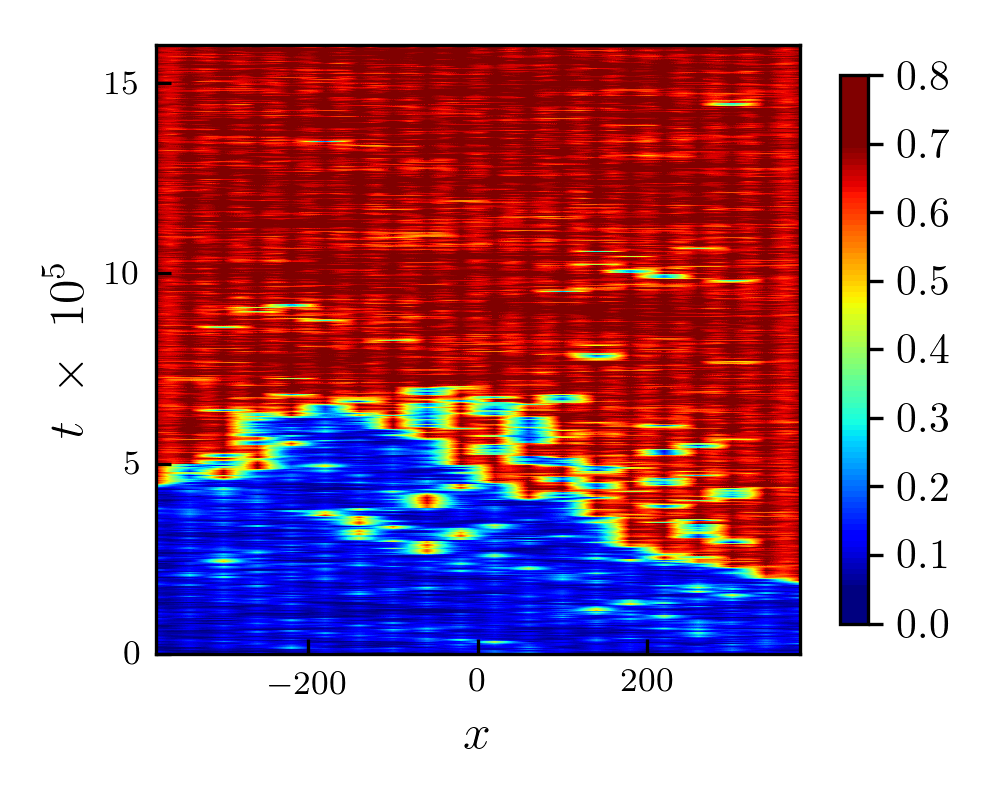}
		
		$s=1.8$\\
		\includegraphics[scale=0.73]{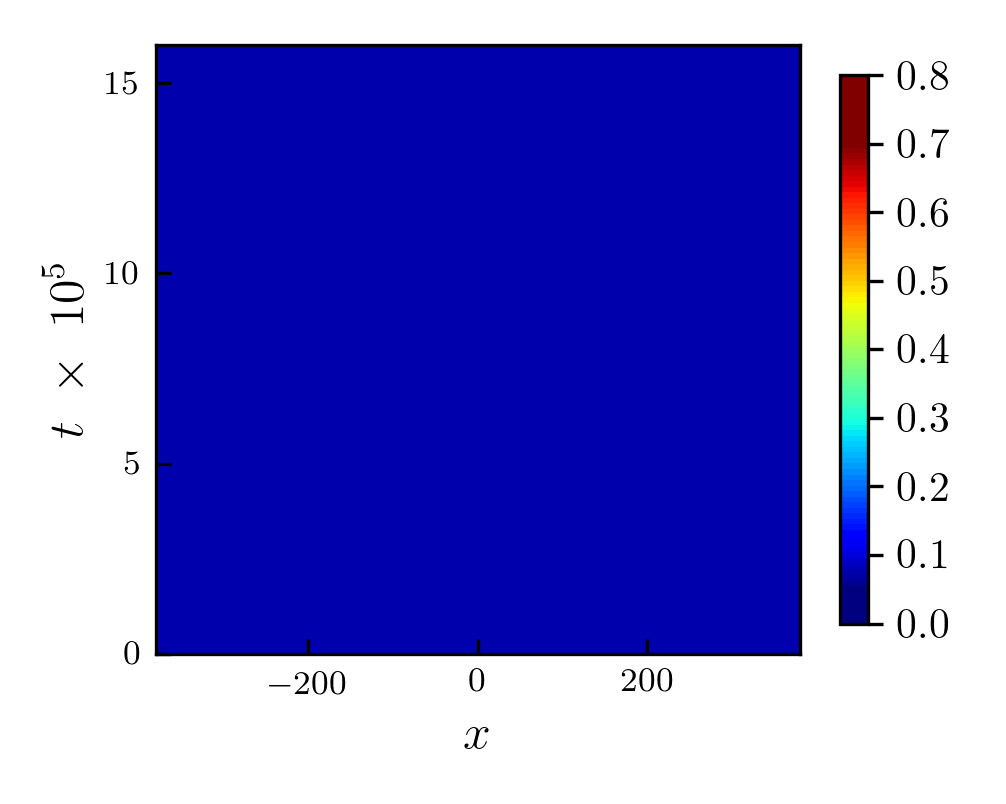}
		\includegraphics[scale=0.73]{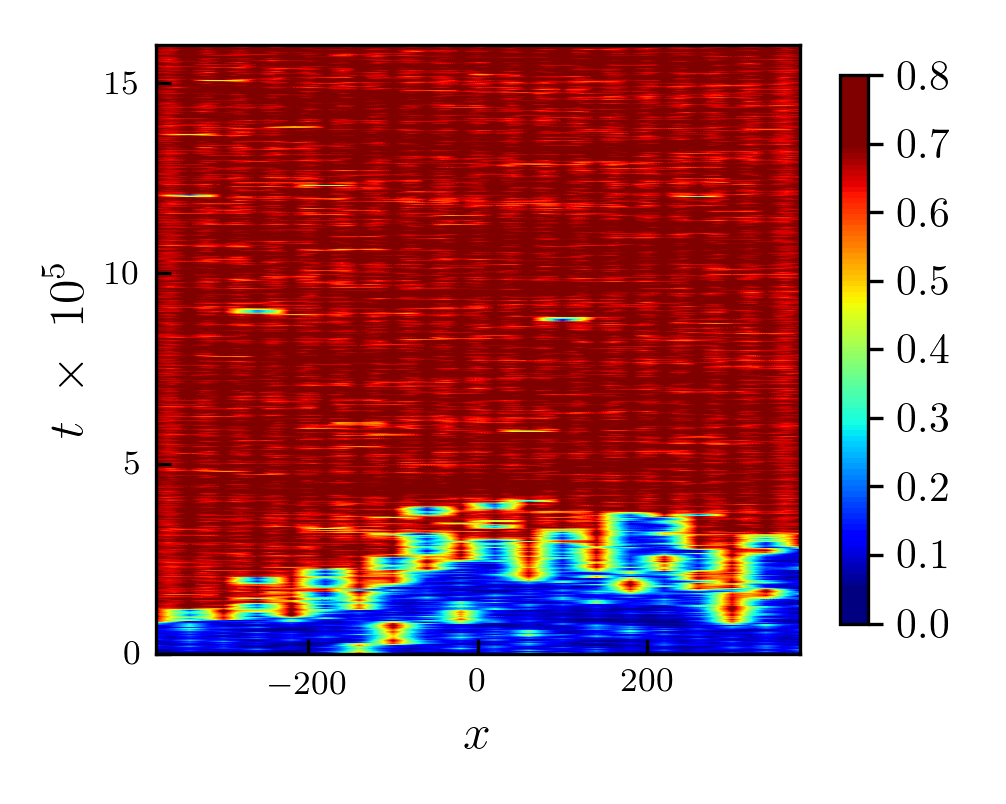}
		
		$s=2.2$\\
		\includegraphics[scale=0.73]{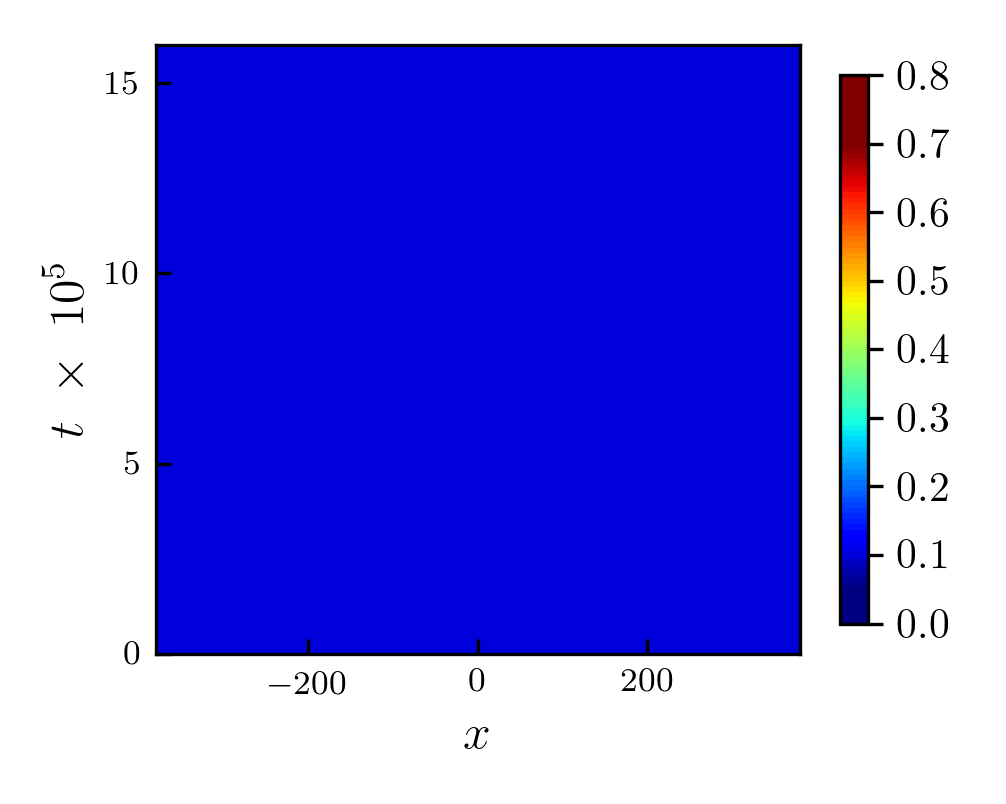}
		\includegraphics[scale=0.73]{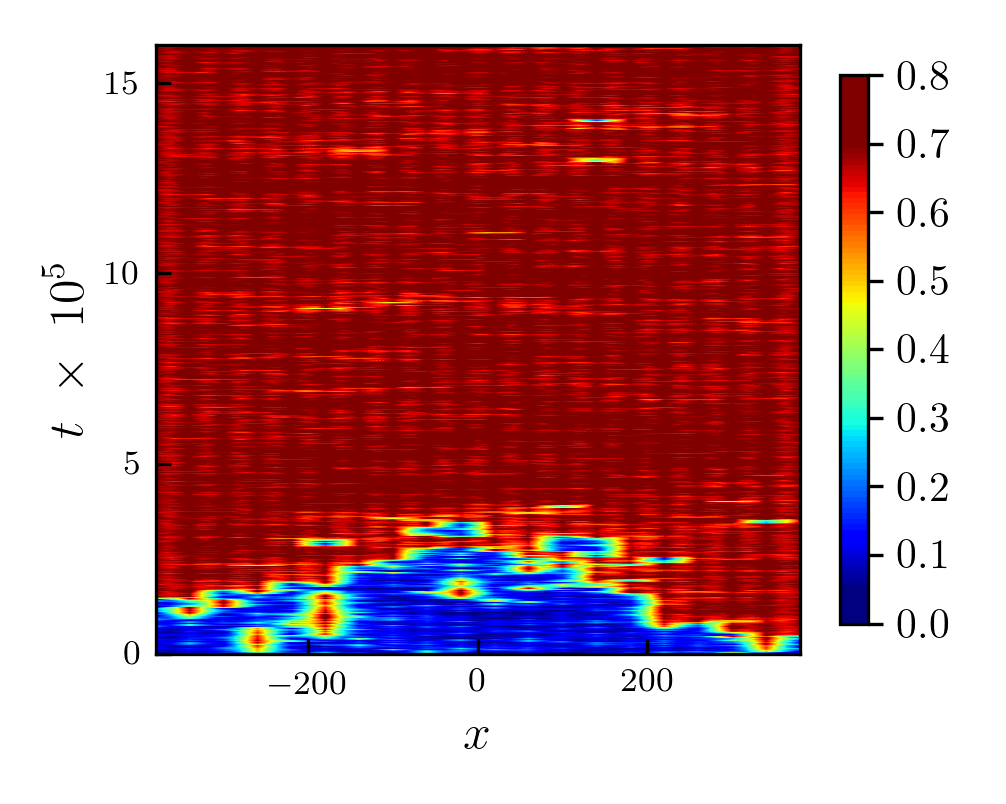}
		
	\end{center}
		\protect\protect\caption{\label{fig:nucleation2} Homogeneous nucleation of a vapour LJ system in metastable conditions with supersaturation ratio $s$.
		Left column: the mean-field evolution. Right column: a single realization of the stochastic dynamics.
	}
\end{figure}

\begin{figure}[t!]
	\begin{center}
		\subfloat[]{ \includegraphics[scale=0.65]{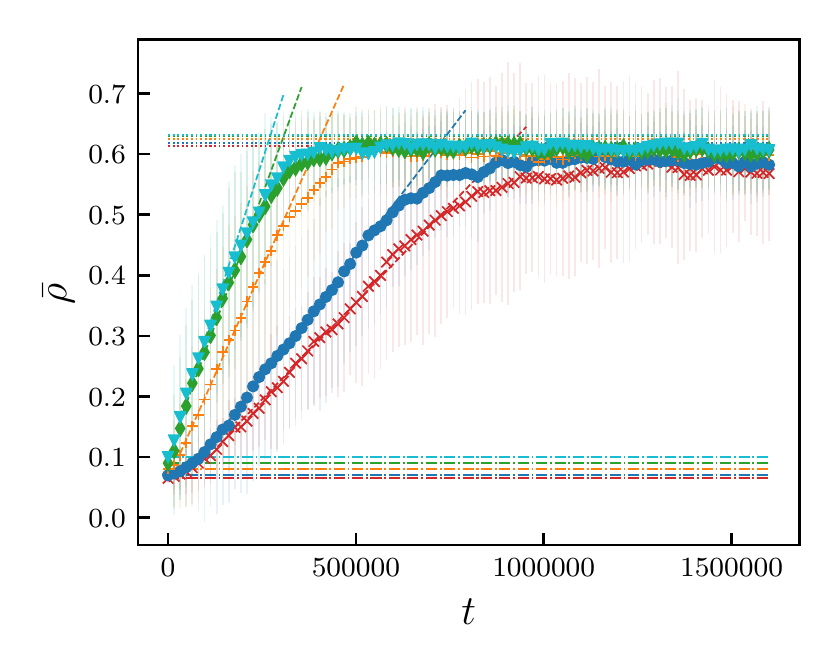}}
		\subfloat[]{ \includegraphics[scale=0.65]{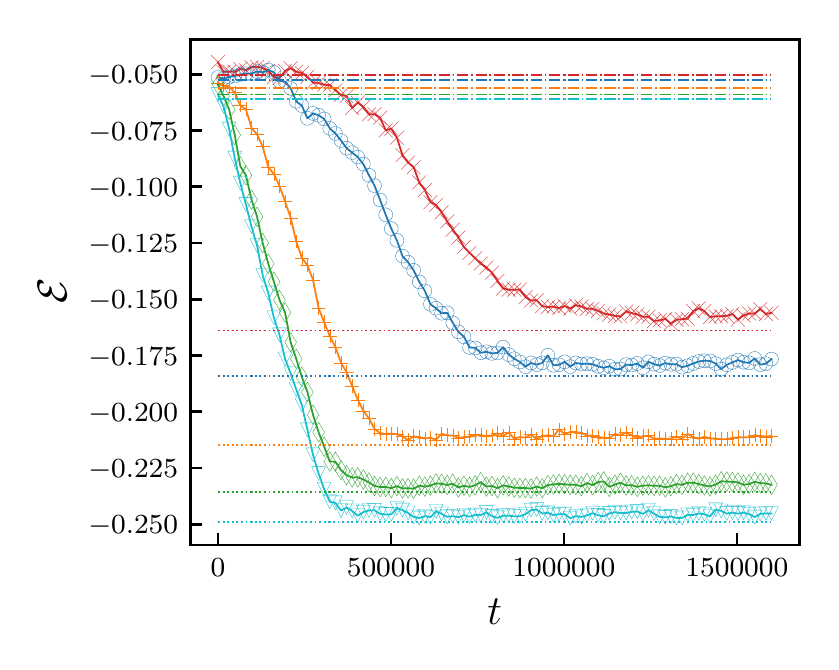} }
		\subfloat[]{ \includegraphics[scale=0.65]{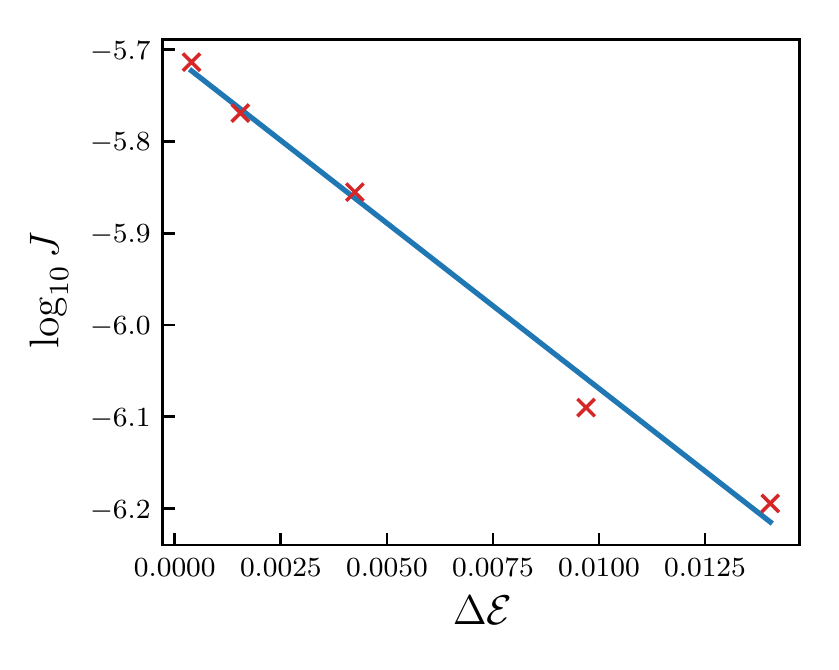} }
		
	\end{center}
\protect\protect\caption{(a) reports the evolution in time of the average system density for the supersaturation ratios adopted in this study.
(b) shows the time-evolution of the system free energy for the supersaturation ratios adopted in this study.
(c) the nucleation growth rate is plotted against the free-energy barrier.
\label{fig:nucleation3}
}
\end{figure}

The importance of fluctuations during phase transitions is crucial when considering
the homogeneous vapour-liquid transition of a Lennard-Jones (LJ) fluid.
Within the framework of DFT, the fluid density profiles of a 1D
open system exchanging particles with a reservoir at constant
temperature and chemical potential $\mu$, can be obtained from an
unconstrained numerical minimization of the grand free-energy functional
\begin{align}
\Omega [\rho (x)]  = \mathcal{F}[\rho (x) ] + \int \left( V(x) - \mu \right) \rho(x) \ dx.
\end{align}
In general, $\mathcal{F}[\rho (x) ]$ is not analytically obtainable from
first principles, except in few cases, i.e. ideal gases and hard-sphere
fluids.
In the remaining cases, $\mathcal{F}[\rho (x) ]$ is either numerically
obtained from atomistic simulations or is approximated by means of
perturbation expansions around a known free energy~\cite{Marconi1999}.
Similarly to previous works on DFT~\cite{Yatsyshin2015,Yatsyshin2015_1}, we
approximate $\mathcal{F}[\rho (x) ]$ of an LJ fluid according to the
first-order Barker-Henderson perturbation theory expansion around the
hard-sphere fluid free energy~\cite{Barker1967}, namely as
\begin{align}
\mathcal{F}[\rho (x)]  = \int \left \{ f_{\text{ID}}[\rho(x)]  +  \rho(x) f_{\text{HS}}(\rho(x)) \right \}  dx + \frac{1}{2} \int\int \rho(x) \rho(x') W(x,x') \ dx \ dx',
\end{align}
where $f_{\text{ID}}$, $f_{\text{HS}}$ and $W(x,x')$ denote ideal-gas,
hard-sphere repulsive interactions and LJ attractive contributions,
respectively.
The free energy of an ideal gas is given by
\begin{align}
f_{\text{ID}}[\rho(x)]=   k_{\text{B}} T \rho \left(  \ln( \lambda^3 \rho)- 1 \right),
\end{align}
where $\lambda$ is the thermal de Broglie wavelength.
The hard-sphere free-energy density $f_{\text{HS}}$ is obtained from the Carnahan-Starling equation of state for the hard sphere fluid, which reads~\cite{Carnahan1969}
\begin{align}
f_{HS}(\rho(x))=   k_{\text{B}} T \left(  \dfrac{  4 \eta - 3 \eta^2 }{\left(1-\eta\right)^2} \right), \quad \text{with} \quad \eta=\frac{\pi}{6}\rho \sigma^3
\end{align}
where $\sigma$ is the hard-sphere diameter set to unity in this work.
Finally, the LJ (attractive) contributions are taken into account by the following expression:
\begin{align}
W(x,x')=\begin{cases}
-1.2  \ \pi \ \epsilon \quad &\text{if} \quad \mid x-x' \mid \leq 1, \\
\pi \ \epsilon \ \left( 0.8 \ \mid x-x' \mid^{-10} -2 \ \mid x-x' \mid^{-4} \right) \quad &\text{otherwise},
\end{cases}
\end{align}
derived by integrating along $y$ and $z$ the 12-6 LJ potential~\cite{Yatsyshin2015_1}.

In order to analyse the vapour-to-liquid (first-order) phase transitions, we
first compute the coexisting density profiles.
The coexisting values of vapour and liquid density (binodal line) are denoted as $\rho_{v}$ and $\rho_{l}$ respectively, and are obtained by solving the following system of equations:
\begin{align}
\begin{cases}
& \left. \frac{ \partial \Omega} { \partial \rho } \right|_{ \rho_{v} }   = \left. \frac{ \partial \Omega} { \partial \rho } \right|_{  \rho_{l} } =0 , \\
& \Omega \left[ \rho_v \right] - \Omega \left[ \rho_l \right]  =0.
\end{cases}
\end{align}
The metastable regions are delimited by the binodal and spinodal lines.
The spinodal lines correspond to the inflection points of the grand free energy, hence are evaluated by solving:
\begin{align}
& \left. \frac{ \partial^2 \Omega} { \partial \rho^2 } \right|_{  \rho_{v} }   = \left. \frac{ \partial^2 \Omega} { \partial \rho^2 } \right|_{  \rho_{l} } =0.
\end{align}
Finally, the bulk critical point is given by the intersection between binodal and spinodal lines, and it is thus computed as
\begin{align}
& \left. \frac{ \partial \Omega} { \partial \rho } \right|_{ \rho_{c},  T_{c} }   = \left. \frac{ \partial^2 \Omega} { \partial \rho^2 } \right|_{\rho_{c},  T_{c} } =0.
\end{align}
In Fig.~\ref{fig:nucleation1}(a) we report the bulk phase diagram obtained from the discretized grand free energy of  LJ fluid.
Solid curves depict the binodal, i.e. the locus of liquid-gas coexistence,
while dashed curves depict the spinodal, i.e. the boundary between the
metastable and the unstable regions.
The black circle designates the bulk critical point at $\rho_c \sim 0.3 $ and $T_c \sim 1.35$.

If we denote with $\rho_{v}$ the vapour coexistence density at a given
temperature, the supersaturation ratio is defined as $s= \rho / \rho_{v} $.
We will study the nucleation of vapour systems with identical temperatures,
but different initial supersaturation ratios.
Figure~\ref{fig:nucleation1}(b) depicts the free-energy landscape as a
function of the bulk density for such systems.
At coexistence $s=1$, two stable basins are present, which means that the system has equal probability of being in one of the two.
Increasing the supersaturation ratio enhances the stability of the liquid phase, thus leaving the vapour density in a metastable condition.
Also, the energy barrier that the system has to overcome to pass from the
vapour to the liquid phase decreases with $s$, until it becomes null at a
supersaturation corresponding to the spinodal line.
In such condition only one minimum of the grand free energy exists.

During a phase transition, the system will move from the initial uniform-density state to the final uniform-density state, while during the transition between the two states the density is non-uniform.
This means that the bulk grand-free energy in Fig.~\ref{fig:nucleation1}(b),
being only valid for uniform densities, describes the system in the
initial and final stages only, but it does not provide information on the
transition path.
The grand free energy for non-uniform systems is in general a function of
each cell density, i.e. it is an $n$-dimensional manifold.
To give a representative example of this, in Fig.~\ref{fig:nucleation1}(c) we
report our LJ grand-free energy for a non uniform system, constrained to have
only two varying densities $\widetilde{\rho}_1 = \left \{ {\rho}_1 = \dots
={\rho}_{n/2} \right \}$ and $\widetilde{\rho}_2 = \left \{ {\rho}_{n/2+1} =
\dots ={\rho}_{n} \right \} $.
The bulk free energy is then recovered for $\widetilde{\rho}_1 = \widetilde{\rho}_2$ (dotted black line).

Single trajectories of the vapour-to-liquid phase transition at different
supersaturation ratios, are reported in Fig.~\ref{fig:nucleation2}.
For comparison purposes, we perform simulations of the FDDFT and its mean-field (deterministic) counterpart.
In order for the transition to occur, the system grand free energy has to overcome an energy barrier.
Such passage requires a local injection of energy, thus it is triggered by
fluctuations.
As a consequence, the mean-field approach fails to describe the transition.
Moreover, as expected by looking at the energy barrier in Fig.~\ref{fig:nucleation1}, the transition is favoured by higher supersaturation ratios.

In addition to the presence of fluctuations, the phase transition is allowed due to the open boundary conditions imposed on the system.
These boundary conditions are described in subsection \ref{subsec:boundary}, and model the exchange of particles with a reservoir at constant temperature $T_{\text{res}}$ and chemical potential $\mu_{\text{res}}$.
The mass of the system can then increase (or decrease), thus permitting the transition from the lower-density minima in Fig.~\ref{fig:nucleation1} (b) to the higher-density ones.
However, it is important to remark that these boundary conditions do not simply add (or remove) mass to the system.
The imposed chemical potential at the boundary, $\mu_{\text{res}}$, can be iteratively solved to obtain the value of the density that satisfies it.
We choose $\mu_{\text{res}}$ so that this iterative algorithm may converge to one of the two minima in Fig.~\ref{fig:nucleation1} (b), depending on the initial conditions for the iteration.
For the two simulations in Fig.~\ref{fig:nucleation2} we always select to converge to the lower-density minimum in Fig.~\ref{fig:nucleation1} (b).
This is why, with identical boundary conditions, the mean-field deterministic simulation in Fig.~\ref{fig:nucleation2} remains at the lower-density minima in Fig.~\ref{fig:nucleation1} (b) and conserves the mass.
On the contrary, the FDDFT simulation in Fig.~\ref{fig:nucleation2} is able to increase the mass thanks to the constant density at the boundary, which allows a continuous exchange of particles.

The trend observed in Fig.~\ref{fig:nucleation2} is quantitatively analysed in Fig.~\ref{fig:nucleation3}, where we
report an ensemble average of $10$ nucleation
trajectories for each supersaturation ratio.
Figure~\ref{fig:nucleation3}(a) shows the average density increase as a
function of time.
The initial and final average system densities are consistent with the vapour and liquid bulk densities predicted by the grand-free energy analysis.

The free energy evaluated at each time as function of the average density is reported in Fig.~\ref{fig:nucleation3}(b).
The initial free-energy value, corresponding to the vapour metastable basin,
evolves in time in order to the reach the more stable liquid basin, as
predicted by Fig.~\ref{fig:nucleation1}(b).
It is interesting that the passage between the two basins implies a
slight increase in the free energy due to the energy barrier overcome by the
density field fluctuations.

Evidently,  the average density kinetics is characterized by three main
stages: 1) an initial latency period, 2) a growth period and, 3) an
asymptotic relaxation towards a plateau, corresponding to the liquid-phase
density.
This dynamics is consistent with the multi-stage nucleation pathway experimentally observed and theoretically studied in the phase-transition research community~\cite{DuranOlivencia2018}.
The growth period exhibits a linear-like trend, with slopes representing the nucleation growth rate $J$.
As reported in the plot in Fig.~\ref{fig:nucleation3}(c), an Arrhenius like
relation (as is the case with thermally activated processes) is observed
between $J$ and the grand-free energy barrier $\Delta \mathcal{E}$, i.e.
\begin{align}
J \sim K \exp{ - \frac{ \Delta \mathcal{E} }{T} },
\end{align}
where $J$ is the growth rate $K$ in the limit of a zero-energy barrier.
It is worth noticing that the pre-exponential factor $K$ in reality is not a constant, but can be often approximated as constant over limited supersaturation regions~\cite{lutsko2013classical,lutsko2015two}.

Finally, we remark that the finite-volume scheme is able to accurately
	simulate processes where the number of particles per cell is greater than 5,
	as showed in subsection \ref{subsub:std}. For any process that involves
	smaller scales one has to rely on MD simulations. This could be relevant for
	processes, such as nucleation, which may require capturing system
	features down almost to particle scales at initiation.

\section{Summary and conclusions}
\label{conclusions}

We have developed an efficient and robust finite-volume numerical scheme for solving stochastic gradient-flow equations. The scheme was exemplified with FDDFT and allows us to scrutinise the effects of thermal fluctuations on complex phenomena such as phase transitions.
Unlike previous numerical methodologies only applicable to a
limited range of free energies (e.g. ideal-gas free energies such as in
Refs~\cite{kim2017,voulgarakis2009bridging}), our proposed scheme deals
effectively with general free-energy functionals including
external fields or interacting potentials.

Our numerical methodology essentially comprises: a hybrid space
discretization based on central and upwind schemes, for both deterministic
and stochastic fluxes; a family of implicit-explicit Euler and Milsten
time integrators, together with a weak second-order Runge-Kutta scheme;
an adaptive time-step scheme, based on the Brownian bridge technique, which
ensures the non-negativity of the density; appropriate boundary
conditions.
What is more, the hybrid approach provides an optimal compromise between statistical properties of the stochastic field and spurious oscillations.
Additionally, the adaptive time-step feature of the scheme represents an alternative approach to preserving density positivity without including artificial limiters as in previous schemes.

The scheme is validated by means of several numerical applications.
First, we study the variance, temporal and spatial correlations, and
structure factor of an ideal gas at equilibrium, comparing the results of the
finite-volume solver with theoretical results from the literature and our own
MD simulations.
In agreement with previous works, we find that a minimum number of $5-10$
particles per cell is required in order for FDDFT to match atomistic
simulation results.
We the examine the out-of-equilibrium evolution of an ideal gas in a
double-well external potential.
Our stochastic solver accurately reproduces local
mean density, local density fluctuations and spatial correlations obtained
from MD simulations.
It should also be noted that for the deterministic case/DDFT where thermal
fluctuations are not included, the results are consistent with both FDDFT and
MD.
Finally, we simulate homogeneous nucleation kinetics of a fluid consisting of
particles interacting through an LJ-like potential.
Our results for the phase diagram match the theoretical results and serve so
as to illustrate the crucial role of fluctuations to surmount free-energy
barriers.
As expected, an exponential law is observed for the nucleation growth rate as
function of the metastable free-energy barrier.

\section*{Acknowledgments}
We gratefully acknowledge financial support from the Imperial College (IC)
Department of Chemical Engineering PhD Scholarship scheme, IC President's PhD
Scholarship scheme, ERC through Advanced Grant No. 247031 and and 883363 and EPSRC through
Grants No. EP/L027186, EP/L020564 and EP/P031587. The computations were
performed at the High Performance Computing center of IC.
Finally, we are grateful to the anonymous
	reviewers for insightful comments and suggestions.

\appendix

%
%
%
%

	\section{MD simulations details}

	MD simulations are performed using the Large-Scale Atomic/Molecular Massively
	Parallel Simulator (LAMMPS) \cite{plimpton1995fast}. Particle positions and
	velocities are integrated in time using the velocity-Verlet algorithm, with a
	time-step of $dt=0.001 \tau$. The system is simulated at constant temperature
	and volume, so that particle coordinates are consistent with the canonical
	ensemble (NVT). Specifically, the temperature $T=1$ is kept constant during
	the simulations using a Langevin thermostat. All the physical quantities are
	expressed in reduced units, i.e. they are nondimensionalized with the
	fundamental quantities $\sigma$, $\epsilon$ and $m$, representing distance,
	energy and mass, respectively. Further, without loss of generality, $\sigma$,
	$\epsilon$, $m$ and the Boltzmann constant $k_B$ are set equal to unity.
	
	As discussed extensively in \cite{russo2019macroscopic}, a macroscopic field
	$X(\mathbf{r},t)$ can be extracted from particle coordinates as
	$X(\mathbf{r},t)= \sum_{i} \chi_i  \phi ( \mathbf{r}_i(t) - \mathbf{r} ) $,
	where $\chi_i$ is the quantity of interest for particle $i$ at position
	$\mathbf{r}_i$ at time $t$, and $\phi$ is a kernel function (commonly a
	piecewise constant, Gaussian, or polynomial function). In this work, we adopt
	a piecewise constant function defined as:
	\begin{align}
	\phi (y) = \begin{cases}
	\frac{1}{ \Delta x } & \text{for } \left\Vert y\right\Vert< \Delta x/2 ,\\
	0 & \text{otherwise } ,
	\end{cases}
	\label{eq:piecewise_kernel}
	\end{align}
	where $\Delta x$ is the width of each bin. In each comparison, $\Delta x$ for MD simulations is taken to be the same with that for the discretized FDDFT.
	Using the above, the instantaneous macroscopic density profile for a single trajectory is computed as:
	\begin{align}
	\rho(x,t)= \sum_{i} m_i  \phi ( x_i (t) - x ) ,
	\end{align}
	where $m_i$ is the mass of the particle $i$.
	
	\paragraph{Equilibrium simulations}
	MD simulations of ideal gas fluids in equilibrium are performed using a fixed
	number of particles ($1,000$) in a 1D domain of length $2,000$ (in reduced
	units) with periodic boundary conditions. The system is equilibrated and then
	a run of $2 \times 10^{7}$ time steps is performed, during which fluid
	particle positions and velocities are stored every $10^{4}$ time steps for
	analysis. The process is repeated $10^{3}$ times to generate independent
	trajectories.
	
	\paragraph{Non-equilibrium simulations}
	
	MD simulations of ideal gas fluids in non-equilibrium conditions are
	performed using a fixed number of particles ($200$) in a 1D
	domain of length $200$ (in reduced units) with periodic boundary conditions,
	under an external potential:
	\begin{align}
	V(x)=5\left[\left(\frac{x}{200}\right)^4-\left(\frac{x}{200}\right)^2\right].
	\end{align}
	A run of $2 \times 10^{6}$ time steps is performed, during which fluid
	particle positions and velocities are stored every $10^{3}$ time steps for
	analysis. The process is repeated $10^{3}$ times with different (random)
	initial conditions to generate independent trajectories and gather
	statistics.

	\section{Time integrators stability analysis}
	
	\begin{figure}[t!]
			\begin{center}
				\includegraphics[scale=0.9]{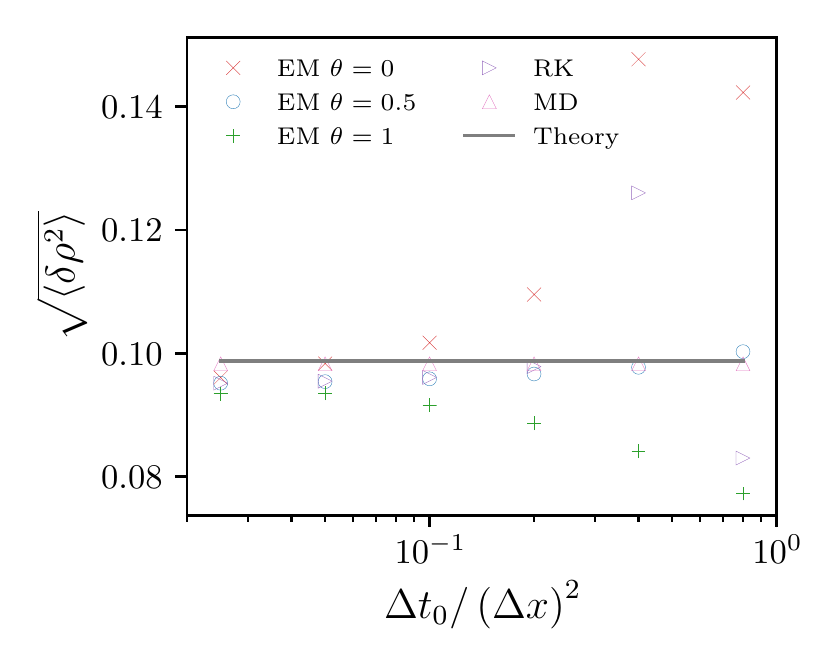}
				\protect\protect\caption{\label{fig:std_stability} Standard deviation $\sqrt{ \langle \delta \rho^2 \rangle}$ as a function of the normalized initial time step $\Delta t_0 / (\Delta x)^2$
					for an ideal gas in equilibrium. Temporal integrators. EM:
					Euler-Maruyama, MI: Milstein, RK: Runge-Kutta, MD: Molecular dynamics.
					Explicit ($\theta=0$), semi-implicit ($\theta=0.5$) and implicit
					($\theta=1$), Theory:
					Eq.~\eqref{eq:theoretstd}.}
			\end{center}
	\end{figure}
	
	Both stability and accuracy of the different time-integrators are relevant,
	given that large time-steps are required in many applications (for instance,
	for transitions occurring over long time-scales). In the main text, we
	focused on the accuracy of the schemes comparing finite-volume schemes, MD
	and theoretical results. Here we analyze the stability of the different time
	integrators with respect to the time-step size.
	
	Specifically, in Fig.~\ref{fig:std_stability} we report a comparison of the
	fluctuations' standard deviation obtained from selected time integrators
	and the MD-theoretical results for varying time step sizes $\Delta t_0$. Because of the adaptive time step adopted in the
	simulations,  the
	actual time step may not be constant throughout the simulations, and in fact it
	may be lower than $\Delta t_0$.
	The system considered here is the same ideal-gas system (with average density
	$\bar{\rho}=0.5$) used for the analyses in the main text. The cell size
	adopted is $\Delta x =50$, corresponding to a number of particles per cell, $N_c=25$.
	We do not report the results for the Milstein schemes, as in
	several tests we did not observe any relevant difference between the
	Milstein scheme and the Euler-Maruyama one as far as the mean, variance and
	correlations are concerned.
	Figure~\ref{fig:std_stability} shows that the semi-implicit scheme
	outperforms both explicit and implicit schemes at high $\Delta t_0/ \left(
	\Delta x \right)^2$, becoming the time-integrator of choice for computations
	requiring large time steps. Moreover, the explicit Runge-Kutta scheme shows enhanced stability compared to both implicit and explicit Euler-Maruyama.

\bibliographystyle{siam}
\bibliography{references}

\begin{thebibliography}{10}

\bibitem{archer2009dynamical}
{\sc A.~J. Archer}, {\em Dynamical density functional theory for molecular and
  colloidal fluids: A microscopic approach to fluid mechanics}, J. Chem. Phys,
  130 (2009), p.~014509.

\bibitem{Archer2004}
{\sc A.~J. Archer and M.~Rauscher}, {\em Dynamical density functional theory
  for interacting brownian particles: stochastic or deterministic?}, J. Phys.
  A, 37 (2004), pp.~9325--9333.

\bibitem{atzberger2010spatially}
{\sc P.~J. Atzberger}, {\em Spatially adaptive stochastic numerical methods for
  intrinsic fluctuations in reaction--diffusion systems}, J. Comput. Phys., 229
  (2010), pp.~3474--3501.

\bibitem{balakrishnan2014fluctuating}
{\sc K.~Balakrishnan, A.~L. Garcia, A.~Donev, and J.~B. Bell}, {\em Fluctuating
  hydrodynamics of multispecies nonreactive mixtures}, Phys. Rev. E, 89 (2014),
  p.~013017.

\bibitem{balboa2012staggered}
{\sc F.~Balboa, J.~B. Bell, R.~Delgado-Buscalioni, A.~Donev, T.~G. Fai, B.~E.
  Griffith, and C.~S. Peskin}, {\em Staggered schemes for fluctuating
  hydrodynamics}, Multiscale Model. Sim., 10 (2012), pp.~1369--1408.

\bibitem{barbaro2016phase}
{\sc A.~B. Barbaro, J.~A. Canizo, J.~A. Carrillo, and P.~Degond}, {\em Phase
  transitions in a kinetic flocking model of cucker--smale type}, Multiscale
  Model. Sim., 14 (2016), pp.~1063--1088.

\bibitem{Barker1967}
{\sc J.~A. Barker and D.~Henderson}, {\em Perturbation theory and equation of
  state for fluids. ii. a successful theory of liquids}, J. Chem. Phys., 47
  (1967), pp.~4714--4721.

\bibitem{bell2007numerical}
{\sc J.~B. Bell, A.~L. Garcia, and S.~A. Williams}, {\em Numerical methods for
  the stochastic landau-lifshitz navier-stokes equations}, Phys. Rev. E, 76
  (2007), p.~016708.

\bibitem{bell2010computational}
\leavevmode\vrule height 2pt depth -1.6pt width 23pt, {\em Computational
  fluctuating fluid dynamics}, ESAIM: Math. Model. Numer. Anal., 44 (2010),
  pp.~1085--1105.

\bibitem{bessemoulin2012finite}
{\sc M.~Bessemoulin-Chatard and F.~Filbet}, {\em A finite volume scheme for
  nonlinear degenerate parabolic equations}, SIAM J. Sci. Comput., 34 (2012),
  pp.~B559--B583.

\bibitem{bhattacharjee2015fluctuating}
{\sc A.~K. Bhattacharjee, K.~Balakrishnan, A.~L. Garcia, J.~B. Bell, and
  A.~Donev}, {\em Fluctuating hydrodynamics of multi-species reactive
  mixtures}, J. Chem. Phys., 142 (2015), p.~224107.

\bibitem{Bixon1969}
{\sc M.~Bixon and R.~Zwanzig}, {\em Boltzmann-langevin equation and
  hydrodynamic fluctuations}, Phys. Rev., 187 (1969), pp.~267--272.

\bibitem{Carnahan1969}
{\sc N.~F. Carnahan and K.~E. Starling}, {\em Equation of state for
  nonattracting rigid spheres}, J. Chem. Phys., 51 (1969), pp.~635--636.

\bibitem{carrillo2015}
{\sc J.~A. Carrillo, A.~Chertock, and Y.~Huang}, {\em A finite-volume method
  for nonlinear nonlocal equations with a gradient flow structure}, Comm.
  Comput. Phys., 17 (2015), pp.~233--258.

\bibitem{Carrillo2019}
{\sc J.~A. Carrillo, K.~Craig, and F.~S. Patacchini}, {\em A blob method for
  diffusion}, Calc. Var. Partial Diff., 58 (2019), p.~53.

\bibitem{Carrillo2006}
{\sc J.~A. Carrillo, R.~J. McCann, and C.~Villani}, {\em Contractions in the
  2-wasserstein length space and thermalization of granular media}, Arch.
  Ration. Mech. Anal., 179 (2006), pp.~217--263.

\bibitem{carrillo2003kinetic}
{\sc J.~A. Carrillo, R.~J. McCann, C.~Villani, et~al.}, {\em Kinetic
  equilibration rates for granular media and related equations: entropy
  dissipation and mass transportation estimates}, Rev. Mat. Iberoam., 19
  (2003), pp.~971--1018.

\bibitem{chavanis2008hamiltonian}
{\sc P.-H. Chavanis}, {\em Hamiltonian and brownian systems with long-range
  interactions: V. stochastic kinetic equations and theory of fluctuations},
  Physica A, 387 (2008), pp.~5716--5740.

\bibitem{de2006multiscale}
{\sc G.~De~Fabritiis, R.~Delgado-Buscalioni, and P.~Coveney}, {\em Multiscale
  modeling of liquids with molecular specificity}, Phys. Rev. Lett., 97 (2006),
  p.~134501.

\bibitem{de2007fluctuating}
{\sc G.~De~Fabritiis, M.~Serrano, R.~Delgado-Buscalioni, and P.~Coveney}, {\em
  Fluctuating hydrodynamic modeling of fluids at the nanoscale}, Phys. Rev. E,
  75 (2007), p.~026307.

\bibitem{delaTorre2015}
{\sc J.~A. de~la Torre, P.~Espa\~{n}ol, and A.~Donev}, {\em Finite element
  discretization of non-linear diffusion equations with thermal fluctuations},
  J. Chem. Phys., 142 (2015), p.~094115.

\bibitem{Dean1996}
{\sc D.~S. Dean}, {\em Langevin equation for the density of a system of
  interacting langevin processes}, J. Phys. A: Math. Gen., 29 (1996), p.~L613.

\bibitem{delgado2007embedding}
{\sc R.~Delgado-Buscalioni and G.~De~Fabritiis}, {\em Embedding molecular
  dynamics within fluctuating hydrodynamics in multiscale simulations of
  liquids}, Phys. Rev. E, 76 (2007), p.~036709.

\bibitem{delong2013temporal}
{\sc S.~Delong, B.~E. Griffith, E.~Vanden-Eijnden, and A.~Donev}, {\em Temporal
  integrators for fluctuating hydrodynamics}, Phys. Rev. E, 87 (2013),
  p.~033302.

\bibitem{dieterich1990nonlinear}
{\sc W.~Dieterich, H.~Frisch, and A.~Majhofer}, {\em Nonlinear diffusion and
  density functional theory}, Z. Phys. B, 78 (1990), pp.~317--323.

\bibitem{donev2010hybrid}
{\sc A.~Donev, J.~B. Bell, A.~L. Garcia, and B.~J. Alder}, {\em A hybrid
  particle-continuum method for hydrodynamics of complex fluids}, Multiscale
  Model. Sim., 8 (2010), pp.~871--911.

\bibitem{Donev2014}
{\sc A.~Donev and E.~Vanden-Eijnden}, {\em Dynamic density functional theory
  with hydrodynamic interactions and fluctuations}, J. Chem. Phys., 140 (2014),
  p.~234115.

\bibitem{donev2010}
{\sc A.~Donev, E.~Vanden-Eijnden, A.~Garcia, and J.~Bell}, {\em On the accuracy
  of finite-volume schemes for fluctuating hydrodynamics}, Commun. Appl. Math.
  Comput. Sci., 5 (2010), pp.~149--197.

\bibitem{duran-olivencia2016}
{\sc M.~A. Dur{\'a}n-Olivencia, B.~D. Goddard, and S.~Kalliadasis}, {\em
  Dynamical density functional theory for orientable colloids including inertia
  and hydrodynamic interactions}, J. Stat. Phys., 164 (2016), pp.~785--809.

\bibitem{Duran-Olivencia2019}
{\sc M.~A. Dur{\'a}n-Olivencia, R.~S. Gvalani, S.~Kalliadasis, and G.~A.
  Pavliotis}, {\em Instability, rupture and fluctuations in thin liquid films:
  Theory and computations}, J. Stat. Phys., 174 (2019), pp.~579--604.

\bibitem{duran-olivencia2017}
{\sc M.~A. Dur\'an-Olivencia, P.~Yatsyshin, B.~D. Goddard, and S.~Kalliadasis},
  {\em General framework for fluctuating dynamic density functional theory},
  New J. Phys, 19 (2017), p.~123022.

\bibitem{DuranOlivencia2018}
{\sc M.~A. Dur{\'{a}}n-Olivencia, P.~Yatsyshin, S.~Kalliadasis, and J.~F.
  Lutsko}, {\em General framework for nonclassical nucleation}, New J. Phys.,
  20 (2018), p.~083019.

\bibitem{Espanol2015}
{\sc P.~Espa\~{n}ol and A.~Donev}, {\em Coupling a nano-particle with
  isothermal fluctuating hydrodynamics: Coarse-graining from microscopic to
  mesoscopic dynamics}, J. Chem. Phys., 143 (2015), p.~234104.

\bibitem{evans1979nature}
{\sc R.~Evans}, {\em The nature of the liquid-vapour interface and other topics
  in the statistical mechanics of non-uniform, classical fluids}, Adv. Phys.,
  28 (1979), pp.~143--200.

\bibitem{Fox1970}
{\sc R.~F. Fox and G.~E. Uhlenbeck}, {\em Contributions to non‐equilibrium
  thermodynamics. i. theory of hydrodynamical fluctuations}, Phys. Fluids, 13
  (1970), pp.~1893--1902.

\bibitem{gaines1997variable}
{\sc J.~G. Gaines and T.~J. Lyons}, {\em Variable step size control in the
  numerical solution of stochastic differential equations}, SIAM J. Appl.
  Math., 57 (1997), pp.~1455--1484.

\bibitem{garcia1987numerical}
{\sc A.~L. Garcia, M.~M. Mansour, G.~C. Lie, and E.~Cementi}, {\em Numerical
  integration of the fluctuating hydrodynamic equations}, J. Stat. Phys., 47
  (1987), pp.~209--228.

\bibitem{glotov2014new}
{\sc V.~Y. Glotov, V.~M. Goloviznin, S.~Karabasov, and A.~Markesteijn}, {\em
  New two-level leapfrog scheme for modeling the stochastic landau-lifshitz
  equations}, Comput. Math. \& Math. Phys., 54 (2014), pp.~315--334.

\bibitem{Goddard2016}
{\sc B.~D. Goddard, A.~Nold, and S.~Kalliadasis}, {\em Dynamical density
  functional theory with hydrodynamic interactions in confined geometries}, J.
  Chem. Phys., 145 (2016), p.~214106.

\bibitem{goddard2012general}
{\sc B.~D. Goddard, A.~Nold, N.~Savva, G.~A. Pavliotis, and S.~Kalliadasis},
  {\em General dynamical density functional theory for classical fluids},
  Physical review letters, 109 (2012), p.~120603.

\bibitem{Goddard2012}
{\sc B.~D. Goddard, A.~Nold, N.~Savva, P.~Yatsyshin, and S.~Kalliadasis}, {\em
  Unification of dynamic density functional theory for colloidal fluids to
  include inertia and hydrodynamic interactions: derivation and numerical
  experiments}, J. Phys. Condens. Matter, 25 (2012), p.~035101.

\bibitem{goddard2012overdamped}
{\sc B.~D. Goddard, G.~A. Pavliotis, and S.~Kalliadasis}, {\em The overdamped
  limit of dynamic density functional theory: rigorous results}, Multiscale
  Modeling \& Simulation, 10 (2012), pp.~633--663.

\bibitem{Kawasaki1994}
{\sc K.~Kawasaki}, {\em Stochastic model of slow dynamics in supercooled
  liquids and dense colloidal suspensions}, Physica A, 208 (1994), pp.~35 --
  64.

\bibitem{kim2017}
{\sc C.~Kim, A.~Nonaka, J.~B. Bell, A.~L. Garcia, and A.~Donev}, {\em
  Stochastic simulation of reaction-diffusion systems: A
  fluctuating-hydrodynamics approach}, J. Chem. Phys., 146 (2017), p.~124110.

\bibitem{Kloeden1992}
{\sc P.~E. Kloeden and E.~Platen}, {\em Numerical Solution of Stochastic
  Differential Equations}, Springer-Verlag Berlin Heidelberg, 1992.

\bibitem{Kruger2017}
{\sc M.~Kruger and D.~S. Dean}, {\em A gaussian theory for fluctuations in
  simple liquids}, J. Chem. Phys., 146 (2017), p.~134507.

\bibitem{landau1980statistical}
{\sc L.~Landau and E.~Lifshitz}, {\em Statistical physics, vol. 5}, Course of
  theoretical physics,  (1980).

\bibitem{lutsko2010recent}
{\sc J.~F. Lutsko}, {\em Recent developments in classical density functional
  theory}, Adv. Chem. Phys., 144 (2010), p.~1.

\bibitem{lutsko2012dynamical}
\leavevmode\vrule height 2pt depth -1.6pt width 23pt, {\em A dynamical theory
  of nucleation for colloids and macromolecules}, J. Chem. Phys., 136 (2012),
  p.~034509.

\bibitem{Lutsko2019}
{\sc J.~F. Lutsko}, {\em How crystals form: {A} theory of nucleation pathways},
  Sci. Adv., 5 (2019), p.~{eaav}7399.

\bibitem{lutsko2013classical}
{\sc J.~F. Lutsko and M.~A. Dur{\'a}n-Olivencia}, {\em Classical nucleation
  theory from a dynamical approach to nucleation}, J. Chem. Phys., 138 (2013),
  p.~244908.

\bibitem{lutsko2015two}
{\sc J.~F. Lutsko and M.~A. Dur{\'{a}}n-Olivencia}, {\em A two-parameter
  extension of classical nucleation theory}, J. Phys. Condens. Matter, 27
  (2015), p.~235101.

\bibitem{Marconi1999}
{\sc U.~M.~B. Marconi and P.~Tarazona}, {\em Dynamic density functional theory
  of fluids}, J. Chem. Phys., 110 (1999), pp.~8032--8044.

\bibitem{Mashiyama1978}
{\sc K.~T. Mashiyama and H.~Mori}, {\em Origin of the landau-lifshitz
  hydrodynamic fluctuations in nonequilibrium systems and a new method for
  reducing the boltzmann equation}, J. Stat. Phys., 18 (1978), pp.~385--407.

\bibitem{mohamed2013finite}
{\sc K.~Mohamed, M.~Seaid, and M.~Zahri}, {\em A finite volume method for
  scalar conservation laws with stochastic time--space dependent flux
  functions}, J. Comput. Appl. Math., 237 (2013), pp.~614--632.

\bibitem{Nold2017}
{\sc A.~Nold, B.~D. Goddard, P.~Yatsyshin, N.~Savva, and S.~Kalliadasis}, {\em
  Pseudospectral methods for density functional theory in bounded and unbounded
  domains}, J. Comput. Phys., 334 (2017), pp.~639 -- 664.

\bibitem{Nold2014}
{\sc A.~Nold, D.~N. Sibley, B.~D. Goddard, and S.~Kalliadasis}, {\em Fluid
  structure in the immediate vicinity of an equilibrium three-phase contact
  line and assessment of disjoining pressure models using density functional
  theory}, Phys. Fluids, 26 (2014), p.~072001.

\bibitem{oksendal2003stochastic}
{\sc B.~{\O}ksendal}, {\em Stochastic differential equations}, in Stochastic
  differential equations, Springer, 2003, pp.~65--84.

\bibitem{otto1996double}
{\sc F.~Otto}, {\em Double degenerate diffusion equations as steepest descent},
  Sonderforschungsbereich 256, 1996.

\bibitem{Parry2016}
{\sc A.~O. Parry, C.~Rasc{\'{o}}n, and R.~Evans}, {\em The local structure
  factor near an interface; beyond extended capillary-wave models}, J. Phys.
  Condens. Matter, 28 (2016), p.~244013.

\bibitem{Paulus2008}
{\sc M.~Paulus, C.~Gutt, and M.~Tolan}, {\em Static structure factor of
  capillary waves at large momentum transfer}, Phys. Rev. B, 78 (2008),
  p.~235419.

\bibitem{plimpton1995fast}
{\sc S.~Plimpton}, {\em Fast parallel algorithms for short-range molecular
  dynamics}, J. Comput. Phys., 117 (1995), pp.~1--19.

\bibitem{ReinaZimmer2015}
{\sc C.~Reina and J.~Zimmer}, {\em Entropy production and the geometry of
  dissipative evolution equations}, Phys. Rev. E, 92 (2015), p.~052117.

\bibitem{russo2019macroscopic}
{\sc A.~Russo, M.~A. Dur{\'a}n-Olivencia, S.~Kalliadasis, and R.~Hartkamp},
  {\em Macroscopic relations for microscopic properties at the interface
  between solid substrates and dense fluids}, The Journal of chemical physics,
  150 (2019), p.~214705.

\bibitem{sotiropoulos2008adaptive}
{\sc V.~Sotiropoulos and Y.~N. Kaznessis}, {\em An adaptive time step scheme
  for a system of stochastic differential equations with multiple
  multiplicative noise: chemical langevin equation, a proof of concept}, J.
  Chem. Phys., 128 (2008), p.~014103.

\bibitem{sweby1984high}
{\sc P.~K. Sweby}, {\em High resolution schemes using flux limiters for
  hyperbolic conservation laws}, SIAM J. Numer. Anal., 21 (1984),
  pp.~995--1011.

\bibitem{villani2003topics}
{\sc C.~Villani}, {\em Topics in optimal transportation}, no.~58, American
  Mathematical Soc., 2003.

\bibitem{voulgarakis2009bridging}
{\sc N.~K. Voulgarakis and J.-W. Chu}, {\em Bridging fluctuating hydrodynamics
  and molecular dynamics simulations of fluids}, J. Chem. Phys., 130 (2009),
  p.~04B605.

\bibitem{williams2008algorithm}
{\sc S.~A. Williams, J.~B. Bell, and A.~L. Garcia}, {\em Algorithm refinement
  for fluctuating hydrodynamics}, Multiscale Model. Sim., 6 (2008),
  pp.~1256--1280.

\bibitem{Yatsyshin2018}
{\sc P.~Yatsyshin, A.~O. Parry, C.~Rasc\'{o}n, and S.~Kalliadasis}, {\em
  Wetting of a plane with a narrow solvophobic stripe}, Mol. Phys., 116 (2018),
  pp.~1990--1997.

\bibitem{Yatsyshin2015}
{\sc P.~Yatsyshin, N.~Savva, and S.~Kalliadasis}, {\em Density functional study
  of condensation in capped capillaries}, J. Phys. Condens. Matter, 27 (2015),
  p.~275104.

\bibitem{Yatsyshin2015_1}
{\sc P.~Yatsyshin, N.~Savva, and S.~Kalliadasis}, {\em Wetting of prototypical
  one- and two-dimensional systems: Thermodynamics and density functional
  theory}, J. Chem. Phys., 142 (2015), p.~034708.

\end{thebibliography}

\end{document}